\title{On the complexity of estimating ground state entanglement and free energy} 
\author{Sevag Gharibian\footnote{Department of Computer Science and Institute for Photonic Quantum Systems (PhoQS), Paderborn University, Germany. Email: \{sevag.gharibian, jonas.kamminga\}@upb.de.} \and Jonas Kamminga\footnotemark[1]
}
\date{}
\begin{document}
\maketitle
\begin{abstract}
    Understanding the entanglement structure of local Hamiltonian ground spaces is a physically motivated problem, with applications ranging from tensor network design to quantum error-correcting codes. To this end, we study the complexity of estimating ground state entanglement, and more generally entropy estimation for low energy states and Gibbs states. We find, in particular, that the classes $\qqQAM$ [Kobayashi, le Gall, Nishimura, SICOMP 2019] (a quantum analogue of public-coin AM) and $\QMAt$ (QMA with unentangled proofs) play a crucial role for such problems, showing: (1) Detecting a ground state with high entropy on a subregister is $\qqQAM$-complete, (2) computing an additive error approximation to the Helmholtz free energy (equivalently, a multiplicative error approximation to the partition function) is in $\qqQAM$, (3) detecting  a low-entanglement ground state is $\QMAt$-hard, and (4) detecting low energy states which are close to  product states can range from QMA-complete to $\QMAt$-complete. Our results make progress on an open question of [Bravyi, Chowdhury, Gosset and Wocjan, Nature Physics 2022] on free energy, and yield the first $\QMAt$-complete Hamiltonian problem using \emph{local} Hamiltonians (\emph{cf.} the \emph{sparse} $\QMAt$-complete Hamiltonian problem of [Chailloux, Sattath, CCC 2012]).
\end{abstract}

\section{Introduction}

Estimating low energy properties of $k$-local Hamiltonian systems is a central challenge of quantum many-body physics. 
Here, a $k$-local Hamiltonian is an $n$-qubit Hermitian operator $H$ specified succinctly via $H=\sum_i H_i$, where each $H_i$ acts non-trivially on only $O(1)$ qubits. 
Such operators generalize classical Boolean $k$-local formulae, such as in $k$-SAT, and like their classical Boolean counterparts, $k$-local Hamiltonians are a cornerstone of quantum complexity theory~\cite{osborneHamiltonianComplexity2012,gharibianQuantumHamiltonianComplexity2015}. Briefly, low energy properties of $k$-local Hamiltonians are analogous to properties of the solution space to $k$-local Boolean formulae. 

The ``canonical'' low-energy property of $k$-local Hamiltonians is their \emph{ground state energy}, whose estimation is complete for Quantum Merlin Arthur\footnote{QMA is the canonical quantum generalization of NP~\cite{aharonovQuantumNPSurvey2002,gharibianGuestColumn72024}.} (QMA)~\cite{kitaevClassicalQuantumComputation2002,kempejulia3localHamiltonianQMAcomplete2003,kempeComplexityLocalHamiltonian2006}, and which generalizes the maximum number of satisfiable clauses of a Boolean formula.
Formally, the ground state energy of $H$ is its smallest eigenvalue $\lmin(H)$, which corresponds to the energy level the  system relaxes into when cooled to low temperature; this quantity plays an important role, e.g., in quantum chemistry in estimating reaction rates.
However, what is \emph{also} often important (e.g. for quantum error-correcting codes) are properties of the corresponding \emph{ground space} itself, i.e. eigenvectors/ground states $\ket{\psi}$ such that $H\ket{\psi}=\lmin(H)\ket{\psi}$. Ground states encode the state the system can take on at its ground energy.

To this end, numerous physically motivated low energy properties beyond ground state energy have been complexity theoretically characterized over the years, including ground state degeneracy~\cite{BFS11,y.shiNoteQuantumCounting}, simulation of local observables~\cite{ambainisPhysicalProblemsThat2014,gharibianComplexitySimulatingLocal2018,gharibianOracleComplexityClasses2020,gharibianPolynomiallyManyQueries2022,watsonComplexityTranslationallyInvariant2023,}, spectral gaps~\cite{ambainisPhysicalProblemsThat2014,cubittUndecidabilitySpectralGap2015,gharibianComplexitySimulatingLocal2018}, energy barriers~\cite{gharibianGroundStateConnectivity2018,gossetQCMAHardnessGround2017,gharibianQuantumSpaceGround2023,gharibianHardnessApproximationGround2024}, phase transitions~\cite{watsonComplexityApproximatingCritical2021,purcellChaitinPhaseTransition2024}, and ``universal'' Hamiltonians whose low energy space can reproduce the physics of other many-body systems~\cite{bravyiComplexityQuantumIsing2017,CMP18,kohlerTranslationallyInvariantUniversal2022}. Until recently, however, surprisingly absent from this list was the question: 
\begin{quote}
    \emph{What is the complexity of deciding whether $H$ has an entangled ground state?} 
\end{quote}
In this work, we will show that this question is more nuanced than one might expect, and draw connections to various complexity classes and related important concepts such as the free energy of Gibbs states.

\paragraph{Ground state entanglement: Prior work.} We begin by discussing three recent works which have studied the complexity of detecting ground state entanglement. The first two~\cite{gheorghiuEstimatingEntropyShallow2024,boulandPublicKeyPseudoentanglementHardness2024} achieve hardness for the Learning With Errors (LWE) problem~\cite{regevLatticesLearningErrors2009}, and the third~\cite{gharibianHardnessApproximationGround2024} for Quantum Classical Merlin Arthur (QCMA), i.e. QMA but with a classical proof. In more detail, Gheorghiu and Hoban~\cite{gheorghiuEstimatingEntropyShallow2024} first showed that given two local Hamiltonians and a pre-specified cut $A$ versus $B$ among qubits, it is LWE-hard to decide: Among all ground states $\ket{\psi_1}$ of $H_1$ and $\ket{\psi_2}$ of $H_2$ of minimal entanglement entropy, which of $\ket{\psi_1}$ or $\ket{\psi_2}$ has larger entanglement entropy\footnote{\emph{Entanglement entropy} is a standard entanglement measure, defined for a bipartite pure state $\psiab$ as $S(\rho_A)$, for reduced state $\rho_A:=\Tr_{B}(\rho_{AB})$ and $S(\rho):=-\Tr(\rho\log \rho$) the von Neumann entropy.} across $A$ versus $B$? Next, Bouland, Fefferman, Ghosh, Metger, Vazirani, Zhang and Zhou~\cite{boulandPublicKeyPseudoentanglementHardness2024,boulandHardnessLearningGround2024} showed that deciding whether the ground states of geometrically constrained $H$ have volume law or area law entanglement across a given set of cuts is LWE-hard. Note that the input model in~\cite{boulandPublicKeyPseudoentanglementHardness2024} is cryptographic in flavor, in that $H$ is drawn \emph{randomly} from one of two computationally indistinguishable families. Finally, in the standard complexity theoretic input model of a single $H$ being given as input, Gharibian and Hecht~\cite{gharibianHardnessApproximationGround2024}  showed: Given $H$, cut $A$ versus $B$, and 
thresholds $\eta_1<\eta_2$, it is QCMA-hard to distinguish whether $H$ has a ground state with entanglement entropy at most $\eta_1$ across the $A$ versus $B$ cut, or whether all ground states have entanglement entropy at least $\eta_2$ across this cut, \emph{even if} $\eta_2/\eta_1$ is large. Thus, entanglement entropy estimation is even QCMA-hard \emph{to approximate}.

\paragraph{Our starting point: Entropy estimation.} A challenge in studying ground state entanglement is that given a physical copy of a bipartite state $\psiab$, estimating entropy $S(\psi_A)$ is not known to be efficiently computable quantumly. This is why the careful reader may notice that, e.g., \cite{gharibianHardnessApproximationGround2024} does not show \emph{containment} in any complexity class, but only QCMA-hardness. 
Indeed, it is generally believed entropy estimation is \emph{not} in BQP. If $\psiab$ were specified via a poly-size quantum preparation circuit and given threshold $t$, deciding whether $S(\psi_A)$ is at least $t+1/2$ or at most $t-1/2$ is complete for Non-Interactive Quantum Statistical Zero Knowledge ($\NIQSZK$)~\cite{CCKV08}!
Note that while $\NIQSZK \subseteq \QSZK \subseteq \QIP(2)\subseteq\PSPACE$, for $\QIP(2)$ the class of decision problems with $2$-message\footnote{While $\QIP(3)=\QIP=\PSPACE$~\cite{kitaevParallelizationAmplificationExponential2000,jainQIPPSPACE2011}, $\QIP(2)$ remains a poorly understood class.} quantum interactive proofs, the relationship between $\QMA$ and $\NIQSZK$ is unclear. In fact, there is an oracle separation relative to which $\NIQSZK$ is not contained in $\PP$\cite{BCHTV19}.\footnote{In fact, \cite{BCHTV19} proves a stronger statement: there is an oracle $O$ such that $\NISZK^O \not\subseteq \UPP^O$.} 

In the case of ground state entanglement, we face the additional challenge that we do not even have a poly-size circuit preparing $\psiab$ (this would imply $\QMA=\QCMA$) --- this breaks even the QSZK containment proof of~\cite{BST08} for entropy estimation, raising various questions: \emph{What is the best upper bound on ground state entanglement estimation? Does the complexity change if we are interested in high or low entanglement ground states? And is there a connection to $\QMAt$, which is $\QMA$ with unentangled proofs?} 

\paragraph{Beyond the low energy setting: Free energy of Gibbs states.} Finally, our study will also take us beyond the low energy setting to \emph{high temperature} Gibbs states, where entropy estimation again plays an important role. Specifically, a \emph{Gibbs state} for a given Hamiltonian $H$ is the mixed state $\rhog=e^{-\bt H}/\calZ$, for normalization $\calZ:=\Tr(e^{-\bt H})$ denoting the \emph{quantum partition function}. Roughly, at temperature $T=1/\bt$, the quantum system thermalizes to $\rhog$, where the ground state is recovered by setting $T=0$. The relevant quantity for us is the \emph{Helmholtz free energy}, which plays an important role (e.g.) in mapping out phase diagrams, and which generalizes ground state energy to temperatures $T>0$. Specifically, the free energy of $H$ at inverse temperature $\bt$ is $\calF=\Tr(\rhog H) - (1/\bt)S(\rhog)$, where note the entropy $S(\rhog)$. 

Bravyi, Chowdhury, Gosset and Wocjan~\cite{bravyiQuantumHamiltonianComplexity2022} studied the natural question of additive error approximations to $\calF$ using an equivalent definition,
\begin{equation}
    \calF=-(1/\bt)\log \calZ.
\end{equation} 
Thus, an additive approximation of $\calF$ is equivalent to a relative approximation of $\calZ$, dubbed the Quantum Partition Function (QPF) problem in~\cite{bravyiQuantumHamiltonianComplexity2022}. Classically, although \emph{exactly} computing the partition function is $\#\textup{P}$-hard, relative approximation  is in $\BPP^{\NP}$ via Stockmeyer counting. In contrast, in the quantum case, it is only known that the problem is contained in $\p^{\#\p}$ \cite{bravyiQuantumHamiltonianComplexity2022,BFS11}, at least QMA-hard due to the special case $T=0$, and if $\bt$ is complex, then QPF is $\#\textup{P}$-hard within constant relative error (in fact, even for classical $H$~\cite{goldbergComplexityApproximatingComplexvalued2017,fujiiCommutingQuantumCircuits2017,galanisComplexityApproximatingComplexValued2020}). In this direction, \cite{bravyiQuantumHamiltonianComplexity2022} showed that QPF is equivalent under poly-time reductions to various problems, such as the number of eigenvalues of a $k$-local Hamiltonian in a given energy interval, but the complexity of these equivalent problems is also unfortunately unknown. This brings us to an open question\footnote{For clarity, there is a long history of study in relative approximations to $\calZ$, e.g. the landmark work of Jerrum and Sinclair~\cite{jerrumPolynomialTimeApproximationAlgorithms1993}. We cite \cite{bravyiQuantumHamiltonianComplexity2022} here, as it is the first paper to our knowledge to formally study the computational complexity of the general version of the problem, i.e. for all quantum Hamiltonians. However, special classes of quantum Hamiltonians have been considered before, e.g.~\cite{bravyiMonteCarloSimulation2015, bravyiPolynomialTimeClassicalSimulation2017, harrowClassicalAlgorithmsCorrelation2020,mannEfficientAlgorithmsApproximating2021}} of~\cite{bravyiQuantumHamiltonianComplexity2022}, which we also focus on here: \emph{What is the complexity of QPF, i.e. additive approximations to the free energy?}

\paragraph{Our results.} We study three computational problems involving low energy entanglement, as well as a fourth problem, free energy estimation. These are grouped below via the relevant complexity classes as follows: First, high entanglement low energy states ($\HELES$, \Cref{def:heles}) and free energy estimation ($\FEA$, \Cref{def:FEA}) are discussed under $\qqQAM$, as both involve entropy estimation of states for which no succinct preparation circuit is known, a task suited to $\qqQAM$. Next, low entanglement low energy states ($\LELES$, \Cref{def:leles}) and approximately product ground states ($\LEAPS$, \Cref{def:leaps}) are grouped under $\QMAt$. The distinction between $\LELES$ and $\LEAPS$ is the measure of entanglement employed --- entanglement entropy versus trace distance to a product state, respectively. Together, the aim is to highlight how seemingly minor changes to the definition of the problem being studied can drastically affect the relevant computational complexity. 

\paragraph{1. High entanglement low energy states and free energy: $\qqQAM$.} For our first pair of problems, an appropriate complexity class turns out to be an arguably under-the-radar class of Kobayashi, le Gall, and Nishimura~\cite{KGN19}, $\qqQAM$. Roughly, $\qqQAM$ refers to promise problems decidable by two-turn ``quantum public-coin'' interactive proof systems, in which the first message from verifier to prover consists only of polynomially many halves of $2$-qubit Bell states, i.e. ``quantum coin flips'' (\Cref{def:qqQAM}). An alternate, perhaps more natural viewpoint we observe here is that $\qqQAM$ can essentially be viewed as QMA, except instead of a pure state proof $\ket{\psi}$, the prover sends the \emph{normalized Choi matrix\footnote{The Choi matrix of a superoperator $\Phi:\calX\rightarrow\calY$ is defined as $J(\Phi)=\sum_{ij}\ketbra{i}{j}\otimes \Phi(\ketbra{i}{j})$. In other words, $\Phi$ is applied to half of an unnormalized high-dimensional Bell state. $\Phi$ is completely positive iff $J(\Phi)\succeq 0$, and $\Phi$ is trace-preserving iff $\Tr_\calY(J(\Phi))=I_\calX$.}} $J(\Phi)$ of some quantum channel as a proof\footnote{More accurately, after the prover's message, the verifier has proof $\rho_{ABC}$, so that $\rho_{AB}$ is the Choi state of some channel. 
}. Equivalently, the proof is a mixed state $\rho_{AB}$ with $\rho_A = \tilI$, for $\tilI$ the normalized identity.\\

\vspace{-1mm}
\noindent\emph{1.1 High entanglement low energy states.} With $\qqQAM$ in hand, the first problem we study is:
\begin{restatable}[High Entropy Low Energy State ($\HELES$)]{definition}{defHELES}\label{def:heles}
    Given is a $k$-local Hamiltonian $H_{AB} = \sum_{j = 1}^m H_j$ on registers $A$ and $B$ of $n_A + n_B = n$ qubits, with local terms satisfying $\|H_j\|_\infty \le 1$ and $H_j \succeq 0$ for all $j$, energy threshold parameters $\El, b$ such that $b - \El = 1/\poly(n)$ and entropy threshold parameters $s,t$ such that $s - t = 1/\poly(n)$. Decide:
    \begin{itemize}
        \item (YES case) If $\exists\; \psiab$ such that $\braketb{\psi}{H} \le \El$ and $S(\psi_A) \ge s$.
        \item (NO case) If $\forall\; \rho_{AB}$ either $\Tr(H \rho_{AB}) \ge b$ or $S(\rho_A) \le t$.
    \end{itemize}
\end{restatable}
\noindent Above and in this work, we use shorthand $\psi_A$ to denote $\Tr_B(\ketbra{\psi}{\psi}_{AB})$. In words, HELES asks: \emph{Does $H$ have a low energy state with high entropy on the $A$ register?} (We warn the reader already that whether the YES case corresponds to \emph{high} or \emph{low} entropy will change the complexity analysis completely, to be discussed shortly.) Our first result is as follows.
\begin{restatable}[Informal]{theorem}{thmHeles}\label{thm:heles}
    $\HELES$ is $\qqQAM$-complete for $k \ge 5$. Hardness remains even when restricting to physically motivated Hamiltonians such as the 2D Heisenberg model (\cref{cor:physicalHamiltonians}).
\end{restatable}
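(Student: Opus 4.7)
The plan is to establish \Cref{thm:heles} in two directions, containment and hardness, each leveraging the Choi-state characterization of $\qqQAM$ proofs highlighted in the introduction.

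\emph{Containment ($\HELES \in \qqQAM$).} I would use the Choi-state view: a $\qqQAM$ prover's message is effectively a bipartite state $\rho_{VP}$ with $\rho_V = \tilI_V$. I would identify $V$ with the cut register $A$, so that $\rho_A = \tilI_A$ is enforced automatically and $S(\rho_A) = n_A$ is maximal ``for free''. The verifier probabilistically runs one of two tests: (i) an energy test that estimates $\Tr(H_{AB}\rho)$ on the prover's returned register, accepting iff this is at most $\alpha$; and (ii) a purity test, where by exchanging two blocks of Bell pairs the verifier obtains two registers and applies a SWAP-style measurement to lower bound $\Tr(\rho_{AB}^2)$, i.e.\ the R\'enyi-2 entropy. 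Since $S_2 \le S$ and $\rho_A = \tilI_A$ is guaranteed, passing the purity test forces $\rho$ to be close to a pure $\ket{\psi}_{AB}$ with $\psi_A \approx \tilI_A$, which gives $S(\psi_A) \approx n_A \ge s$. To neutralise adversarial correlations between the two ``copies'' sent by the prover, I would add a permutation-invariance test and invoke a quantum de~Finetti argument, reducing to genuine i.i.d.\ copies.

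\emph{Hardness ($\qqQAM$-hardness).} Starting from an arbitrary $L \in \qqQAM$ with verifier $V$ of completeness/soundness gap $1/\poly$, I would apply the Feynman--Kitaev circuit-to-Hamiltonian construction to $V$ to get a $5$-local $H_V$, and add a $2$-local penalty $H_{\mathrm{EPR}}$ that energetically favours $\ket{\Omega}^{\otimes n_A}$ across matched qubits of $A$ and an ancillary register $A'$ introduced into the instance. The joint low-energy space of $H = H_V + H_{\mathrm{EPR}}$ then consists of history states of $V$'s execution on proofs whose $A$-marginal is pinned to $\tilI_A$ by the Bell-pair structure, i.e.\ exactly the $\qqQAM$-valid proofs. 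In the YES case this yields a state of energy $\le \alpha$ with $S(\psi_A) = n_A$; in the NO case, no Bell-pair-form proof is accepted, so either $H_V$ or $H_{\mathrm{EPR}}$ incurs a polynomial energy penalty, the latter forcing $\rho_A$ away from $\tilI_A$ and hence driving the pure-state entropy below $t$.

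The main technical obstacle is the NO-case entropy analysis: a ``mixed strategy'' which slightly violates $H_{\mathrm{EPR}}$ to buy a small $H_V$ improvement might keep $\rho_A$ close enough to $\tilI_A$ to sustain high entropy. To rule this out I would argue by continuity combined with soundness amplification: if $\rho_A$ is within $\epsilon$ of $\tilI_A$ in trace distance, then $\rho$ is an $\epsilon$-approximate $\qqQAM$ proof and, after suitable amplification of $V$, is still rejected, pushing $H_V$ energy up by $\Omega(1/\poly)$. Carefully balancing the polynomial weights of $H_V$ and $H_{\mathrm{EPR}}$ then yields the required $1/\poly$ promise gap on both energy and entropy simultaneously. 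The analogous difficulty on the containment side is quantitatively matching the R\'enyi-2 purity estimate to the $s - t = 1/\poly$ promise gap; I expect this to follow from a standard smoothing/amplification argument on the purity test.

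For \cref{cor:physicalHamiltonians}, the reduction above would be composed with known universal-Hamiltonian simulations (e.g.\ $2$D Heisenberg or XY perturbative gadgets) that preserve ground-space structure across the fixed bipartition $A\mid B$, transferring the hardness to physically motivated models without collapsing the entanglement-entropy promise.
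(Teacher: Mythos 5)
Both directions of your proposal have genuine gaps, and in both cases the gap is the same: you implicitly force the entropy to be \emph{maximal} ($S(\rho_A)=n_A$), whereas $\HELES$ is parameterized by an arbitrary threshold $s$ with only $s-t=1/\poly(n)$. On the containment side, identifying the $\qqQAM$ register with the cut register $A$ means the verifier only ever accepts states with $\rho_A=\tilI_A$, so your protocol decides the question ``is there a low-energy state with $S(\psi_A)\approx n_A$?'' A YES instance of $\HELES$ may have its only low-energy high-entropy states at entropy $s\ll n_A$; the honest prover then has nothing to send that passes both the energy test and your purity test, and completeness fails. The paper's containment proof exists precisely to solve this: the EPR halves are placed on an auxiliary register $E$, and the verifier applies the \emph{inverse Stinespring unitary of a quantum extractor} tuned to threshold $k\approx qs$, so that ``$\rho_{\calA E}$ purifies $\tilI$ and the $E$ ancillas measure to $0$'' certifies min-entropy $\ge k$ of the residual $\calA$ state; the flattening lemma (on $q$ copies) and subadditivity then convert this to a von Neumann entropy bound at the \emph{tunable} threshold $s$. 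Your R\'enyi-2/SWAP purity test and de~Finetti step do not substitute for this (and de~Finetti reductions carry dimension-dependent losses you have not accounted for).

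On the hardness side, the term $H_{\mathrm{EPR}}$ pinning $AA'$ to $\ket{\Omega}^{\otimes n_A}$ at time $0$ forces the initial state to be $\ket{\Omega}^{\otimes n_A}_{AA'}\otimes\ket{\xi}_{BP}$, i.e.\ the proof register $B$ is in tensor product with $A$. But a general $\qqQAM$ proof is \emph{any} $\rho_{AB}$ with $\rho_A=\tilI_A$, including states where $B$ is entangled with $A$ (e.g.\ the prover returning the EPR halves untouched); restricting to $\tilI_A\otimes\sigma_B$ does not capture the class, so completeness of your reduction fails. The only way to recover all valid proofs is to include the prover's unitary in the circuit, but that unitary is existentially quantified and cannot be hard-wired into a Feynman--Kitaev Hamiltonian. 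The paper avoids this entirely by reducing from the $\qqQAM$-complete problem $\MaxOutQEA$ (``does some input $\rho$ make $S(\Phi(\rho))$ large?'') via a channel-to-Hamiltonian construction with post-idling: the existentially quantified object is the channel \emph{input}, which is exactly what Kitaev's construction quantifies over, and the output entropy of $\Phi$ becomes (via Fannes) the entanglement entropy of near-ground states across the output cut. Separately, even granting your construction, the entanglement entropy of a history state across the $A$ cut is not $n_A$, since the verifier's circuit acts on $A$ at later time steps; one must either cut along an untouched purifying register or idle, which your write-up does not address.
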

\noindent Thus, while entropy estimation is NIQSZK-complete when $\ket{\psi}_{AB}$ is specified via quantum circuit~\cite{kobayashiNoninteractiveQuantumPerfect2003,BST08}, the complexity jumps to $\qqQAM$ when $\ket{\psi}_{AB}$ is specified implicitly via a local Hamiltonian's low energy space. We say ``jumps'' because $\NIQSZK\subseteq\qqQAM$.

Note that as entanglement entropy is only an entanglement measure for pure states, $\HELES$ does not quite capture the question \emph{``Does $H$ have a low energy state with high entanglement entropy across the $A$ vs $B$ cut?''} $\HELES$ YES cases do indeed have such a low energy high entanglement entropy state, but in the NO case $\HELES$ has a stronger promise: not only does $H$ not have any low energy pure states with high entropy on the $A$ register (i.e., states with high entanglement entropy), but $H$ does not have any \emph{mixed} states with high entropy on the $A$ register either. Mixed low-energy states such as $\rho_A \otimes \rho_B$ with $S(\rho_A)$ high are thus not allowed either, even though these states are not entangled at all. Nevertheless, $\HELES$ serves a lower bound on the entanglement entropy question.
\\

\vspace{-1mm}
\noindent\emph{1.2 Free energy.} We next study the complexity of additive error approximations to free energy (equivalent to QPF of \cite{bravyiQuantumHamiltonianComplexity2022}).

\begin{restatable}[Free Energy Approximation ($\FEA$)]{definition}{defFEA}\label{def:FEA}
    Given is an $n$-qubit, $k$-local Hamiltonian $H = \sum_{j = 1}^m H_j$ with $\|H_j\| \le 1$ and $0 \preceq H_j$ for all $j$, an inverse temperature $\bt$, and two parameters $a, b$ satisfying $b - a = 1/\poly(n)$. Decide:
    \begin{itemize}
        \item (YES case) The free energy of $H$ at inverse temperature $\bt$ is less than $a$. That is $\calF(H) \le a$.
        \item (NO case) The free energy of $H$ at inverse temperature $\bt$ is larger than $b$. That is, $\calF(H) \ge b$. 
    \end{itemize}
\end{restatable}

\noindent We show:
\begin{restatable}{theorem}{thmFEA}\label{thm:freeenergy}
    For $k = O(1)$ and $\bt \ge \frac{1}{\poly(n)}$, $\FEA$ is contained in $\qqQAM$.
\end{restatable}
\noindent Thus, we obtain a bound on the power of additive approximations to the free energy, or equivalently, relative error approximations to a quantum partition function. It is unclear however, how $\qqQAM$ relates to the best previously known bound $\p^{\#\p}$.

Unlike the case of $\HELES$, we only manage to show containment in $\qqQAM$ for $\FEA$. This might seem surprising as the problems seem very similar: both combine low energy and high entropy. The reason for this difference is that our techniques for embedding (the Stinespring representation of) a $\qqQAM$-complete channel problem requires registers to be traced out, both for the Stinespring representation as well as to deal with the register added for the clock. In $\HELES$ this can be done by including those registers in the $B$ part, which is traced out before the entropy is considered. Contrarily, our definition of $\FEA$ does not allow such a thing, causing the difference. 

\paragraph{2. Low entanglement and approximately product ground states: $\QMAt$ and $\QMA$.} We tie the complexity of our next two problems to $\QMAt$~\cite{kobayashiQuantumMerlinArthurProof2003}, a class notorious for resisting anything but the trivial complexity bounds $\QMA\subseteq\QMAt\subseteq\NEXP$.\\

\vspace{-1mm}
\noindent\emph{2.1 Low entanglement low energy states.} Next, we consider the other side of the coin: low rather than high entropy/entanglement states. Our third problem is a seemingly innocuous variant of HELES, in which we seek a \emph{low} entanglement state (as opposed to a high entanglement one).

\begin{restatable}[Low Entropy Low Energy State ($\LELES$)]{definition}{defleles}\label{def:leles}
    Given is a $k$-local Hamiltonian $H$ on two registers $A$ and $B$ of $n_A + n_B = n$ qubits, energy thresholds $\El, b$ such that $b - \El = 1/\poly(n)$, and entropy thresholds $s,t$ such that $s - t \ge \frac{1}{\poly(n)}$. Decide:
    \begin{itemize}
        \item (YES case) If $\exists\; \psiab$ such that $\braketb{\psi}{H} \le \El$ and $S(\psi_A) \le t$.
        \item (NO case) If $\forall\; \rho_{AB}$ either $\Tr(H \rho) \ge b$ or $S(\rho_A) \ge s$.
    \end{itemize}
\end{restatable}
\noindent Note that $\LELES$ is \emph{not} the complement of $\HELES$, as we have only flipped the condition on entropy. Formally, LELES is equivalent to the ground state entanglement problem shown QCMA-hard to approximate in~\cite{gharibianHardnessApproximationGround2024}. Our initial aim was to show a non-trivial \emph{upper bound} on LELES, which proved difficult; the following theorem explains why.

\begin{restatable}[Informal]{theorem}{thmleles}\label{thm:leles}
    For $k \ge 5$, $\LELES$ is $\QMAt$-hard, even when $s - t = \Omega(n)$. Hardness remains even when restricting to physically motivated Hamiltonians such as the 2D Heisenberg model (\cref{cor:physicalHamiltonians}).
\end{restatable}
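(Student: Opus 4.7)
The plan is to reduce from \QMAt-hardness to \LELES via a Kitaev circuit-to-Hamiltonian construction applied to a BellQMA-form verifier, so that the entropy of the history state across the proof-register bipartition inherits the entanglement of the initial proof. I first take a \QMAt verifier $V$ on product proofs $\ket{\psi}_{A_p}\otimes\ket{\phi}_{B_p}$, apply Harrow--Montanaro amplification to push completeness above $1-2^{-n}$ and soundness (over all separable proofs) below $2^{-n}$, and convert to a BellQMA form $V=V_A\otimes V_B$ with local workspaces $W_A,W_B$. Kitaev's $5$-local clock construction then yields $H=\Hin+\Hprop+\Hout$ on $A_p\cup B_p\cup W_A\cup W_B\cup C$, and I choose the \LELES bipartition $A\coloneqq A_p\cup W_A$ versus $B\coloneqq B_p\cup W_B\cup C$, placing the entire clock $C$ on the $B$-side.

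For the YES direction, a product accepting proof yields the history state $\ket{\eta}=\tfrac{1}{\sqrt{T+1}}\sum_t(V_A^{(t)}\otimes V_B^{(t)})(\ket{\psi,0}_A\ket{\phi,0}_B)\otimes\ket{t}_C$ of energy $1/\poly(n)$. Because $V$ factors and the clock lives in $B$, partial-tracing orthogonalizes the time-slices and leaves $\eta_A=\tfrac{1}{T+1}\sum_t V_A^{(t)}\ket{\psi,0}\bra{\psi,0}V_A^{(t)\dagger}$, a rank-$(T+1)$ mixture of entropy at most $\log(T+1)$. For the NO direction, any state $\ket{\xi}$ of energy $\le\alpha$ is $1/\poly(n)$-close to $\ket{\eta(\theta)}$ for some possibly-entangled initial proof $\ket{\theta}_{A_pB_p}$; BellQMA soundness forces $\theta$ to lie trace-distance near $1$ from the separable set, and since $\eta(\theta)_A=\tfrac{1}{T+1}\sum_t V_A^{(t)}(\theta_{A_p}\otimes\ket{0}\bra{0}_{W_A})V_A^{(t)\dagger}$ is a mixture each of whose summands has spectrum equal to that of $\theta_{A_p}$, concavity of von Neumann entropy gives $S(\eta(\theta)_A)\ge S(\theta_{A_p})$, with Fannes continuity transferring the bound to $\xi_A$.

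The main obstacle is quantitative: generic \QMAt soundness forces only $\Omega(1)$ entanglement entropy on a cheating proof, whereas the statement demands $s-t=\Omega(n)$. Naive parallel tensoring of $n$ instances inflates both sides (YES blows up to $\Theta(n\log n)$), so the plan is instead to use a \QMAt protocol whose soundness scales intrinsically with proof size: for instance, a $\Theta(n)$-qubit-proof protocol (along Blier--Tapp or Aaronson--Beigi--Drucker lines) in which every cheating $\theta$ must be entangled across $\Omega(n)$ independent ``slots'' of the proof register, with a cross-slot consistency penalty (e.g.\ swap-test terms) and a de Finetti-style rounding forcing $S(\theta_{A_p})=\Omega(n)$, while the shared single-history clock keeps the YES history state rank $T+1$ (entropy $O(\log n)$) on the $A$-side. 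Finally, \cref{cor:physicalHamiltonians} follows by feeding the resulting $H$ through a universal-Hamiltonian simulator (Cubitt--Montanaro--Piddock) into models such as the 2D Heisenberg model, arranging the simulation ancillas coherently on each side of the $A$-vs-$B$ cut so the entanglement structure is preserved under the encoding.
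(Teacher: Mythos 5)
Your overall strategy---embed a verifier into a Hamiltonian via Kitaev's construction and read off entanglement across the proof-register cut---is in the right spirit, but it diverges from the paper's route and has two genuine gaps, one of which is fatal as stated.

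First, the step ``convert to a BellQMA form $V = V_A \otimes V_B$'' is not available. A general \QMAt verifier (in particular the Harrow--Montanaro amplified one, which uses the product/SWAP test) acts \emph{jointly} on both proof registers; it cannot be factored into local circuits without changing the class. Indeed, verifiers of the product form you describe define $\BellQMAt$, which Brand\~ao showed collapses to \QMA for a constant number of provers, so your reduction would only establish \QMA-hardness. And without the factorization, your YES-case analysis breaks: for a joint circuit the intermediate time-slices $V^{(t)}\cdots V^{(1)}\ket{\psi}\ket{\phi}$ can be arbitrarily entangled across the $A$:$B$ cut even when the initial and final states are product, so $\eta_A$ need not have low entropy. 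Second, as you yourself note, \QMAt soundness gives no quantitative lower bound on the entanglement of a cheating proof---only that it cannot be (close to) separable, which yields $S(\theta_{A_p}) = \Omega(1)$ at best, not $\Omega(n)$. Your proposed fix (a Blier--Tapp/ABD-style protocol with cross-slot consistency and de Finetti rounding forcing $S(\theta_{A_p}) = \Omega(n)$ on every cheating proof) is not an established result and would itself be a substantial theorem; it is not clear it is even true.

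The paper sidesteps both issues by measuring entanglement at the \emph{output} rather than the input. It reduces from the $\QMAt$-complete problem $\PPIO$, which remains complete with exponentially strong parameters $a = 2^{-\Omega(n)}$, $b = 2 - 2^{-\Omega(n)}$, and applies a channel-to-Hamiltonian construction (Kitaev's construction on the Stinespring unitary, with no output penalty and polynomially many idling steps) so that the ground space consists of states whose reduced output is close to an isometry output. In the YES case the output is near a pure product state, so the reduced state on one output factor is nearly pure ($t = O(1)$); in the NO case every output is $\bigl(2 - 2^{-\Omega(n)}\bigr)$-far from product, and a Schmidt-decomposition argument bounds the largest Schmidt coefficient by $\sqrt{1 - b^2/4} = 2^{-\Omega(n)}$, so the min-entropy (hence von Neumann entropy) is $\Omega(n)$. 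The exponential gap in $\PPIO$'s parameters is exactly the ``soundness scaling with proof size'' your plan is missing, and obtaining it on the output side avoids the impossible product-verifier step entirely.
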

\noindent Thus, upper bounding the complexity of LELES is at least as difficult as bounding the power of $\QMAt$, a longstanding open problem. Whether this implies finding \emph{low} entanglement ground states is easier (LELES) or harder than finding \emph{high} entanglement ground states (HELES) depends on what one believes regarding the power of $\QMA$ versus $\QMAt$. It is entirely plausible that $\QMAt = \QMA \subsetneq \qqQAM$ (i.e. LELES is easier than HELES), or that $\QMAt = \NEXP \supsetneq \qqQAM$ (i.e. LELES is much harder than HELES).

Proving that $\LELES$ is also contained in $\QMAt$ faces a roadblock: approximating entropy is hard when one does not have a preparation circuit. In general, estimating the entropy of a unknown state $\rho$ up to constant accuracy requires an exponential number of copies of $\rho$ (linear in the dimension) \cite[Theorem~I.4]{WZ25}. Even though $\QMAt$ can receive polynomially many copies of the proof, this is a priori not enough to approximate the entropy sufficiently well.\\

\vspace{-1mm}
\noindent\emph{2.2 Low energy approximate product states.} As entropy is a challenging quantity, and is a barrier to showing $\QMAt$ containment of $\LELES$, our fourth and final problem studies a different measure of entanglement better suited for showing $\QMAt$ containment: trace distance to a product state. 

\begin{restatable}[Low Energy Approximate Product State ($\LEAPS$)]{definition}{defleaps}\label{def:leaps}
    Given is an $n$ qubit $k$-local Hamiltonian $H = \sum_{i = 1}^{m} H_i$ with $\|H_i\|_\infty \le 1$ for all $i$, energy thresholds $\El, b$, and distance thresholds $\dy, \dn < 1$. Decide:
    \begin{itemize}
        \item (YES case) There exists a low-energy state that is $\dy$-close to product in trace distance: $\exists \psiab, \ket{\phi_A}_A, \ket{\phi_B}_B$ such that $\braketb{\psi}{H} \le \El$ and $\|\ketbrab{\psi}_{AB} - \ketbrab{\phi_A} \otimes \ketbrab{\phi_B}\|_1 \le \dy$.
        \item (NO case) All low-energy states are $\dn$-far from product: $\forall \psiab, \ket{\phi_A}_A, \ket{\phi_B}_B$, either $\braketb{\psi}{H} \ge b$ or $\|\ketbrab{\psi}_{AB} - \ketbrab{\phi_A} \otimes \ketbrab{\phi_B}\|_1 \ge \dn$.
    \end{itemize}
\end{restatable}

\noindent This problem is related\footnote{A further related work is \cite{kallaugherComplexityClassificationProduct2025} (see also \cite{piddockQuantumMaxCutNP2025}), which  also study the complexity of product state energy optimizations. These differ from the present work and from~\cite{CS12} in that in \cite{kallaugherComplexityClassificationProduct2025}, ``product state'' refers to tensor products of \emph{single} qubits, whereas here we have a single bipartition of all $n$ qubits. Thus, \cite{kallaugherComplexityClassificationProduct2025}'s complexity classification involves NP, which is not the case here and in~\cite{CS12}.} to the work of Chailloux and Sattath~\cite{CS12}, which showed an intriguing separation: On the one hand, estimating $\min_{\ket{\psi}\ket{\phi}}\Tr(H\ketbra{\psi}{\psi}\otimes \ketbra{\phi}{\phi})$ for \emph{local} Hamiltonians $H$ (denoted $\cfont{Separable Local Hamiltonian}$~\cite{CS12}) is \QMA-complete, despite the tensor product requirement $\ket{\psi}\ket{\phi}$, which might \emph{a priori} be associated with $\QMAt$. On the other hand, allowing non-local but \emph{sparse} $H$ (denoted $\cfont{Separable Sparse Hamiltonian}$~\cite{CS12}) yields $\QMAt$-completeness. While sparse Hamiltonians are well motivated (e.g. they can be efficiently simulated quantumly~\cite{aharonovAdiabaticQuantumComputation2004}), in terms of many-body systems one is often interested in \emph{local} Hamiltonians. We show:
\begin{restatable}{theorem}{thmleapsqma}\label{thm:leapsQMA}
    Let $D$ be an efficiently computable upper bound on $\|H\|_\infty$ such as the number of local terms $m$. Then $\LEAPS$ is $\QMA$-complete if $b - \El - \dy D \ge \frac{1}{\poly(n)}$ and $\dn > \dy$. Note that no (inverse polynomial) gap between $\dy$ and $\dn$ is required.
\end{restatable}
\noindent No gap between $\dy$ and $\dn$ is required. This might seem strange, but mirrors $\cfont{Separable Local Hamiltonian}$. There YES cases are those where a low-energy product state exists (i.e. distance $0$ from product), whereas NO cases have no low-energy product states. That is, in the NO case, all low energy states have non-zero distance from product. Alternatively, as energy varies continuously with distance, one can think of the distance gap as being subsumed by the energy gap.

However, for a different parameter regime we show:
\begin{restatable}{theorem}{thmleapsqmat}\label{thm:leapsQMA2}
    For all constant $k\ge 5$, $\LEAPS$ is $\QMAt$-complete if $b - \El = \frac{1}{\poly(n)}$, $\dn - \dy = \frac{1}{\poly(n)}$, $\dy = \frac{1}{\poly(n)}$ and $b = O(\dy^6)$. Hardness remains even when restricting to physically motivated Hamiltonians such as the 2D Heisenberg model (\cref{cor:physicalHamiltonians}).
\end{restatable}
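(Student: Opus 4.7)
I will establish $\QMAt$-completeness of $\LEAPS$ by separately showing $\QMAt$-containment and $\QMAt$-hardness.

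For containment, the natural protocol is: the two unentangled provers send $\ket{\phi_L}_A$ and $\ket{\phi_R}_B$, and the verifier performs standard local-Hamiltonian energy estimation on $\ket{\phi_L}\otimes\ket{\phi_R}$. In a YES instance, by assumption there exists a low-energy $\ket{\psi}$ within trace distance $a$ of some product state, so by $\lvert\Tr(H\rho)-\Tr(H\sigma)\rvert \le \|H\|_\infty\|\rho-\sigma\|_1$ the product state has energy at most $\alpha + a\|H\|_\infty$. In a NO instance, substituting $\ket{\psi}=\ket{\phi_L}\ket{\phi_R}$ in the definition (trace distance $0 < b$) forces any product state to have energy at least $\beta$. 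Choosing $a = 1/\poly(n)$ sufficiently small relative to $\|H\|_\infty = \poly(n)$ preserves a $1/\poly(n)$ promise gap, yielding containment.

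For hardness, I plan to reduce from a canonical $\QMAt$-complete verification problem via a modified Feynman--Kitaev construction. Given a $\QMAt$ verifier $V$ with witness registers $W_L$ and $W_R$, I would first replace $V$ by a verifier $V'$ that begins by SWAPping the contents of $W_L$ and $W_R$ into fresh ancilla registers and then runs $V$'s circuit on those ancilla, never touching $W_L,W_R$ again. Applying Kitaev's clock construction to $V'$ with $T = \poly(n)$ gates yields a local $H$, with the $\LEAPS$ cut placed across $W_L$ versus everything else. For an accepting product witness, the resulting history state decomposes as $\tfrac{1}{\sqrt{T+1}}\ket{0}_C\ket{\psi_L\psi_R}_{W_LW_R}\ket{0} + \tfrac{1}{\sqrt{T+1}}\sum_{t\ge 1}\ket{t}_C\ket{0}_{W_L}\otimes\ket{\phi_t}$, so only the $t=0$ component fails to be product across the cut, yielding an $O(1/\sqrt{T})$-product approximation while standard Kitaev completeness gives energy $O(1/T)$. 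Calibrating $T$ so that $a \sim 1/\sqrt{T}$, the standard Kitaev soundness gap $\Omega(1/T^3)$ then matches the prescribed $\beta = O(a^6)$.

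The central difficulty, and the step I expect to dominate the proof, is establishing the NO-case soundness for $\LEAPS$: it is not sufficient that $V$ reject all product witnesses, since even an entangled accepting witness yields, via the SWAP step, a history state that is still $O(1/\sqrt{T})$-close to product across the cut. My plan to resolve this is a partial converse of the Kitaev construction along the cut: show that any state of $H$ with energy below $\beta$ and trace distance below $b$ to some $\ket{\phi_L}\ket{\phi_R}$ must, when restricted to the $t=0$ slice of its clock decomposition, encode an approximately product witness, with the approximation quality controlled by the Kitaev gap and standard fidelity-to-trace-distance inequalities. Combined with soundness amplification of $V$ on product inputs, this contradicts the NO-case assumption. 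The accumulated errors --- the $\Omega(1/T^3)$ Kitaev gap and the $O(1/\sqrt{T})$ product approximation --- are precisely what produce the $\beta = O(a^6)$ dependence. To obtain \cref{cor:physicalHamiltonians}, I would finally compose with standard Hamiltonian-universality gadgets while verifying that the bipartite cut structure survives.
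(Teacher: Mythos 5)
Both halves of your proposal have genuine gaps in the parameter regime of this theorem.

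\textbf{Containment.} Your protocol (provers send $\ket{\phi_L},\ket{\phi_R}$; verifier measures their energy) is exactly the $\QMA$ protocol of the \emph{other} theorem, and it fails here. In a YES instance you can only guarantee the product state has energy $\le \alpha + a\|H\|_\infty$, while in a NO instance product states have energy $\ge \beta$. You need $\beta - \alpha > a\|H\|_\infty$ for a gap, but the present regime fixes $\beta = O(a^6) \ll a \le a\|H\|_\infty$ (recall $\|H\|_\infty \ge 1$ and $a<1$), so the YES upper bound exceeds the NO lower bound and there is nothing to distinguish. You cannot ``choose $a$ sufficiently small'': $a$ and $\beta$ are part of the promise and are locked together. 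The point of this regime is precisely that the energy and distance checks \emph{cannot} be merged into a single check on a product state; the paper instead uses three unentangled proofs $\ket{\psi_{LR}},\ket{\phi_L},\ket{\phi_R}$, checks the energy of $\ket{\psi_{LR}}$, and checks its closeness to $\ket{\phi_L}\ket{\phi_R}$ via a SWAP test (which is only sound because the proofs are promised unentangled --- this is where $\QMAt$ enters).

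\textbf{Hardness.} The difficulty you flag is not merely the hard step --- it is fatal to your construction, and the ``partial converse'' you hope for is false. With the cut at $W_L$ versus everything else, an \emph{entangled} accepting witness in a NO instance of the $\QMAt$ verifier still produces a zero-penalty history state that is $O(1/\sqrt{T})$-close to product across your cut, because after the initial SWAP the register $W_L$ is disentangled at every time step $t\ge 1$ and the offending $t=0$ slice carries weight only $1/(T+1)$. Indeed, even for a \emph{product} witness the history state is already $\Theta(1/\sqrt{T})$-far from product across this cut (since $W_L$ holds $\ket{\psi_L}$ at $t=0$ and $\ket{0}$ afterwards), so the product/entangled signal is buried at order $1/T$ beneath a $\Theta(1/\sqrt{T})$ background, and no choice of $a<b$ separates the cases. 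The global near-productness of a history state simply does not constrain its $t=0$ slice. The paper avoids this by reducing from a different $\QMAt$-complete problem, $\PPIO$ (is some \emph{output} of a given isometry close to product?), applying a channel-to-Hamiltonian construction with \emph{post-idling} for $L \gg T$ identity steps so that the output register carries weight $(L+1)/(T+L+1)\approx 1$ in the history state, and placing the cut on the output system. There the productness signal lives in the dominant part of the state rather than in a $1/(T+1)$-weight slice, and soundness follows from monotonicity of trace distance under partial trace rather than from any converse of the clock construction.
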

\noindent Note that $\|H\|_\infty \ge 1$. Therefore, our $\QMA$-containment result requires $b > \dy$, whereas we only show $\QMAt$-hardness when $b = O(\dy^6)$. In particular, as $\dy < 1$, there is no overlap between the regimes where we show $\QMA$ and $\QMAt$-completeness.

In sum, we obtain a natural \emph{local} Hamiltonian problem whose complexity ``leaps'' from $\QMA$ to $\QMAt$, depending on the choice of parameters. In the words of an anonymous referee (whom we thank), this ``restores [one's] faith'' in the standard intuition that locality of a Hamiltonian (meaning $k$-local versus non-local but sparse) should not affect the quantum complexity of low energy problems. Finally, LEAPS is additionally appealing for our context as it studies not just exact product states (as in~\cite{CS12}), but the more robust notion of \emph{close to product} states, which corresponds to low entropy. 

A summary of our results is given in \cref{Table:results}.

\begin{center}
    \begin{table}
    \small
    \SetTblrInner{rowsep=6pt}
    \begin{tblr}{colspec={X[0.9,c]X[3,l]X[3,l]X[1.7,l]}}
        Problem & YES condition & NO condition & Result \\
        \hline
        $\HELES$ & $\exists$ a low energy state with \emph{high} entropy on subsystem $A$:\newline $\exists \ket{\psi}_{AB}$ s.t. $\braketb{\psi}{H} \le \El$ and $S(\psi_A) \ge s$ & No state has both low energy and \emph{high} entropy on subsystem $A$:\newline $\forall \rho_{AB}$, $\Tr(H\rho_{AB}) \ge b$ or $S(\rho_A) \le t$ & $\qqQAM$-complete (\ref{Lem:HELEScontainment},\ref{Lem:HELEShard})\\ 
        \hline
        $\LELES$  & $\exists$ a low energy state with \emph{low} entropy on subsystem $A$:\newline $\exists \ket{\psi}_{AB}$ s.t. $\braketb{\psi}{H} \le \El$ and $S(\psi_A) \le t$ & No state has both low energy and \emph{low} entropy on subsystem $A$:\newline $\forall \rho_{AB}$, $\Tr(H\rho_{AB}) \ge b$ or $S(\rho_A) \ge s$  & $\QMAt$-hard (\ref{thm:leles}) \\
        \hline
        $\FEA$ & The free energy of $H$ is low: \newline $\calF(H) \le a$ & The free energy of $H$ is high: \newline $\calF(H) \ge b$ & Contained in $\qqQAM$ (\ref{thm:freeenergy}) \\
        \hline
        $\LEAPS$ & $\exists$ a low energy state that is $\dy$ close to product:\newline $\exists \ket{\psi}_{AB}, \ket{\phi_L},\ket{\phi_R}$ s.t. $\braketb{\psi}{H} \le \El$ and $\|\psi -  \phi_L \otimes \phi_R\|_1 \le \dy$ & All low energy states are $\dn$ far from product:\newline $\forall \ket{\psi}_{AB}, \ket{\phi_L},\ket{\phi_R}$ either $\braketb{\psi}{H} \ge b$ and $\|\psi -  \phi_L \otimes \phi_R\|_1 \ge \dn$ & $\QMA$-complete when $\dy \ll b - \El$ (\ref{lem:LEAPSQMAcontained},\ref{lem:LEAPSQMAhard}), $\QMAt$-complete when $\dy \gg b$ (\ref{lem:LEAPSQMA2contained},\ref{lem:LEAPSQMA2hard}) 
    \end{tblr}
    \normalsize
    \caption{A summary of the problems we consider and the results we obtain.}
    \label{Table:results}
    \end{table}
\end{center}

\vspace{-1mm}
\paragraph{Techniques.} We make use of two main technical tools, which we encapsulate into two standalone lemmas (\Cref{Lem:EntropyVerification} and \Cref{Lem:ChanneltoHamiltonian}, respectively) for ease of further adoption. The first is a $\qqQAM$ protocol for entropy verification, which we use to prove that $\qqQAM$ contains $\HELES$ and $\FEA$. This protocol is not new; it appears implicitly in \cite[proof of Lemma 33]{KGN19} and is based on ideas from \cite[Section~5]{BST08}. The second tool is a ``channel-to-Hamiltonian'' construction: Kitaev's circuit-to-Hamiltonian construction applied to the Stinespring representation of a channel with minor modifications. We use this to prove that $\HELES$ is $\qqQAM$-hard and to prove that $\LELES$ and $\LEAPS$ are $\QMAt$-hard. We now sketch each of these tools.\\

\vspace{-1mm}
\noindent\emph{Entropy verification.} The workhorse of this protocol is a quantum extractor, i.e. a channel $T$ roughly satisfying $T(\sigma)\approx \tilI$ if and only if the min-entropy\footnote{The min-entropy is defined as $H_\infty(\sigma) = - \log \|\sigma\|_\infty$ and lower bounds the usual von Neumann entropy.} $H_\infty(\sigma)$ is sufficiently large (by $\tilI$ we denote the maximally mixed state). We will use the Stinespring dilation unitary $U_T$ of $T$, which satisfies $T(\sigma) = \Tr_E\left(U_T (\sigma \otimes \ketbrab{0^{2n_A}}_E) U_T^\dagger\right)$. To help understand the protocol, one can adopt the view of $\qqQAM$ as QMA with a normalized Choi matrix $\rho_{AB}$ as proof. Recall this gave us the key property that $\rho_A=\tilI_A$. Note that by unitary freedom of purifications, the prover can prepare any desired purification of the $\tilI_A$ system. 

The entropy verification procedure will be a sequence of actions by the verifier transforming the proof $\rho_{AB}$ into a different state $\sigma_A$ such that
\begin{enumerate}
    \item \emph{Any high entropy state can be achieved:} For any high entropy state $\chi_A$, a cooperating prover can send an appropriate proof $\rho_{AB}$ such that the entropy verification accepts with probability $\ccomp$, in which case the output $\sigma_A$ is close to $\chi_A$.
    \item \emph{Cheating provers will be caught:} If the prover tries to make the entropy verification procedure output a low entropy state, then the procedure accepts with probability at most $\csound$. In other words, if the output state $\sigma_A$ in case of acceptance has low entropy, then the acceptance probability must have been smaller than $\csound$.
\end{enumerate}
Furthermore, the completeness and soundness parameters satisfy $\ccomp - \csound = \frac{1}{\poly}$.

The entropy verification procedure thus allows the verifier to be confident that the output state $\sigma$ is high-entropy (in case of acceptance). Furthermore, a cooperating prover (who wants to send high entropy states) is not restricted by the procedure.

The procedure works as follows (\Cref{Fig:EntropyVerification}). First, the verifier ``undoes'' $T$. That is, they apply $U_T^\dagger$ to the entire proof state $\rho_{AB}$. Next, they measure $B$ in the computational basis and reject if the result is not all $0$'s. If this test passes with high probability, then the postmeasurement state $\sigma_A$ on register $A$ has high min-entropy. The reason for this is as follows. Suppose the probability of getting all $0$'s was high. Then, the state after applying $U_T^\dagger$ is approximately $\sigma_A \otimes \ketbrab{0^{2n_A}}_B$. Note that $U_T (\sigma_A \otimes \ketbrab{0^{2n_A}}_B) U_T^\dagger \approx \rho_{AB}$. In particular, if one would ``redo'' $T$, one gets $T(\sigma) = \Tr_B(U_T (\sigma_A \otimes \ketbrab{0^{2n_A}}_B) U_T^\dagger) \approx \tilI_A$. Since $T$ is an extractor, this implies the min-entropy of $\sigma_A$ is large.

There is one more hurdle to overcome: we need to move from \emph{min-entropy} to von Neumann entropy. Note that the min-entropy lower bounds the von Neumann entropy, so if the protocol accepts we can be sure that $\sigma_A$ has high von Neumann entropy as well. What goes wrong is that if $S(\sigma_A)$ is high, that does not imply that $T(\sigma) \approx \tilI$. Therefore, there are states that do have high von Neumann entropy, but cannot be the output of the protocol (when the measurement accepts with high probability). To fix this, we use a flattening lemma, which states that $H_\infty(\rho^{\otimes q}) \approx qS(\rho)$. It follows that is $S(\sigma)$ is high, then $\sigma^{\otimes q}$ can be the output of the protocol. Vice-versa, if $\sigma_{A^{\otimes q}}$ is the output of the protocol, the verifier knows that $S(\sigma_{A^{\otimes q}})$ is high. While the different copies of the $A$ register could be arbitrarily entangled, subadditivity of von Neumann entropy ensures that most of the reduced state $\sigma_{A_i}$ must also have high entropy. Curiously, this is where the potentially large complexity difference between $\HELES$ and $\LELES$ could come from: by subadditivity, the prover \emph{cannot} entangle \emph{low entropy} reduced states $\sigma_{A_i}$ to create a \emph{high entropy} global state $\sigma_{A^{\otimes q}}$, but they \emph{can} entangle \emph{high entropy} reduced states to create a \emph{low entropy} global state. This allows the prover to cheat in the NO case if one tries to run the entropy verification protocol to decide $\LELES$ in $\qqQAM$.\\

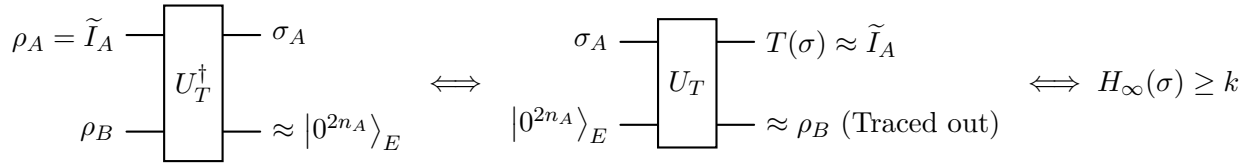
\begin{figure}
    \begin{center}
        \begin{quantikz}
        \lstick{$\rho_A = \tilI_A$} & \gate[2]{U_T^\dagger} & \rstick{$\sigma_A$} \\
        \lstick{$\rho_B$} &  &  \rstick{$\approx \ket{0^{2n_A}}_E$}
    \end{quantikz}
    $\iff$
    \begin{quantikz}
        \lstick{$\sigma_A$} & \gate[2]{U_T} & \rstick{$T(\sigma) \approx \tilI_A$} \\
        \lstick{$\ket{0^{2n_A}}_E$} &  &  \rstick{$\approx \rho_B$ (Traced out)}
    \end{quantikz}
    $\iff H_\infty(\sigma) \ge k$
    \end{center}
    \caption{Schematic representation of the entropy verification protocol. If the proof is such that the $E$ register after applying $U_T^\dagger$ is approximately $|0^{2n_A}\rangle$, then the contents of the $A$ register $\sigma_A$ are such that $T(\sigma_A) \approx \tilI$. As $T$ is an extractor, this implies that the min-entropy of $\sigma_A$ is large.}
    \label{Fig:EntropyVerification}
\end{figure}

\vspace{-1mm}
\noindent\emph{Channel-to-Hamiltonian construction.} We next discuss our \emph{channel-to-Hamiltonian} construction, which  allows us to reduce hardness results for \emph{channel} problems to hardness results for Hamiltonians. Specifically, we use it to prove that $\HELES$ is $\qqQAM$-hard and that $\LELES$ and $\LEAPS$ are $\QMAt$-hard. This construction bypasses the ``sparse Hamiltonian roadblock'' of \cite{CS12}, allowing us to obtain $\QMAt$-hardness for $\LELES$ and $\LEAPS$ even for \emph{local} $H$. This is because it allows us to reduce from the channel-based $\QMAt$-complete Pure Product Isometry Output (PPIO) problem of Gutoski, Hayden, Milner and Wilde~\cite{GHMW13}, rather than utilizing the Product State Test of Harrow and Montanaro~\cite{HM13} as done in \cite{CS12}, which is a non-local operation.

The idea behind the channel-to-Hamiltonian construction is as follows. Given a channel $\Phi$, we wish to define a Hamiltonian whose ground space consists of all states the channel could output. To achieve this, we use Kitaev's circuit-to-Hamiltonian construction~\cite{kitaevClassicalQuantumComputation2002} applied to the Stinespring dilation unitary $U_\Phi$, which recall satisfies $\Phi(\rho) = \Tr_E(U_\Phi (\rho \otimes \ketbrab{0^{2n_A}}_E) U_\Phi^\dagger)$. We make two modifications: we do not use an output penalty and we post idle the computation.

As we do not use an output penalty, the zero-energy space of the resulting Hamiltonian $H_\Phi$ will be the span of \emph{history states} of the form 
\begin{align}
    \ket{\psi_{\textup{hist}}} &= \frac{1}{\sqrt{T + 1}}\sum_{t = 0}^{T} V_t \dots V_1 \ket{\psi}_A \ket{0^{n_B}}_B \ket{t}_C,
\end{align}
where $U_\Phi = V_T \dots V_1$ and $C$ is the clock register. Note that if we trace out the $B$ and $C$ registers from the final time step we get exactly $\Phi(\ketbrab{\psi})$, but it is less clear what happens to the earlier time steps. To make the history state after tracing out look more like $\Phi(\ketbrab{\psi})$, we use post idling: we add $L$ more time steps, for which $V_t = I$. Thus:
\begin{align}
    \ket{\psi_{\textup{hist}}} = \frac{1}{\sqrt{T + L +1}}\sum_{t = 0}^{T-1} V_t \dots V_1 \ket{\psi}_A \ket{0^n}_B \ket{t}_C+ U_\Phi\ket{\psi}_A\ket{0^n}_B \left(\frac{1}{\sqrt{T + L + 1}} \sum_{t = T}^{T+L}\ket{t}_C\right).
\end{align}
Tracing out $B$ and $C$, the resulting state approximates $\Phi(\ketbrab{\psi})$.\\

\vspace{-1mm}
\noindent\emph{``Leaping'' from $\QMA$ to $\QMAt$.} The crucial insight that allows us to prove that $\LEAPS$ ``leaps'' from $\QMA$-completeness to $\QMAt$-completeness is as follows. Consider a state $\psiab$ that is $\dy$ close to $\ket{\phi_L}\ket{\phi_R}$ and has low energy. Suppose the $\dy$ parameter is substantially smaller than the gap $b - \El$. Then, the energy of $\ket{\phi_L}\ket{\phi_R}$ will not be too high, specifically it will be smaller than (say) $\El + \frac{b - \El}{4}$. To prove that $\psiab$ is low energy and close to product, the prover can thus send the state $\ket{\phi_L}\ket{\phi_R}$, which the techniques from \cite{CS12} allow them to do. The verifier can then check that $\ket{\phi_L}\ket{\phi_R}$ has energy $\le \El + \frac{b - \El}{4}$ which will convince them that they are dealing with a $\LEAPS$ YES case. Note that in the NO case, in particular all product states have energy at least $b$, so the prover cannot cheat. In other words, if $\dy$ is sufficiently smaller than $b$, the energy and distance checks can be combined into one.

Suppose now that $b$ is much smaller than $\dy$. In this case it is entirely possible that $\psiab$ has low energy and is $\dy$ close to product, but that any product state it is close to has energy $\ge b$. The energy and distance checks can thus not be combined like before. Instead, they need to be checked separately. This can indeed be done in $\QMAt$: the provers can send $\psiab$ and the product state $\ket{\phi_L}\ket{\phi_R}$ close to it. As the verifier is promised that these are all in product, they can check the distance with a SWAP test. Intuitively, the reason for $\QMAt$ hardness is that the verifier needs to verify this product structure.

\paragraph{Open questions.} This work has singled out two classes, $\qqQAM$ and $\QMAt$, as being important to the study of entanglement of low energy states and free energy of Gibbs states. Some problems we have been able to characterize completely, such as $\qqQAM$-completeness of high entropy ground state detection (\Cref{thm:heles}) and $\QMAt$-completeness of low energy approximate product states verification (\Cref{thm:leapsQMA2}). On the other hand, the three main open questions we leave are: (1) What is a tight upper bound on the complexity of low energy ground state detection (LELES)? On the one hand, showing a bound such as PSPACE would entail a major breakthrough in quantum complexity theory, due to our $\QMAt$-hardness result (\Cref{thm:leles}). On the other hand, a ``lower'' bar might be to show containment in $\QMAt$ itself, though it is unclear how one would verify entropy in $\QMAt$. (2) Can one show a lower bound of $\qqQAM$ on the complexity of additive error free energy/relative error partition function approximation, matching our containment result in $\qqQAM$ (\Cref{thm:freeenergy})? A potential obstacle here is that we do not have a particularly clean method of reducing hard problems to Gibbs states (as opposed to low energy states, for which we can use Kitaev's circuit-to-Hamiltonian construction). 
(3) What is the complexity of $\LEAPS$ between where we show $\QMA$-completeness and where we show $\QMAt$-completeness? That is, is there a ``sharp'' transition between $\QMA$ and $\QMAt$ or are there values where the problem is complete for some class between $\QMA$ and $\QMAt$?

\paragraph{Organization.} We begin with preliminaries in \Cref{scn:preliminaries}. \Cref{scn:lemmas} fleshes out our main two technical tools. \Cref{scn:heles} studies HELES, \Cref{scn:FEA} for FEA, \Cref{scn:leles} for LELES, and \Cref{scn:leaps} for LEAPS. Finally, we consider physical Hamiltonians in \cref{sec:physicalHamiltonians}.

\section{Preliminaries}\label{scn:preliminaries}

\paragraph{Notation.} We use $\|\ket{\psi}\|_2$ to denote the Euclidean norm of $\ket{\psi}$, $\|\rho\|_1 = \Tr(\sqrt{\rho \rho^\dagger})$ the trace norm of $\rho$ and $\|A\|_\infty=\max_{\ket{\psi}} \|A\ket{\psi}\|_2/\|\ket{\psi}\|_2$ the spectral/operator norm of $A$. 

Throughout this paper, $\tilI \coloneq \frac{1}{d}I$ refers to the maximally mixed state on a Hilbert space of dimension $d$. The corresponding Hilbert space will either be clear from context, or denoted in a subscript, e.g., $\tilI_A$ for the maximally mixed state on $\calH_A$. 

When $\ket{\psi}$ is a pure state, we will use the shorthand notation $\psi \coloneq \ketbrab{\psi}$ for the corresponding density matrix. Note that the symbols $\psi, \phi$ are used for pure states only.

\subsection{Complexity classes and problems}

\paragraph{The class $\qqQAM$.}
One of the main complexity classes used in this work is the relatively obscure $\qqQAM$, first defined in \cite{KGN19}. This class is a quantum generalization of the classical $\AM$ but is potentially different from $\QAM$, the more common quantum generalization of $\AM$. We now sketch the difference between $\qqQAM, \QAM$ and $\AM$. 

$\AM$ can be defined as the class of languages accepted by protocols of the following form: First, a $\BPP$ verifier Arthur and a prover Merlin share $\poly$ many bits of randomness. Merlin then prepares a proof based on this shared randomness and sends it to Arthur, who verifies the proof in combination with the random bits. In YES cases Arthur should accept with probability $2/3$ over the shared randomness, whereas he may accept NO cases only with probability at most $1/3$.

The class $\QAM$ generalizes $\AM$ to the quantum setting by making Arthur a $\BQP$ verifier and the proof a quantum state. The shared randomness, however, is still classical. Here $\QAM$ differs from $\qqQAM$. In $\qqQAM$, the classical correlations between Arthur and Merlin (the shared randomness) are replaced by quantum correlations. That is, Arthur and Merlin share $\poly$ many EPR pairs instead of random bits. They could measure these in the computational basis to obtain shared randomness, showing that $\QAM \subseteq \qqQAM$, but are not required to do so. Formally, we define $\qqQAM$ as follows.

\begin{definition}[$\qqQAM$-protocol]
    A $\qqQAM$-protocol consists of a verifier and an all-powerful prover interacting. On an input $x$ of length $n$ they take the following actions.
    \begin{enumerate}
        \item The verifier prepares $\poly(n)$ EPR pairs across registers $A, B_1$. They send the $B_1$ register to the prover and keep the $A$ register themselves.
        \item The prover performs any quantum operation on $B = B_1B_2$ and potentially on a private register $P$. They then send the $B$ register to the verifier who now has access to the proof state $\rho_{AB}$.
        \item The verifier applies a $\poly(n)$-size quantum circuit $U_x$ to $\rho_{AB}$ and potentially to a $\poly(n)$ size workspace register $W$. They then measure the first qubit of $A$ in the computational basis, accept if the result is 1 and reject otherwise.
    \end{enumerate}
\end{definition}

\begin{definition}[quantum-quantum Quantum Arthur Merlin ($\qqQAM$) see also \cite{KGN19}]\label{def:qqQAM}
    A promise problem $(L_{yes}, L_{no})$ is in $\qqQAM(c,s)$ if there is some $\qqQAM$-protocol such that
    \begin{itemize}
        \item $\forall x \in L_{yes}$, the verifier accepts with probability at least $c$.
        \item $\forall x \in L_{no}$, the verifier accepts with probability at most $s$.
    \end{itemize}
    We define $\qqQAM = \qqQAM(\frac{2}{3}, \frac{1}{3})$.
\end{definition}

In \cite[Lemma~20]{KGN19} it is proven that $\qqQAM$ completeness and soundness parameters can be boosted using parallel repetition. That is, for $c,s$ satisfying $c - s = \frac{1}{\poly}$ and for any polynomial $p$, $\qqQAM(c,s) = \qqQAM = \qqQAM(1- 2^{-p}, 2^{-p})$.

Note that by \cref{Lem:Purificationsrelatedbyunitary}, the prover can make $\rho_{AB}$ be \emph{any} state that satisfies $\rho_A = \tilI_A$. We can thus equivalently define $\qqQAM$ as $\QMA$ where the proof $\rho_{AB}$ is potentially mixed but promised to satisfy $\rho_{A} = \tilI_A$.

In \cite{KGN19}, it was shown that $\qqQAM$ has two natural complete problems.

\begin{definition}[Close Image to Maximally Mixed problem (CIMM)]
    The input to a $\CIMM$ instance consists of a quantum channel $\Phi$ (given as a circuit implementing its Stinespring representation) and constants $a,b \in (0,1)$ such that $(1-a)^2 > 1-b^2$. The task is to distinguish the following two cases:
    \begin{itemize}
        \item[YES:] $\exists \rho$ such that $\|\Phi(\rho) - \tilI\|_1 \le a$.
        \item[NO:] $\forall \rho$, $\|\Phi(\rho) - \tilI\|_1 \ge b$.
    \end{itemize} 
\end{definition}
\begin{theorem}[\cite{KGN19}]
    $\CIMM$ is complete for $\qqQAM$.
\end{theorem}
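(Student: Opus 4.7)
The plan is to prove two directions: containment $\CIMM \in \qqQAM$ and $\qqQAM$-hardness of $\CIMM$. In both I will use the Choi-state view of the proof noted after \Cref{def:qqQAM}: by the verifier's EPR setup, the proof $\rho_{AB}$ is constrained exactly by $\rho_A = \tilI_A$, and the prover can realize any such $\rho_{AB}$ via a suitable purification of $\tilI_A$.

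\textbf{Containment.} Let $U_\Phi$ be a Stinespring dilation of the input channel with $\Phi(\sigma) = \Tr_{E}\bigl(U_\Phi (\sigma \otimes \ketbrab{0^k}_E) U_\Phi^\dagger\bigr)$. I would split the prover's register as $B = E' \otimes F$, have the verifier apply $U_\Phi^\dagger$ on $A \otimes E'$ (so the output lies on $F \otimes E$), and then measure $E$ in the computational basis, accepting only on outcome $0^k$. For completeness, if $\rho^*$ is a YES witness with $\|\Phi(\rho^*)-\tilI\|_1 \le a$, then Fuchs--van de Graaf gives $F(\Phi(\rho^*),\tilI) \ge 1-a$, and Uhlmann's theorem supplies a purification $\ket{\xi}_{AB}$ of $\tilI_A$ whose overlap with the natural purification $U_\Phi \ket{\psi^*}_F \ket{0^k}_E$ of $\Phi(\rho^*)$ is at least $1-a$; the prover sends $\ket{\xi}$, and the verifier accepts with probability at least $(1-a)^2$. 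For soundness, any adversarial $\rho_{AB}$ still obeys $\rho_A = \tilI_A$, so if the $\ket{0^k}_E$-test succeeds with probability $p$ then the gentle-measurement lemma implies the post-$U_\Phi^\dagger$ state is close to one factorized as $\sigma_F \otimes \ketbrab{0^k}_E$, whose image under $U_\Phi$ then yields $F(\Phi(\Tr_F \sigma),\tilI)^2 \ge p$, hence $\|\Phi(\cdot)-\tilI\|_1 \le \sqrt{1-p}$. The $\CIMM$ parameter condition $(1-a)^2 > 1-b^2$ is precisely the separation between these two acceptance bounds, and the gap is then boosted to the standard $2/3$ vs.\ $1/3$ via the $\qqQAM$ parallel-repetition amplification of \cite{KGN19}.

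\textbf{Hardness.} Given any $\qqQAM$ protocol for a promise problem $L$ with verifier circuit $U_x$ on the EPR register $A$, prover register $B$, and workspace $W$, I would construct a polynomial-time-specified channel $\Phi_x$ whose output-to-$\tilI$ trace distance tracks the acceptance probability on input $\sigma_B$. Concretely, $\Phi_x(\sigma_B)$ internally prepares $A$ as half of a maximally entangled pair (so the marginal reaching $U_x$ on $A$ is $\tilI_A$), runs $U_x$, and then, controlled on the accept qubit of $A$, loads a designated output register $D$ either from a freshly prepared half of a large maximally entangled pair (yielding $\tilI_D$ after tracing the purifying partner) or from $\ketbrab{0^d}_D$; all other registers are traced out. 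The output becomes $\Phi_x(\sigma_B) = p\,\tilI_D + (1-p)\,\ketbrab{0^d}_D$, where $p$ is the $U_x$-acceptance probability on $\sigma_B$, so $\|\Phi_x(\sigma_B) - \tilI_D\|_1 = (1-p)(1-2^{-d})$. Pre-amplifying the $\qqQAM$ protocol so that YES gives $c$ near $1$ and NO gives $s$ near $0$, and taking $d$ large, yields thresholds $a,b$ satisfying $(1-a)^2 > 1-b^2$, with the prover's optimal $\qqQAM$ strategy on $\sigma_B$ matching the $\CIMM$ witness.

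The main obstacle I expect is the parameter bookkeeping: threading Fuchs--van de Graaf, Uhlmann, and the gentle-measurement lemma so that the trace-norm bounds $a,b$ of $\CIMM$ line up with the completeness/soundness of the $\qqQAM$ protocol, rather than the proof of either individual direction. Beyond that, the containment protocol's use of $U_\Phi^\dagger$ is efficient whenever $\Phi$ is supplied as an explicit circuit with ancillas (the standard $\CIMM$ input convention), and the hardness reduction's ``maximally-mixed-on-accept'' branch is implemented as a controlled operation followed by partial trace, hence a valid CPTP map.
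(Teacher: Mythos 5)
This theorem is imported from \cite{KGN19}; the paper you were given does not prove it, so I am judging your argument on its own merits.

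Your \emph{containment} direction is essentially the standard (and correct) approach: identify the verifier's EPR halves with the output register of $\Phi$, have the prover supply the garbage/purifying registers, apply $U_\Phi^\dagger$, and test that the Stinespring ancilla is $\ket{0^k}$. One quantitative caveat: if you route soundness through the gentle measurement lemma you only get $\|\Phi(\cdot)-\tilI\|_1\le 2\sqrt{1-p}$, which does \emph{not} line up with the promise $(1-a)^2>1-b^2$ (it would need $(1-a)^2>1-b^2/4$). To match the stated parameters you should instead observe that the maximum acceptance probability of this protocol is \emph{exactly} $\max_\rho F(\Phi(\rho),\tilI)$ (Uhlmann, since the prover ranges over all purifications of $\tilI_A$), and then apply Fuchs--van de Graaf once in each direction. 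Your "$F(\cdot)^2\ge p$" is also off by a square, but the conclusion you draw is the right one.

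The \emph{hardness} direction has a genuine gap. Your channel $\Phi_x$ takes only $\sigma_B$ as input and "internally prepares $A$ as half of a maximally entangled pair," so the state entering $U_x$ is $\tilI_A\otimes\sigma_B$ (with $A$ entangled only with an internal reference that is traced out). This severs exactly the correlation that defines $\qqQAM$: the honest prover's message $\rho_{AB}$ may be, and in general must be, entangled with the verifier's EPR halves, subject only to $\rho_A=\tilI_A$. Your reduction therefore only quantifies over product strategies $\tilI_A\otimes\sigma_B$, and completeness fails --- a YES instance of $L$ whose accepting strategies all require correlated $\rho_{AB}$ need not map to a YES instance of $\CIMM$. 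The obvious repair (let the channel input be all of $\sigma_{AB}$) breaks soundness instead, because nothing then forces $\sigma_A=\tilI_A$ and a cheating "input" can violate the model. The resolution, which is the actual content of the $\qqQAM$-hardness of $\CIMM$, is to make the channel's \emph{output} carry (a copy of) the $A$ marginal alongside the accept-conditioned register, so that "output close to maximally mixed" simultaneously enforces $\sigma_A\approx\tilI_A$ and high acceptance probability. Your designated output register $D$ tracks only the acceptance probability and encodes neither the correlation nor the marginal constraint, so this direction does not go through as written.
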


\begin{definition}[Maximum Output Quantum Entropy Approximation ($\MaxOutQEA$)]
    The input to a $\MaxOutQEA$ instance consists of (the Stinespring representation of) a quantum channel $\Phi$ and a parameter $t\in \RR$. The task is to distinguish the following two cases:
    \begin{itemize}
        \item[YES:] $\exists \rho$ such that $S(\Phi(\rho)) \ge t + 1$.
        \item[NO:]  $\forall \rho$, $S(\Phi(\rho)) \le t - 1$.
    \end{itemize}
\end{definition}
\begin{theorem}[{\cite[Theorem~3]{KGN19}}]
    $\MaxOutQEA$ is $\qqQAM$ complete.  
\end{theorem}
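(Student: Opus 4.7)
The plan is to prove containment in $\qqQAM$ and $\qqQAM$-hardness separately.

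For containment, the natural approach is to invoke the entropy-verification primitive of \Cref{Lem:EntropyVerification} as a subroutine. Viewing the prover's $\qqQAM$ message $\rho_{AB}$ (with $\rho_A=\tilI_A$) as committing, via the reduced state on a designated subregister of $B$, to an input $\rho$ for $\Phi$, the verifier applies the Stinespring dilation $U_\Phi$ to that subregister together with fresh $\ket{0}$ ancillas, producing a purification of $\Phi(\rho)$ on a dedicated output register $D$. The verifier then runs the entropy-verification subroutine targeting $D$, accepting iff the certified min-entropy---boosted via the flattening lemma by requesting polynomially many copies in the prover's message---witnesses $S(\Phi(\rho))\ge t+1$. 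Completeness follows by having the prover prepare the purifications dictated by the extractor's Stinespring on each copy; soundness follows from the soundness of the entropy-verification subroutine.

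For hardness, I would reduce from the $\qqQAM$-complete problem $\CIMM$. Given an instance $(\Phi, a, b)$ of $\CIMM$ with output dimension $d$, output the $\MaxOutQEA$ instance $(\Phi, t)$ with $t = \log d - \delta$ for $\delta$ tuned to the promise. The Fannes--Audenaert inequality converts the $\CIMM$ YES condition $\|\Phi(\rho) - \tilI\|_1 \le a$ into $S(\Phi(\rho)) \ge \log d - O(a\log d + h(a))$, while a quantum Pinsker-type bound converts the $\CIMM$ NO condition $\|\Phi(\rho)-\tilI\|_1 \ge b$ into $S(\Phi(\rho)) \le \log d - \Omega(b^2)$ for every $\rho$. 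Some mild parameter amplification (e.g.\ applying $\Phi$ in parallel on several inputs, or first amplifying the $\CIMM$ gap) might be required to separate the YES/NO entropy thresholds by more than $2$, as demanded by the promise.

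The principal obstacle lies in containment: the strict $\qqQAM$ invariant $\rho_A = \tilI_A$ pins the maximally-mixed marginal to register $A$, whereas $\Phi(\rho)$ naturally lives on a different register $D$ whose marginal is in general not maximally mixed. Reconciling the two requires treating $D^{\otimes q}$ (where $q$ copies are introduced by flattening) as the ``$A$'' register for the entropy-verification primitive, with the remainder of the proof simulating the extractor's environment. This repackaging---adapting the proof layout so that the marginal condition of \Cref{Lem:EntropyVerification} is satisfied on the $\Phi$-output register rather than on the original $A$---is the heart of the protocol and the key technical step.
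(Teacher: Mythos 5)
First, note that the paper does not prove this statement at all: it is imported verbatim from \cite{KGN19} (their Theorem~3), so there is no in-paper proof to compare against. Judged on its own merits, your proposal has two genuine gaps, one in each direction.

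\emph{Containment.} Your protocol applies $U_\Phi$ \emph{forward} to a prover-supplied input and then tries to run \Cref{Lem:EntropyVerification} on the output register $D$. But that lemma's soundness guarantee is predicated on the register being verified having an exactly maximally mixed marginal ($\chi_\calA=\tilI_\calA$), which in $\qqQAM$ holds only for the register $A$ of EPR halves the prover never touches; the marginal of $D=\Phi(\rho)$ is not $\tilI$ and cannot be forced to be. You correctly flag this mismatch, but you present it as ``the key technical step'' rather than resolving it, so the protocol as described does not work. The resolution is to run everything \emph{backward} from $A$: take the composed channel $T\circ\Phi^{\otimes q}$ (with $T$ the extractor and $q$ from flattening), have the verifier apply the inverse of its full Stinespring dilation to $A$ together with the prover's registers, and accept only if \emph{all} environment registers (both the extractor's and $\Phi$'s) measure to $\ket{0\cdots 0}$. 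The check on $\Phi$'s environment is what certifies that the high-entropy state sitting on $A$ is actually in the image of $\Phi$ --- without it, soundness fails, since a cheating prover could place a high-entropy state on $A$ that is not any $\Phi(\rho)$.

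\emph{Hardness.} The quantitative bookkeeping in your reduction from $\CIMM$ does not close. Pinsker gives, in the NO case, $S(\Phi(\rho))\le\log d-b^2/(2\ln 2)$, a deficit of $\Theta(b^2)=O(1)$. But Fannes in the YES case only gives $S(\Phi(\rho))\ge\log d-a\log d+a\log a$, a deficit of order $a\log d$; and no reverse-Pinsker-type improvement is available since $D(\rho\,\|\,\tilI)$ is not controlled by trace distance better than this. With $a,b$ constants (as in the $\CIMM$ definition used here) and $\log d=\Theta(n)$, the YES lower bound $\log d-\Theta(an)$ lies far \emph{below} the NO upper bound $\log d-\Theta(b^2)$, so the reduction maps YES and NO instances to overlapping entropy ranges. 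Applying $\Phi$ in parallel scales both deficits by the same factor and does not help. What you would actually need is $\CIMM$-hardness with $a=o(b^2/\log d)$, i.e., an inverse-polynomially small $a$; this requires a separate error-reduction argument for $\CIMM$ that your ``mild parameter amplification'' aside does not supply.
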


\paragraph{Pure Product Isometry Output.}
The following $\QMAt$-complete problem due to \cite{GHMW13} will be very useful for our purposes.
\begin{definition}[Pure Product Isometry Output ($\PPIO$)]\label{def:PPIO}
    The input to a $\PPIO$ instance is a quantum circuit implementing an isometry $U$ with a 2-partite\footnote{The general definition allows $\ell$-partite output systems, but for our purposes the 2-partite case suffices.} output system $AB$ and two parameters $\dy, \dn$. The task is to decide between:
    \begin{itemize}
        \item[YES:] There is some isometry output that is close to product: $$\exists \ket{\psi}, \ket{\phi_A}, \ket{\phi_B}\text{ such that }\|U\left(\ketbrab{\psi}_A\otimes \ketbrab{0^{n_B}}_B\right)U^\dagger - \ketbrab{\phi_A} \otimes \ketbrab{\phi_B}\|_1 \le \dy.$$
        \item[NO:] All isometry outputs are far from product: $$\forall \ket{\psi}, \ket{\phi_A}, \ket{\phi_B}\text{, }\|U\left(\ketbrab{\psi}_A\otimes \ketbrab{0^{n_B}}_B\right)U^\dagger - \ketbrab{\phi_A} \otimes \ketbrab{\phi_B}\|_1 \ge \dn.$$
    \end{itemize}
\end{definition}
\begin{theorem}[{\cite[Theorem~6.2]{GHMW13}}]
    $\PPIO$ is $\QMA(2)$-complete, also for $\dy = 2^{-\Omega(n)}$ and $\dn = 2 - 2^{-\Omega(n)}$.
\end{theorem}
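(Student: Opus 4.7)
Since this result is cited from \cite{GHMW13}, I will sketch the approach I would take to prove it. The plan has two halves: containment in $\QMA(2)$ and $\QMA(2)$-hardness, with particular attention to the strong parameters $a = 2^{-\Omega(n)}$ and $b = 2 - 2^{-\Omega(n)}$.

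For containment, I would appeal to the collapse $\QMA(k) = \QMA(2)$ for constant $k \ge 2$ and design a three-prover protocol. The three provers send, respectively, a candidate input $\ket{\psi}$ to $U$, a candidate left factor $\ket{\phi_A}$, and a candidate right factor $\ket{\phi_B}$; because these live on three registers held by three independent provers, the joint proof is promised to be in product form $\ket{\psi}\otimes\ket{\phi_A}\otimes\ket{\phi_B}$ (one may assume pure proofs by convexity). The verifier applies $U$ to $\ket{\psi}\otimes\ket{0^{n_B}}$, obtaining a state $U\ket{\psi}$ on $AB$, and performs a SWAP test against $\ket{\phi_A}\otimes\ket{\phi_B}$. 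Using $\|\ketbra{\alpha}{\alpha} - \ketbra{\beta}{\beta}\|_1 = 2\sqrt{1 - |\langle\alpha|\beta\rangle|^2}$, the test accepts with probability $(1 + |\langle U\psi | \phi_A\phi_B\rangle|^2)/2$, which gives completeness at least $1 - a^2/8$ and soundness at most $1 - b^2/8$. For $a = 2^{-\Omega(n)}$ and $b = 2 - 2^{-\Omega(n)}$ this is roughly $1$ versus $1/2$, a constant promise gap which suffices for $\QMA(2)$ (and can be amplified by parallel repetition if desired).

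For hardness, I would reduce from the canonical $\QMA(2)$ verification problem for a circuit $V$ acting on product proofs $\ket{\psi_1}\otimes\ket{\psi_2}$, starting from an amplified instance (e.g.\ via \cite{HM13}) with exponentially small error. The goal is to build an isometry $U$ whose output on some input is close to a product state iff $V$ accepts some product proof; the natural first attempt is for $U$ to coherently run $V$ on its input and, controlled on the accept flag, either pass the two-register structure through to the output or entangle the output with a reject-flagged ancilla.

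The main obstacle is that $\PPIO$ permits \emph{arbitrary}, possibly entangled inputs $\ket{\Psi}_{A_1A_2}$ to $U$, whereas $\QMA(2)$ restricts attention to product proofs: a naive construction lets an adversary exploit entanglement at the input to force a product-looking output even when no product input is accepted by $V$. The GHMW construction sidesteps this via a careful Stinespring-like dilation together with a symmetrization of the output over the two registers, so that any product output of $U$ can be traced causally back to a product input accepted by $V$. This is where I expect the bulk of the technical effort, and it is what governs how the $\PPIO$ parameters $a, b$ relate to the $\QMA(2)$ completeness/soundness gap; obtaining the strong setting $a = 2^{-\Omega(n)}$, $b = 2 - 2^{-\Omega(n)}$ then amounts to feeding exponentially amplified $\QMA(2)$ protocols into this reduction.
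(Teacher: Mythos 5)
The paper offers no proof of this statement---it is imported verbatim as \cite[Theorem~6.2]{GHMW13}---so there is no in-paper argument to compare against; I will therefore assess your sketch on its own terms. Your containment half is correct and is the standard argument: three mutually unentangled provers supply $\ket{\psi}$, $\ket{\phi_A}$, $\ket{\phi_B}$, and that three-way product promise is doing real work, since a state entangled across the two registers fed into the SWAP test can pass it with probability $1$ (e.g.\ $\bra{\Gamma}\mathrm{SWAP}\ket{\Gamma}=1$ for $\ket{\Gamma}$ maximally entangled); only because $U\ket{\psi}\ket{0^{n_B}}$ is guaranteed pure and unentangled from the pure product state $\ket{\phi_A}\ket{\phi_B}$ do you get acceptance probability exactly $\tfrac12+\tfrac12|\bra{\phi_A}\bra{\phi_B}U\ket{\psi}\ket{0^{n_B}}|^2$, hence completeness $1-a^2/8$ versus soundness $1-b^2/8\approx 1/2$, a constant gap that $\QMA(k)=\QMAt$ converts into containment. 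This mirrors the SWAP-test containment argument the paper itself runs for $\LEAPS$ in \cref{thm:leapsQMA2}.

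The hardness half, however, is an outline rather than a proof. You correctly isolate the central difficulty---$\PPIO$ admits arbitrary, possibly entangled inputs, whereas $\QMAt$ soundness constrains only product proofs---but you then delegate its resolution to an unspecified ``Stinespring-like dilation together with a symmetrization,'' which is exactly the step that carries the theorem, and I do not believe symmetrization is the operative mechanism. What is needed (and what the construction in \cite{GHMW13} supplies) is concrete: start from a Harrow--Montanaro-amplified verifier with error $2^{-\Omega(n)}$, route the two proof registers to \emph{opposite sides} of the output bipartition so that any input entanglement survives to the output, and, controlled on rejection, prepare polynomially many EPR pairs across the cut. Soundness in the NO case is then a quantitative dichotomy: an input that is close to product across the proof cut is rejected with overwhelming probability and hence produces a nearly maximally entangled gadget across the output cut, while an input that is far from product already yields an output far from product by construction; balancing these two regimes is what delivers $b=2-2^{-\Omega(n)}$, and the YES case with an exponentially-accepted product proof delivers $a=2^{-\Omega(n)}$. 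Without this construction and its trade-off analysis, the hardness direction---and in particular the strong parameter regime your paper relies on in \cref{thm:leles,thm:leapsQMA2}---is not established by your sketch.
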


\subsection{Useful definitions and lemmas}
\paragraph{Quantum entropies.}
In this paper we make use of two different quantum entropies: the von Neumann entropy and the quantum min-entropy. The von Neumann entropy is the quantum equivalent of the classical Shannon entropy and is defined as:
\begin{equation}
    S(\rho) = -\Tr\left(\rho \log \rho \right).
\end{equation}
Indeed, if $\rho$ is a classical probability distribution, the Shannon entropy and the von Neumann entropy coincide. 

The quantum min-entropy is defined as:
\begin{equation}
    H_\infty(\rho) = - \log \|\rho\|_\infty.
\end{equation}

Von Neumann entropy and min-entropy are related as follows:
\begin{equation}
    \forall \rho, \quad S(\rho) \ge H_\infty(\rho)
\end{equation}

\paragraph{Stinespring representation of channels.}
By the Stinespring Dilation Theorem (see e.g. \cite[Section~5.2.2]{Wil13}), any quantum channel $\Phi\colon L(\calH_A) \to L(\calH_B)$ can be written as
\begin{align}
    \Phi(\rho) = \Tr_E(V\rho V^\dagger),
\end{align}
where $V \colon \calH_A \to \calH_B \otimes \calH_E$ is an isometry and $E$ is an auxiliary register with $\dim(\calH_E) \le \dim(\calH_A) \dim(\calH_B)$. This way of writing $\Phi$ is often called the \emph{Stinespring representation} of $\Phi$. On finite dimensional Hilbert spaces (the setting we consider), isometries can be extended to unitaries. It follows that, for every channel $\Phi$, there exists a unitary $U_\Phi$ acting on the $A, B$ and $E$ registers such that 
\begin{align}
    \Phi(\rho) = \Tr_{AE}\left( U_\Phi(\rho \otimes \ketbrab{0^{n_B}}_B \otimes \ketbrab{0^{n_A + n_B}}_E) U_\Phi^\dagger \right).
\end{align} 
In this work, whenever we refer to the Stinespring representation of a channel, we refer to the unitary $U_\Phi$.

\paragraph{Quantum extractors.}
One tool we will use are quantum extractors. Informally, these are quantum channels that ``extract'' the entropy from their input state, outputting a maximally mixed state if the input has sufficient entropy and something far from maximally mixed if it does not. More formally:
\begin{definition}[Quantum extractors (see e.g. {\cite[section~5.1]{BST08}})]
    \label{Def:Extractor}
    A quantum channel $T\colon L(\calH) \to L(\calH)$ is a $(k,d,\epsilon)$-quantum extractor if it satisfies the following properties:
    \begin{enumerate}
        \item $T$ is $2^d$-regular. That is, there are unitary operators $\{U_i\}_{1 \le i \le 2^d}$ such that $T$ can be written as 
        \begin{equation}
            T(\rho) = \frac{1}{2^d} \sum_{i = 1}^{2^d} U_i \rho U_i^\dagger
        \end{equation}
        \item For every $\rho$ with $H_\infty(\rho) \ge k$, one has $\|T(\rho) - \tilI\|_1 \le \epsilon$.
    \end{enumerate}
    A quantum extractor is called explicit if it can be implemented by a quantum circuit of size $\poly(\dim \calH)$. 
\end{definition} 

\begin{lemma}[{\cite[Corollary~5.3]{BST08}}]
    \label{Lem:ExpandersExist}
    For all $n, k, \epsilon$, there exist constructions of explicit $(k,d,\epsilon)$-quantum extractors $T \colon L((\CC^2)^{\otimes n}) \to L((\CC^2)^{\otimes n})$ with $d = 2(n - k + 2 \log \frac{1}{\epsilon}) + O(1)$, or $d = n - k + 2 \log \frac{1}{\epsilon} + 2 \log n + O(1)$, depending on the construction used.
\end{lemma}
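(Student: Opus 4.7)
The plan is to reduce the $L_1$ guarantee to a Hilbert--Schmidt ($L_2$) bound and construct $T$ from a classical strong extractor. By Cauchy--Schwarz on $\CC^{2^n}$, $\|T(\rho)-\tilI\|_1 \le 2^{n/2}\|T(\rho)-\tilI\|_2$, so it suffices to achieve $\|T(\rho)-\tilI\|_2 \le \epsilon\cdot 2^{-n/2}$ whenever $H_\infty(\rho)\ge k$. The hypothesis $H_\infty(\rho)\ge k$ gives $\Tr(\rho^2) \le \|\rho\|_\infty \le 2^{-k}$, which, via the expansion of $\rho$ in the Pauli basis, translates to $\|\rho - \tilI\|_2^2 = \Tr(\rho^2) - 2^{-n}$, bounding the sum of squared non-identity Pauli coefficients. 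The whole analysis will then reduce to controlling how $T$ acts on the non-identity Paulis.

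Next I would construct $T$ explicitly. Starting from a classical $(k,\epsilon)$-strong extractor $E\colon\{0,1\}^n\times\{0,1\}^{d_1}\to\{0,1\}^n$, define unitaries
\begin{equation}
U_{y,z}\ket{x} = (-1)^{z\cdot E(x,y)}\ket{\pi_y(x)},
\end{equation}
indexed by $y\in\{0,1\}^{d_1}$ and $z\in\{0,1\}^{d_2}$, where $\pi_y$ are permutations (e.g.\ bit-wise translations $x\mapsto x\oplus y$, or a pairwise-independent family derived from $y$). The channel $T(\rho) = 2^{-(d_1+d_2)}\sum_{y,z}U_{y,z}\rho U_{y,z}^\dagger$ is manifestly $2^d$-regular with $d=d_1+d_2$. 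Averaging over $z$ annihilates every Pauli component with nonzero $X$-part, and averaging over $y$ damps the remaining $Z$-type diagonal components by invoking the extractor property of $E$: for $x$ drawn from any distribution with min-entropy $\ge k$, the output $E(x,y)$ is $\epsilon$-close to uniform when $y$ is uniform. Translating this into a collision-probability estimate for the output distribution then yields the desired second-moment bound $\|T(\rho)-\tilI\|_2 \le \epsilon\cdot 2^{-n/2}$.

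The main obstacle is tightening the seed length to match the two stated bounds. Using a near-optimal classical strong extractor with $d_1 = \log n + 2\log(1/\epsilon) + O(1)$, combined with a phase seed of length $d_2 = n - k + 2\log(1/\epsilon) + O(1)$, yields the second bound $d = n - k + 2\log(1/\epsilon) + 2\log n + O(1)$. The first, weaker bound $d = 2(n - k + 2\log(1/\epsilon)) + O(1)$ arises from the more elementary construction in which both the permutation and the phase pattern are supplied by independent universal-hash seeds, each of length roughly $n - k + 2\log(1/\epsilon) + O(1)$; this sidesteps the need for near-optimal classical extractors at the cost of a factor-of-two loss. In both cases the crux is the same: the second-moment reduction converts the quantum extraction guarantee cleanly into the classical closeness-to-uniform property of $E$, and the only nontrivial bookkeeping is making sure the seed length for the classical primitive plugs through without extra slack.
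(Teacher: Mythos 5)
This lemma is not proved in the paper at all --- it is imported verbatim as \cite{BST08}[Corollary~5.3] --- so your attempt has to be judged against the source's argument. Your first half is exactly right and is the standard route: Cauchy--Schwarz gives $\norm{T(\rho)-\tilI}_1\le 2^{n/2}\norm{T(\rho)-\tilI}_2$, and $H_\infty(\rho)\ge k$ gives $\Tr(\rho^2)\le\norm{\rho}_\infty\le 2^{-k}$, hence $\norm{\rho-\tilI}_2^2=\Tr(\rho^2)-2^{-n}\le 2^{-k}$, so it suffices to contract the $2$-norm distance to $\tilI$ by a factor $\epsilon\,2^{-(n-k)/2}$.

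The construction step, however, has a genuine gap. For $U_{y,z}\ket{x}=(-1)^{z\cdot E(x,y)}\ket{\pi_y(x)}$, averaging over $z$ kills the matrix element $\ketbra{x}{x'}$ only when $E(x,y)\ne E(x',y)$; your claim that it ``annihilates every Pauli component with nonzero $X$-part'' would require $E(\cdot,y)$ to be injective for every seed, which is incompatible with $E$ being a strong extractor (a per-seed injection preserves min-entropy rather than flattening it). Worse, the phases act trivially on the diagonal of $\rho$, so the diagonal can only be flattened by the permutations $\pi_y$ --- not by ``the extractor property of $E$'' as you assert --- and an unspecified family of permutations does not do this. The construction that actually works, and is what \cite{BST08} (following Ambainis--Smith) uses, is conjugation by Pauli operators $X^aZ^b$ with $(a,b)$ ranging over a $\delta$-biased subset of $\{0,1\}^{2n}$: conjugation by $X^aZ^b$ multiplies the coefficient of $X^cZ^e$ in the Pauli expansion of $\rho$ by $(-1)^{a\cdot e\oplus b\cdot c}$, so the $\delta$-biased average damps \emph{every} non-identity coefficient by $\delta$ simultaneously, yielding $\norm{T(\rho)-\tilI}_2\le\delta\norm{\rho-\tilI}_2$ with $\delta=\epsilon\,2^{-(n-k)/2}$. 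The two stated seed lengths then come from two explicit realizations of such a contraction: an explicit quantum expander of degree $O(\lambda^{-4})$ gives $d=4\log(1/\lambda)+O(1)=2(n-k+2\log\frac1\epsilon)+O(1)$, and an AGHP-type $\delta$-biased set of size $O((n/\delta)^2)$ gives $d=2\log(1/\delta)+2\log n+O(1)=n-k+2\log\frac1\epsilon+2\log n+O(1)$. Note that your own bookkeeping ($d_1=\log n+2\log\frac1\epsilon$ plus $d_2=n-k+2\log\frac1\epsilon$) sums to $n-k+4\log\frac1\epsilon+\log n+O(1)$, which does not match the second stated bound.
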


\paragraph{Swap test.}
The SWAP test (see e.g. \cite{BCWW01}) is a quantum subroutine that, when given as input a quantum state $\ket{\phi}\otimes \ket{\psi}$ accepts with probability $p_{\textup{acc}} = \frac{1}{2} + \frac{1}{2}|\braket{\psi}{\phi}|^2$. It can therefore be used to estimate the overlap of two pure states $\ket{\phi}$ and $\ket{\psi}$. Note that for pure states we have
\begin{align}
    \|\phi - \psi\|_1 &= 2\sqrt{1 - |\braket{\psi}{\phi}|^2} \\
    &= 2\sqrt{2 (1 -p_{\textup{acc}})}.
\end{align}
Therefore, the SWAP test can also be used to estimate the trace distance of two pure states:

\paragraph{Useful lemmas.}

\begin{lemma}[{Gentle Measurement Lemma (see e.g. \cite[Lemma~9.4.1]{Wil13})}]
    \label{Lem:GentleMeasurement}
    Let $\rho$ be a quantum state and consider a measurement operator $0 \preceq M \preceq I$ with $\Tr(M \rho) \ge 1 - \epsilon$. Then the post-measurement state $\rho_{post} = \frac{\sqrt{M}\rho \sqrt{M}}{\Tr(M \rho)}$ satisfies:
    \begin{equation}
        \|\rho - \rho_{post}\|_1 \le 2\sqrt{\epsilon}
    \end{equation}
\end{lemma}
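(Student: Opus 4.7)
The plan is to prove the Gentle Measurement Lemma via the standard purification/Uhlmann argument, which turns the trace-distance bound into a fidelity bound that is easy to control.

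First, I would pick any purification $\ket{\Psi}_{AB}$ of $\rho$ on the system $A$ together with an auxiliary register $B$. The key observation is that the (unnormalized) vector $(\sqrt{M} \otimes I_B)\ket{\Psi}_{AB}$ has squared norm exactly $\langle \Psi | (M \otimes I_B) | \Psi \rangle = \Tr(M\rho)$, and its reduced density matrix on $A$ is $\sqrt{M}\rho\sqrt{M}$. Therefore, after normalization,
\[
\ket{\Phi}_{AB} \coloneq \frac{(\sqrt{M} \otimes I_B)\ket{\Psi}_{AB}}{\sqrt{\Tr(M\rho)}}
\]
is a valid purification of the post-measurement state $\rho_{\textup{post}}$.

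Next, I would use Uhlmann's theorem to lower bound the fidelity $F(\rho,\rho_{\textup{post}}) \geq |\langle \Psi | \Phi \rangle|$. A direct calculation gives
\[
|\langle \Psi | \Phi \rangle| = \frac{\Tr(\sqrt{M}\rho)}{\sqrt{\Tr(M\rho)}}.
\]
Since $0 \preceq M \preceq I$, the operator monotonicity of the square root (equivalently, the scalar fact that $\sqrt{x} \geq x$ for $x \in [0,1]$) gives $\sqrt{M} \succeq M$, so $\Tr(\sqrt{M}\rho) \geq \Tr(M\rho) \geq 1 - \epsilon$. Combining these yields
\[
F(\rho,\rho_{\textup{post}}) \geq \frac{\Tr(M\rho)}{\sqrt{\Tr(M\rho)}} = \sqrt{\Tr(M\rho)} \geq \sqrt{1-\epsilon}.
\]

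Finally, I would invoke the Fuchs--van de Graaf inequality $\|\rho - \sigma\|_1 \leq 2\sqrt{1 - F(\rho,\sigma)^2}$ to conclude
\[
\|\rho - \rho_{\textup{post}}\|_1 \leq 2\sqrt{1-(1-\epsilon)} = 2\sqrt{\epsilon}.
\]
There is no real obstacle here; the only subtlety is being careful about normalization when transferring from the unnormalized post-measurement vector $(\sqrt{M} \otimes I_B)\ket{\Psi}$ to the true purification of $\rho_{\textup{post}}$, and then using operator monotonicity of $\sqrt{\cdot}$ to compare $\Tr(\sqrt{M}\rho)$ with $\Tr(M\rho)$.
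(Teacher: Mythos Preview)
Your argument is correct: the purification/Uhlmann route followed by Fuchs--van de Graaf is the standard proof and all steps check out (in particular $\sqrt{M}\succeq M$ for $0\preceq M\preceq I$ gives the needed inequality $\Tr(\sqrt{M}\rho)\ge \Tr(M\rho)$). The paper does not prove this lemma at all---it simply cites it from Wilde's textbook---so there is no in-paper proof to compare against; just be aware that the paper's Uhlmann statement uses the squared-fidelity convention, whereas you use the unsquared one, which is harmless but worth noting for consistency.
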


\begin{theorem}[Uhlmann's Theorem]
    Let $\sigma$ and $\rho$ be two mixed states. Then
    \begin{equation}
        F(\sigma, \rho) = \max_{\ket{\phi_\sigma}, \ket{\phi_\rho}} |\braket{\phi_\sigma}{\phi_\rho}|^2
    \end{equation}
\end{theorem}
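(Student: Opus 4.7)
The plan is to show $F(\sigma,\rho) := \|\sqrt{\sigma}\sqrt{\rho}\|_1^2 = (\Tr\sqrt{\sqrt{\sigma}\rho\sqrt{\sigma}})^2$ equals the maximum overlap of purifications by directly computing the right-hand side. First, I would fix canonical reference purifications on a system $AB$ with $\dim(\calH_B)\ge\dim(\calH_A)$: writing $\ket{\Omega}_{AB} = \sum_i \ket{i}_A\ket{i}_B$ for the unnormalized maximally entangled vector, set $\ket{\phi_\sigma^0} := (\sqrt{\sigma}\otimes I)\ket{\Omega}$ and $\ket{\phi_\rho^0} := (\sqrt{\rho}\otimes I)\ket{\Omega}$; a one-line calculation of $\Tr_B$ verifies these are purifications of $\sigma$ and $\rho$ respectively. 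Next I would invoke the unitary freedom of purifications (the same fact already used implicitly in characterizing $\qqQAM$ proofs, cf.\ \Cref{Lem:Purificationsrelatedbyunitary}): every purification of $\sigma$ on $AB$ has the form $(I\otimes U)\ket{\phi_\sigma^0}$ for some unitary $U$ on $B$, and analogously for $\rho$. Absorbing the two resulting unitaries into a single $W$, the right-hand side of the theorem reduces to $\max_W |\bra{\phi_\sigma^0}(I\otimes W)\ket{\phi_\rho^0}|^2$.

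Second, I would compute this inner product explicitly. Using the standard swap identity $(A\otimes I)\ket{\Omega} = (I\otimes A^T)\ket{\Omega}$ together with $\bra{\Omega}(X\otimes Y)\ket{\Omega} = \Tr(XY^T)$, one obtains
\begin{equation}
    \bra{\phi_\sigma^0}(I\otimes W)\ket{\phi_\rho^0} \;=\; \bra{\Omega}\bigl(\sqrt{\sigma}\sqrt{\rho}\otimes W\bigr)\ket{\Omega} \;=\; \Tr\!\bigl(\sqrt{\sigma}\sqrt{\rho}\, W^T\bigr).
\end{equation}
The final ingredient is the matrix-norm identity $\max_U |\Tr(AU)| = \|A\|_1$ over unitary $U$, which follows by writing $A = |A^\dagger|\, V$ via the polar decomposition (with $V$ a partial isometry) and applying the Cauchy--Schwarz inequality for the Hilbert--Schmidt inner product; the maximum is attained by extending $V^\dagger$ to a unitary. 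Applied with $A = \sqrt{\sigma}\sqrt{\rho}$ (and absorbing the transpose, which preserves unitarity, into $W$), this gives $\max_W |\Tr(\sqrt{\sigma}\sqrt{\rho}\,W^T)|^2 = \|\sqrt{\sigma}\sqrt{\rho}\|_1^2 = F(\sigma,\rho)$, completing both the upper and lower bounds simultaneously.

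The main obstacle is bookkeeping rather than anything conceptual. One must choose the ancilla $B$ large enough that every purification is reachable by a unitary rather than merely an isometry on the ancilla, keep careful track of the transpose introduced by the swap identity (since the optimization is over unitaries, this is harmless, but it is easy to drop a dagger), and handle the case where $\sqrt{\sigma}\sqrt{\rho}$ is rank deficient, in which case the polar decomposition only supplies a partial isometry that must be extended to a full unitary in order to saturate the bound. Each of these is standard, but missing any of them leaves a gap; once they are in place the chain of equalities above yields Uhlmann's identity immediately.
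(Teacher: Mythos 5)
The paper does not prove this statement; Uhlmann's theorem is quoted as a standard result (it is the textbook theorem, e.g.\ \cite{NC10}), so there is no in-paper argument to compare against. Your proof is the standard and correct one: fix the reference purifications $(\sqrt{\sigma}\otimes I)\ket{\Omega}$ and $(\sqrt{\rho}\otimes I)\ket{\Omega}$, reduce the maximum over all purifications to a maximum over a single unitary $W$ on the ancilla via the unitary freedom of purifications (the paper's \cref{Lem:freedomofpurifications}), use the transpose identity to rewrite the overlap as $\Tr(\sqrt{\sigma}\sqrt{\rho}\,W^T)$, and conclude with $\max_U|\Tr(AU)|=\norm{A}_1$ from the polar decomposition. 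The three caveats you flag (ancilla dimension, the transpose, extending the partial isometry to a unitary when $\sqrt{\sigma}\sqrt{\rho}$ is rank deficient) are exactly the right ones, and your convention $F(\sigma,\rho)=\|\sqrt{\sigma}\sqrt{\rho}\|_1^2$ matches the squared-overlap form of the statement and the way the paper applies Fuchs--van de Graaf in \cref{Lem:Purificationsrelatedbyunitary}. No gaps.
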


\begin{lemma}[{See \cite[Exercise~2.81]{NC10}}]
    \label{Lem:freedomofpurifications}
    Let $\ket{\phi}_{AB}$ and $\psiab$ be two purifications of a mixed state $\rho_A$. Then there is a unitary $U_B$ acting only on the purifying register such that $I_A \otimes U_B \ket{\phi} = \ket{\psi}$.
\end{lemma}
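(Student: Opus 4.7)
The plan is to reduce the statement to a claim about orthonormal sets in the purifying register via the Schmidt decomposition. First I would diagonalize $\rho_A = \sum_i p_i \ketbra{i}{i}_A$ in its eigenbasis, where the sum runs over indices with $p_i > 0$. The key fact I would invoke (or quickly reprove by expanding in the $\{\ket{i}_A\}$ basis and computing $\Tr_B$) is that any purification of $\rho_A$ on $AB$ can be written in the form
\begin{equation}
    \ket{\phi}_{AB} = \sum_i \sqrt{p_i}\, \ket{i}_A \otimes \ket{e_i^\phi}_B, \qquad \ket{\psi}_{AB} = \sum_i \sqrt{p_i}\, \ket{i}_A \otimes \ket{e_i^\psi}_B,
\end{equation}
where $\{\ket{e_i^\phi}\}_i$ and $\{\ket{e_i^\psi}\}_i$ are each orthonormal sets in $\calH_B$ (indexed over $i$ with $p_i > 0$). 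Indeed, tracing out $B$ from $\ket{\phi}$ gives $\sum_{i,j} \sqrt{p_i p_j}\, \braket{e_j^\phi}{e_i^\phi}\, \ketbra{i}{j}_A$, and forcing this to equal $\sum_i p_i \ketbra{i}{i}_A$ pins down orthonormality on the support.

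Next I would define $U_B$ on the $B$ register by setting $U_B \ket{e_i^\phi} = \ket{e_i^\psi}$ for each $i$ with $p_i > 0$. Since both $\{\ket{e_i^\phi}\}$ and $\{\ket{e_i^\psi}\}$ are orthonormal, this partial map is an isometry from a subspace of $\calH_B$ onto another subspace of the same dimension; I would extend it to a unitary on all of $\calH_B$ by choosing any unitary bijection between the orthogonal complements of the two subspaces (which are guaranteed to have matching dimensions since $\calH_B$ is finite-dimensional). Applying $I_A \otimes U_B$ to $\ket{\phi}_{AB}$ then yields $\sum_i \sqrt{p_i}\, \ket{i}_A \otimes \ket{e_i^\psi}_B = \ket{\psi}_{AB}$, as required.

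The main (minor) obstacle is the bookkeeping for degenerate eigenvalues and zero eigenvalues of $\rho_A$: the Schmidt-type expansion is not unique when there are degeneracies, but uniqueness is not needed — only the existence of \emph{some} pair of orthonormal sets, which the calculation above guarantees. Zero eigenvalues simply do not appear in the expansion, and the corresponding components of $\calH_B$ are absorbed into the freedom we have when extending $U_B$ to the orthogonal complement. No further delicate estimates are required; the whole argument is a clean linear-algebraic exercise.
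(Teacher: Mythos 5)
Your proof is correct and is exactly the standard Schmidt-decomposition argument for unitary equivalence of purifications; the paper itself offers no proof, simply citing \cite{NC10} (Exercise~2.81), and that textbook proof proceeds the same way you do. Your handling of the two minor subtleties (zero eigenvalues dropping out, and extending the partial isometry between the two equal-dimensional subspaces of $\calH_B$ to a full unitary) is the right bookkeeping, so nothing is missing.
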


\begin{lemma}
    \label{Lem:Purificationsrelatedbyunitary}
    Let $\rho_A, \sigma_A$ be quantum states such that $\|\rho_A - \sigma_A\|_1 \le \epsilon$ and let $\ket{\phi_\rho}_{AB}, \ket{\phi_\sigma}_{AB}$ be purifications of $\rho, \sigma$. Then, there exists a unitary $U_B$ acting only on the purifying register $B$ such that $F\left(\ketbrab{\phi_\sigma}, (I_A \otimes U_B)\ketbrab{\phi_\rho} (I_A \otimes U_B) \right) \ge (1 - \epsilon/2)^2$.
\end{lemma}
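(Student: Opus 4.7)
The plan is to assemble three standard facts already recalled in the excerpt: Uhlmann's theorem, the unitary freedom of purifications (Lemma~\ref{Lem:freedomofpurifications}), and the Fuchs--van de Graaf inequality relating trace distance to fidelity.

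First, I would apply Uhlmann's theorem to the mixed states $\rho_A$ and $\sigma_A$ to obtain \emph{some} pair of purifications $\ket{\phi'_\rho}_{AB}, \ket{\phi'_\sigma}_{AB}$ (possibly after harmlessly enlarging the purifying register $B$ so that it can accommodate both pairs of purifications simultaneously) whose overlap realizes the fidelity, namely $|\braket{\phi'_\sigma}{\phi'_\rho}|^2 = F(\rho_A, \sigma_A)$.

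Next, I would invoke Lemma~\ref{Lem:freedomofpurifications} twice: since $\ket{\phi_\rho}_{AB}$ and $\ket{\phi'_\rho}_{AB}$ are both purifications of $\rho_A$, there exists a unitary $V_B$ on $B$ with $(I_A \otimes V_B)\ket{\phi_\rho} = \ket{\phi'_\rho}$, and similarly a unitary $W_B$ with $(I_A \otimes W_B)\ket{\phi_\sigma} = \ket{\phi'_\sigma}$. Setting $U_B := W_B^\dagger V_B$ then yields $\bra{\phi_\sigma}(I_A \otimes U_B)\ket{\phi_\rho} = \braket{\phi'_\sigma}{\phi'_\rho}$. Because both arguments of the fidelity in the claim are pure, the fidelity equals the squared modulus of this inner product, giving exactly $F\bigl(\ketbrab{\phi_\sigma},\, (I_A \otimes U_B)\ketbrab{\phi_\rho}(I_A \otimes U_B)^\dagger\bigr) = F(\rho_A, \sigma_A)$.

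Finally, I would lower bound $F(\rho_A, \sigma_A)$ via Fuchs--van de Graaf: from $1 - \sqrt{F(\rho_A, \sigma_A)} \le \tfrac{1}{2}\trnorm{\rho_A - \sigma_A} \le \epsilon/2$ one immediately obtains $F(\rho_A, \sigma_A) \ge (1 - \epsilon/2)^2 \ge (1 - 2\epsilon)^2$, with the last inequality holding in the only non-trivial regime $\epsilon \le 1/2$. There is no real obstacle here --- the proof is a direct assembly of three textbook facts --- and the only subtlety is bookkeeping: tracking the squared convention $F = |\braket{\cdot}{\cdot}|^2$ used in the excerpt's Uhlmann statement when quoting Fuchs--van de Graaf, and verifying that the composed operator $W_B^\dagger V_B$ indeed acts on the purifying register $B$ alone (which is immediate since both $V_B$ and $W_B$ do).
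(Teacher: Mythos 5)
Your proposal is correct and follows essentially the same route as the paper's proof: Fuchs--van de Graaf to lower bound $F(\rho_A,\sigma_A)$, Uhlmann to realize that fidelity as an overlap of purifications, and two applications of the unitary freedom of purifications composed into a single $U_B$ on the purifying register. Your remark about the convention mismatch (the intermediate bound $(1-\epsilon/2)^2 \ge (1-2\epsilon)^2$ for small $\epsilon$) is a fair point of care that the paper elides, but it does not change the argument.
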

\begin{proof}
    By the Fuchs-van de Graaf Inequality, $F(\rho, \sigma) \ge (1 - \epsilon/2)^2$. By Uhlmann's Theorem there are purifications $\ket{\psi_\rho}$ and $\ket{\psi_\sigma}$ with $|\braket{\psi_\sigma}{\psi_\rho}|^2 \ge (1 - \epsilon/2)^2$. By \cref{Lem:freedomofpurifications}, there are unitaries $U_\rho$ and $U_\phi$ such that $(I\otimes U_\rho)\ket{\phi_\rho} = \ket{\psi_\rho}$ and $(I\otimes U_\sigma)\ket{\phi_\sigma} = \ket{\psi_\sigma}$. It is now clear that $U_B \coloneq U_\sigma^\dagger U_\rho$ has the claimed property.
\end{proof}

\begin{lemma}[{Fannes inequality (see e.g. \cite[Theorem~11.6]{NC10})}]
    \label{Lem:Fannes}
    Let $\rho$ and $\sigma$ be two quantum states on a $d$-dimensional Hilbert space and let $T = \|\rho - \sigma\|_1$. Then,
    \begin{equation}
        |S(\rho) - S(\sigma)| \le T\log d - T \log T,
    \end{equation}
    when $T \le \frac{1}{e}$ and
    \begin{equation}
        |S(\rho) - S(\sigma)| \le T\log d + \frac{1}{e \ln 2}
    \end{equation}
\end{lemma}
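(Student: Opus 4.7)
The plan is to reduce the quantum inequality to a purely classical one about the eigenvalue distributions of $\rho$ and $\sigma$, and then apply concavity of $f(x) = -x \log x$. Since the von Neumann entropy depends only on the spectrum, writing the eigenvalues of $\rho, \sigma$ in decreasing order as $r_1 \ge \cdots \ge r_d$ and $s_1 \ge \cdots \ge s_d$ yields $|S(\rho) - S(\sigma)| \le \sum_{i=1}^{d} |f(r_i) - f(s_i)|$. The key starting ingredient is a Mirsky-type bound $\sum_i |r_i - s_i| \le \|\rho - \sigma\|_1 = T$, which follows either from Ky Fan/Mirsky inequalities applied to the Hermitian operator $\rho - \sigma$, or from a direct argument via the min-max characterization of eigenvalues.

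Second, I would invoke the elementary pointwise estimate $|f(x) - f(y)| \le -|x-y|\log|x-y|$, valid for $x,y \in [0,1]$ provided $|x-y| \le 1/e$, which follows from concavity of $f$ together with monotonicity of $-\delta \log \delta$ on $[0, 1/e]$. Writing $\delta_i := |r_i - s_i|$, this gives $\sum_i |f(r_i) - f(s_i)| \le \sum_i f(\delta_i)$ under the pointwise assumption $\delta_i \le 1/e$. To finish, I would then apply Jensen's inequality to the concave function $f$:
\[
\sum_{i=1}^{d} f(\delta_i) \;\le\; d \cdot f\!\left(\tfrac{1}{d}\sum_{i} \delta_i\right) \;\le\; d \cdot f(T/d) \;=\; T \log d - T \log T,
\]
using $\sum_i \delta_i \le T$ and monotonicity of $f$ on $[0, 1/e]$ (which requires $T/d \le 1/e$). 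This yields the first stated inequality.

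The main obstacle is treating cleanly the regime where $T$ or individual $\delta_i$ exceed $1/e$, since $-T \log T$ is no longer a valid modulus-of-continuity bound there. The standard workaround is to split the index set into $\{i : \delta_i \le 1/e\}$ and its complement, bounding the bad contributions by the uniform estimate $\max_{x \in [0,1]} f(x) = 1/(e \ln 2)$ and absorbing them into a universal additive constant; this produces the second stated form of the inequality, valid for all $T \in [0,1]$. Since this is a classical textbook inequality, in practice I would simply cite Nielsen and Chuang rather than reproduce the full calculation, and only note the two forms of the bound as they are used later in the paper.
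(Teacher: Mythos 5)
The paper does not prove this lemma at all --- it is quoted as a black box from Nielsen and Chuang --- so the only thing to compare your sketch against is the textbook proof, and that is essentially what you have reproduced: the Mirsky/Lidskii eigenvalue bound $\sum_i |r_i - s_i| \le \|\rho-\sigma\|_1$, the modulus-of-continuity estimate for $\eta(x) = -x\log x$, and a grouping step. Your Jensen step $\sum_i f(\delta_i) \le d\, f(\Delta/d) = \Delta\log d + \eta(\Delta)$ is an equivalent repackaging of the textbook's ``normalize the $\delta_i$ into a probability vector and bound its Shannon entropy by $\log d$,'' so the argument is correct and standard. Two remarks on the statement itself, both of which your sketch implicitly handles better than the lemma as printed. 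First, the bound $|S(\rho)-S(\sigma)| \le T\log d - T\log T$ genuinely requires a smallness condition on $T$ (in the paper's normalization, $T \le 1/e$); without it the inequality is false, e.g.\ $\rho = \mathrm{diag}(1,0)$, $\sigma = \mathrm{diag}(0.2,0.8)$ gives $T = 1.6$ and $|S(\rho)-S(\sigma)| \approx 0.72 > T\log 2 - T\log T \approx 0.52$. You correctly flag this regime as the obstacle, though your final conclusion still omits the hypothesis. Second, the lemma's second display has a sign error: it must read $T\log d + \frac{1}{e\ln 2}$ (taking $\rho=\sigma$ in the printed version yields $0 \le -\frac{1}{e\ln 2}$); your ``absorb the bad indices into the universal constant $\max_x \eta(x) = \frac{1}{e\ln 2}$'' argument produces the correct $+$ version, which is also the one the literature states.
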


\begin{lemma}[{\cite[Theorem~11.10]{NC10}}]
    \label{Lem:HolevoEntropyUpperBound}
    Let $\rho = \sum_{x} p_x \rho_x$ be a quantum state, where the $p_x$ form a probability distribution. Then,
    \begin{equation}
        S(\rho) \le \sum_{x} p_x S(\rho_x) + H(p),
    \end{equation}
    where $H(p)$ is the Shannon entropy of $p$. 
\end{lemma}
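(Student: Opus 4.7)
The plan is to realize $\rho$ as a convex combination of rank-one projectors and then invoke the fact that the von Neumann entropy of a density operator is upper bounded by the Shannon entropy of the weights of any of its pure-state decompositions. First, I would take spectral decompositions $\rho_x = \sum_j \lambda_{xj} \ketbrab{e_{xj}}$ of each component. Substituting into $\rho = \sum_x p_x \rho_x$ expresses $\rho$ as the convex combination
\begin{equation}
    \rho = \sum_{x,j} p_x \lambda_{xj} \ketbrab{e_{xj}}
\end{equation}
with weights $q_{xj} := p_x \lambda_{xj}$. Using $\sum_j \lambda_{xj} = 1$ and $-\sum_j \lambda_{xj}\log \lambda_{xj} = S(\rho_x)$, a short computation then gives
\begin{equation}
    H(q) = -\sum_{x,j} p_x \lambda_{xj} \log(p_x \lambda_{xj}) = H(p) + \sum_x p_x S(\rho_x).
\end{equation}

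The main obstacle, and the only non-mechanical step, is the inequality $S(\rho) \le H(q)$. I would derive this from the Schr\"odinger--Hughston--Jozsa--Wootters theorem: the eigenvalue vector of any density operator majorizes the weight vector of any of its rank-one convex decompositions, and since Shannon entropy is Schur-concave, this yields $S(\rho) = H(\mathrm{spec}(\rho)) \le H(q)$. Combining with the computation of $H(q)$ closes the proof.

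An equivalent formulation I would keep in mind is via the classical-quantum extension $\sigma_{XB} = \sum_x p_x \ketbrab{x}_X \otimes \rho_x^B$. Being block-diagonal in $X$, this state has spectrum $\{p_x \lambda_{xj}\}$, so $S(\sigma_{XB}) = H(p) + \sum_x p_x S(\rho_x)$, while $\sigma_B = \rho$. The desired inequality becomes $S(\sigma_B) \le S(\sigma_{XB})$, i.e., non-negativity of the conditional entropy $S(X \mid B)$ for a classical-quantum state --- a statement whose substantive content is once again the majorization bound above. Either route therefore reduces the proof to the same single non-trivial ingredient, with the rest being direct arithmetic on the spectra.
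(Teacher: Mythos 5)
The paper does not prove this lemma; it is quoted directly from Nielsen--Chuang (Theorem~11.10), so there is no in-paper argument to compare against. Your proof is correct and is essentially the standard textbook argument: the reduction from mixed $\rho_x$ to a rank-one decomposition via spectral decompositions, and the identity $H(q)=H(p)+\sum_x p_x S(\rho_x)$ for $q_{xj}=p_x\lambda_{xj}$, are exactly as in Nielsen--Chuang, and you correctly isolate the one substantive ingredient, namely $S(\rho)\le H(q)$ for any rank-one convex decomposition $\rho=\sum_i q_i\ketbrab{\psi_i}$. Your justification of that step via Schr\"odinger--HJW plus Schur-concavity of Shannon entropy is valid (after padding, the weight vector $q$ is a doubly stochastic image of $\mathrm{spec}(\rho)$, hence majorized by it); Nielsen--Chuang instead derive the same inequality by purifying with an ancilla $R$ and invoking that a projective measurement on $R$ cannot decrease entropy. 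The two routes are interchangeable here, and your classical-quantum reformulation via $S(X\mid B)\ge 0$ is likewise a correct equivalent packaging. No gaps.
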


\begin{lemma}[{Flattening Lemma (see e.g. \cite[Lemma~5.6]{BST08})}]
    \label{Lem:Flattening}
    Let $\rho$ be an $n$-qubit quantum state. Then, for all $\epsilon >0$, there is a state $\sigma$ on $qn$ qubits such that $\|\rho^{\otimes q} - \sigma\|_1 \le 2 \epsilon$ and
    \begin{equation}
        H_{\infty}(\sigma) \ge q S(\rho) - O\left( n + \log \frac{q}{\epsilon} \right) \sqrt{q \log \frac{1}{\epsilon}}.
    \end{equation}
\end{lemma}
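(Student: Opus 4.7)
The plan is to exploit the classical typical subspace idea. Diagonalize $\rho = \sum_i p_i \ketbrab{i}$; then $\rho^{\otimes q}$ is diagonal in the product eigenbasis $\{\ket{\vec{i}}\}$ with eigenvalues $p_{\vec{i}} = \prod_j p_{i_j}$. I will define $\sigma := \Pi \rho^{\otimes q} \Pi / \Tr(\Pi \rho^{\otimes q})$, where $\Pi$ is the projector onto the span of those $\ket{\vec{i}}$ satisfying both (i) $p_{i_j} \ge 2^{-M}$ for every coordinate $j$, and (ii) $-\log p_{\vec{i}} \ge qS(\rho) - \Delta$, for parameters $M, \Delta$ to be tuned.

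Next I would lower-bound $\Tr(\Pi \rho^{\otimes q})$ by concentration. Under the distribution induced by $\rho^{\otimes q}$ in the eigenbasis, the quantity $-\log p_{\vec{i}}$ equals $\sum_{j=1}^q X_j$ with $X_j := -\log p_{i_j}$ i.i.d.\ of mean $S(\rho)$. The immediate obstacle is that $X_j$ can be arbitrarily large, which blocks a direct application of Hoeffding's inequality. This is precisely the role of condition (i): choosing $M = n + \log(q/\epsilon^2) + O(1)$, the total $\rho$-mass on eigenvectors with eigenvalue $< 2^{-M}$ is at most $2^{n-M} \le \epsilon^2/q$, so a union bound over the $q$ tensor factors shows condition (i) fails under $\rho^{\otimes q}$ with probability at most $\epsilon^2$. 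Conditioned on (i), each $X_j \in [0,M]$, so Hoeffding gives
\[
\Pr\!\Big[\,\big|\textstyle\sum_j X_j - qS(\rho)\big| \ge \Delta\,\Big] \le 2\exp\!\left(-\tfrac{2\Delta^2}{qM^2}\right).
\]
Setting the right-hand side at most $\epsilon^2$ and plugging in $M = O(n + \log(q/\epsilon))$ yields $\Delta = O\!\big((n+\log(q/\epsilon))\sqrt{q\log(1/\epsilon)}\big)$, and combining the two events $\Tr(\Pi \rho^{\otimes q}) \ge 1 - 2\epsilon^2$.

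Finally I would read off the two claims. Applying the Gentle Measurement Lemma (\Cref{Lem:GentleMeasurement}) with measurement operator $\Pi$ yields $\|\sigma - \rho^{\otimes q}\|_1 \le 2\sqrt{2}\,\epsilon$, which after rescaling $\epsilon$ by an absolute constant (absorbed into the $O(\cdot)$ in the min-entropy bound) gives the stated $2\epsilon$. For the min-entropy, $\sigma$ is diagonal in the product eigenbasis with every eigenvalue bounded by
\[
\frac{p_{\vec{i}}}{\Tr(\Pi \rho^{\otimes q})} \le \frac{2^{-(qS(\rho) - \Delta)}}{1 - 2\epsilon^2},
\]
so $H_\infty(\sigma) \ge qS(\rho) - \Delta - O(\epsilon^2) \ge qS(\rho) - O\!\big((n+\log(q/\epsilon))\sqrt{q\log(1/\epsilon)}\big)$, as required.

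The main obstacle is the truncation/concentration tradeoff embodied in the choice of $M$. Without a uniform bound on the $X_j$ Hoeffding does not apply; yet truncating too aggressively (small $M$) discards noticeable mass and inflates the trace distance, while truncating too loosely (large $M$) enlarges the concentration radius $\Delta$. The choice $M = \Theta(n+\log(q/\epsilon))$ is the sweet spot, and it is exactly this balance that produces the characteristic $(n+\log(q/\epsilon))\sqrt{q\log(1/\epsilon)}$ factor in the statement.
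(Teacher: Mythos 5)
The paper does not actually prove this lemma --- it is imported verbatim from \cite{BST08}[Lemma~5.6] --- so your attempt can only be judged against the standard proof of that result, and it matches it: diagonalize $\rho$, cut off the atypical eigenvalues, and run a Chernoff--Hoeffding bound on the surprisal variables $X_j=-\log p_{i_j}$. Your argument is correct, and the truncation threshold $M=\Theta(n+\log(q/\epsilon))$ is indeed exactly what forces the $(n+\log\frac{q}{\epsilon})\sqrt{q\log\frac{1}{\epsilon}}$ shape of the error term. The one step you should write out is the Hoeffding application after conditioning on event (i): the sum $\sum_j X_j$ then concentrates around its \emph{conditional} mean, not around $qS(\rho)$. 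This is harmless --- per coordinate the mean shifts by at most roughly $\sum_{i:p_i<2^{-M}}p_i\log(1/p_i)\le 2^{n-M}M=O(\epsilon^2 M/q)$, so the total shift $O(\epsilon^2 M)$ is swallowed by $\Delta=\Theta(M\sqrt{q\log(1/\epsilon)})$ --- but as written the inequality is not literally an instance of Hoeffding. (Alternatively, apply Hoeffding to the truncated variables $\min(X_j,M)$ and note they agree with $X_j$ on the event (i); or, as is also common, replace $\rho$ by $(1-\delta)\rho+\delta\tilI$ so that all eigenvalues are bounded below from the start.) A cosmetic remark: since $\Pi$ commutes with $\rho^{\otimes q}$, you get $\|\sigma-\rho^{\otimes q}\|_1=2(1-\Tr(\Pi\rho^{\otimes q}))\le 4\epsilon^2$ directly, without invoking the Gentle Measurement Lemma or rescaling $\epsilon$.
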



\section{Main technical lemmas}\label{scn:lemmas}
\subsection{Entropy verification in $\qqQAM$}
Here we show how to verify entropy in $\qqQAM$. This approach has previously appeared in \cite[proof of Lemma~33]{KGN19} and is based on ideas from \cite[Section~5]{BST08}. We explicitly state the procedure as a lemma for completeness and to hopefully ease further use.

Let $A,B$ be registers of $n_A, n_B$ qubits respectively, where $n_A + n_B = n$, and let $\tau = \tau(n)$ be the entropy threshold.  Consider the registers $\calA = A_1 \dots A_q$ and $\calB = B_1 \dots B_q$ consisting of $q$ copies of the $A$ and $B$ registers. Fix an explicit $(k, d, \epsilon)$-extractor $T \colon L(\calH_\calA) \to L(\calH_\calA)$ with $k = q\tau - O\left(n_A + \log \frac{q}{\epsilon}\right) \sqrt{q \log \frac{1}{\epsilon}}$, $d = qn_A - k + 2\log \frac{qn_A}{\epsilon} + O(1)$ and $q \le \poly(n)$ and $\epsilon$ to be determined later. Such an extractor exists by \cref{Lem:ExpandersExist}. Let $U_T$ be an efficient unitary implementing $T$ acting on $\calA$ and a $2qn_A$ qubit register $E$. That is, let $U_T$ be such that 
    \begin{align}
        T(\rho) = \Tr_E(U_T (\rho \otimes \ketbrab{0^{2qn_A}}) U_T^\dagger).
    \end{align}
    Now, consider the protocol, depending on $\tau$, $q$ and $\epsilon$, that on input $\chi_{\calA \calB E}$ does the following:
    \begin{enumerate}
        \item Apply $U_T^\dagger \otimes I_\calB$ to the input state.
        \item Perform a projective measurement $\{\Pi, I - \Pi\}$ where $\Pi = I_\calA \otimes I_\calB \otimes \ketbrab{0^{2qn_A}}_E$. Reject if the outcome is not $\Pi$.
        \item Accept and output the resulting state $\sigma_{\calA \calB}$.
    \end{enumerate}
    Note that the $\calB$ register is not acted on at all. Its sole purpose is to allow for correlations with a system whose entropy is not checked.

To verify entropy in $\qqQAM$ the verifier can do the following:
\begin{enumerate}
    \item Prepare EPR pairs across the $\calA$ and an $\calA'$ register.
    \item Send the $\calA'$ register to the prover.
    \item Receive two proof registers from the prover: $\calB$ and $E$. The total proof state will now be $\chi_{\calA\calB E}$. Note that $\chi_{\calA} = \tilI_\calA$ as the prover did not touch this register.
    \item Run the above protocol on the entire state $\chi_{\calA\calB E}$. If the protocol rejects, output NO.
    \item Run additional verification on the states $\sigma_{A_i B_i}$ as necessary. Accept iff these verifications accept for sufficiently many $i$ (depending on the exact problem).
\end{enumerate}
Doing so is supposed to have two effects: (1) For any high entropy state $\rho_{AB}$, a cooperating prover can make the output $\sigma_{\calA\calB}$ be (close to) the $q$-fold tensor product of $\rho_{AB}$. (2) If the prover tries to cheat and make the average entropy of the $\sigma_{A_i B_i}$ low, then the success probability of the protocol will be small. After running the protocol, the verifier can thus be confident that the $\sigma_{A_i B_i}$ are high entropy on average.

We now formally prove this:
\begin{lemma}[Entropy verification in $\qqQAM$]
    \label{Lem:EntropyVerification}
    For all $\tau$ and all $\delta, \delta' \ge \frac{1}{\poly(n)}$ there are completeness and soundness parameters $\ccomp, \csound$ with $\ccomp = 1 - 2^{-\Theta(n)}$ and $\ccomp - \csound \ge \frac{1}{\poly(n)}$ and $\epsilon, q$ with $\epsilon \ge 2^{-\Theta(n)}$ and $q \le \poly(n)$ such that the entropy verification protocol above satisfies the following properties:
    \begin{enumerate}
        \item For any state $\rho_{AB}$ with $S(\rho_A) \ge \tau$ there exists some state $\chi_{\calA \calB E}$ with $\chi_\calA = \tilI_\calA$ that is accepted by the protocol with probability at least $\ccomp$. The output state $\sigma_{\calA \calB}$ in case of acceptance satisfies
        \begin{align}
            \left\| \rho_{AB}^{\otimes q} - \sigma_{\calA \calB} \right\|_1 \le \delta.
        \end{align}
        Furthermore, the average output state $\tilde{\sigma}_{AB} \coloneq \frac{1}{q}\sum \sigma_{A_i B_i}$ has $\|\rho_{AB} - \tilde{\sigma}_{AB}\|_1 \le \delta$.

        \item Let $\chi_{\calA \calB E}$ be an arbitrary state satisfying $\|\chi_\calA - \tilI_\calA\| \le \der$ with $\der \le \frac{\delta'}{4n_A}$ and suppose that on input $\chi$, the measurement accepts with probability at least $\csound$. Then, the output state in case of acceptance $\sigma_{\calA,\calB}$ satisfies $S(\sigma_\calA) \ge q(\tau - \delta')$ and the average output state $\tilde{\sigma}_{AB}$ has $S(\tilde{\sigma}_A) \ge \tau - \delta'$.
    \end{enumerate}    
\end{lemma}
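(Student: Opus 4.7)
My plan is to handle completeness and soundness separately, leveraging the two defining properties of the extractor $T$ (regularity and mapping high-min-entropy states near $\tilI$).

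\textbf{Completeness.} Given $\rho_{AB}$ with $S(\rho_A)\ge\tau$, I would apply the Flattening Lemma (\cref{Lem:Flattening}) to $\rho_A^{\otimes q}$, obtaining a state $\rho'$ with $H_\infty(\rho')\ge k$ and $\|\rho_A^{\otimes q}-\rho'\|_1\le 2\epsilon$. The extractor property then gives $\|T(\rho')-\tilI_\calA\|_1\le \epsilon$, and since $T$ is a channel, $\|T(\rho_A^{\otimes q})-\tilI_\calA\|_1=O(\epsilon)$. Purifying $\rho_{AB}$ as $\ket{\psi_\rho}_{AB}$, the ``honest'' state $\ket{\psi^H}:=(U_T\otimes I_\calB)(\ket{\psi_\rho}^{\otimes q}\otimes\ket{0^{2qn_A}}_E)$ has $\calA$-marginal $T(\rho_A^{\otimes q})$, close to but not equal to $\tilI_\calA$. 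To produce a legitimate $\qqQAM$ proof, I invoke \cref{Lem:Purificationsrelatedbyunitary} on the purifying register $\calB E$ (exactly the prover's half) to obtain a state $\chi$ that is a purification of $\tilI_\calA$ \emph{exactly} and has fidelity $\ge 1-O(\epsilon)$ with $\ket{\psi^H}$. Since $U_T^\dagger\ket{\psi^H}=\ket{\psi_\rho}^{\otimes q}\ket{0^{2qn_A}}_E$, the honest state is accepted with probability $1$, so $\chi$ is accepted with probability $c\ge 1-O(\epsilon)$; the Gentle Measurement Lemma (\cref{Lem:GentleMeasurement}) then gives $\|\sigma_{\calA\calB}-\rho_{AB}^{\otimes q}\|_1\le\delta$ for the post-measurement state, from which $\|\tilde\sigma_{AB}-\rho_{AB}\|_1\le\delta$ follows by convexity of the trace norm.

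\textbf{Soundness.} Suppose the protocol accepts $\chi$ with $\chi_\calA=\tilI_\calA$ with probability $p\ge s$, where I arrange $s\ge 1-1/\poly(n)$ (while keeping $c-s\ge 1/\poly(n)$). The Gentle Measurement Lemma then shows the rewound state $U_T(\sigma_{\calA\calB}\otimes\ketbrab{0^{2qn_A}}_E)U_T^\dagger$ is $2\sqrt{1-p}$-close in trace distance to $\chi$; taking $\calA$-marginals and using monotonicity of trace distance under partial trace gives $\|T(\sigma_\calA)-\tilI_\calA\|_1\le 2\sqrt{1-p}$. Fannes (\cref{Lem:Fannes}) then lower-bounds $S(T(\sigma_\calA))\ge qn_A-o(q\delta')$. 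Since $T$ is a uniform mixture of $2^d$ unitaries, \cref{Lem:HolevoEntropyUpperBound} gives $S(T(\sigma_\calA))\le S(\sigma_\calA)+d$; rearranging and substituting the chosen $d=qn_A-k+2\log(qn_A/\epsilon)+O(1)$ and $k=q\tau-O(n_A+\log(q/\epsilon))\sqrt{q\log(1/\epsilon)}$ yields $S(\sigma_\calA)\ge q\tau-O(\sqrt{q}\cdot\polylog(n,1/\epsilon))$. Choosing $q=\poly(n)$ of sufficiently high degree (relative to $1/\delta'$) makes this $\ge q(\tau-\delta')$. For the average $\tilde\sigma_A$, subadditivity gives $\sum_i S(\sigma_{A_i})\ge S(\sigma_\calA)$ and concavity gives $S(\tilde\sigma_A)\ge\tfrac{1}{q}\sum_iS(\sigma_{A_i})\ge\tau-\delta'$.

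\textbf{Main obstacle.} The main technical subtlety I anticipate is in completeness: ensuring $\chi_\calA=\tilI_\calA$ \emph{exactly} rather than approximately, since the honest state naturally has $\calA$-marginal $T(\rho_A^{\otimes q})$ only $O(\epsilon)$-close to $\tilI_\calA$. The repair via \cref{Lem:Purificationsrelatedbyunitary} costs only a small fidelity loss but must be propagated carefully through the subsequent measurement to preserve the promised closeness of $\sigma_{\calA\calB}$ to $\rho_{AB}^{\otimes q}$. Secondarily, one must juggle four sources of error --- the flattening lemma, the extractor, the purification repair, and Fannes --- and choose $k$, $d$, $q$, and $\epsilon$ in the correct order of dependencies so that all $1/\poly(n)$ gap conditions hold simultaneously.
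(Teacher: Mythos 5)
Your proposal is correct and follows essentially the same route as the paper's proof: flattening plus the extractor property, a purification repair via \cref{Lem:Purificationsrelatedbyunitary}, and the Gentle Measurement Lemma for completeness; and gentle measurement, Fannes, the $2^d$-regularity bound on entropy increase, and subadditivity/concavity for soundness. The only cosmetic slip is that for mixed $\rho_{AB}$ you should feed $\rho_{AB}^{\otimes q}$ into the honest state directly (as the paper does) rather than a purification $\ket{\psi_\rho}_{AB}$ supported on $AB$ alone, since otherwise the output would be close to $\psi_\rho^{\otimes q}$ rather than $\rho_{AB}^{\otimes q}$; the remainder of the argument is unaffected.
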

\begin{proof}
    For the first property, let $\rho_{AB}$ satisfy $S(\rho_A) \ge \tau$ and consider 
    \begin{align}
        \chi_{\calA \calB E} = (U_T \otimes I_\calB) \left( \rho_{A_1 B_1} \otimes \dots \otimes \rho_{A_q B_q} \otimes \ketbrab{0^{2qn_A}}_E \right) (U_T^\dagger \otimes I_\calB).
    \end{align}
    Note that by definition of $U_T$, $\Tr_{\calB E}(\chi) = T(\rho_A^{\otimes q})$. As the entropy of $\rho_A$ is high by assumption, the Flattening Lemma guarantees the existence of a state $\eta_\calA$ with $\|\rho_A^{\otimes q} - \eta_\calA \|_1 \le 2\epsilon$ and 
    \begin{equation}
        H_\infty(\eta_\calA) \ge q\tau - O\left(n_A + \log \frac{q}{\epsilon}\right) \sqrt{q \log \frac{1}{\epsilon}} = k. 
    \end{equation}
    By the second property of \cref{Def:Extractor} we have:
    \begin{align}
        \|\Tr_{\calB E}(\chi) - \tilI_\calA\|_1 &= \|T(\rho_A^{\otimes q}) - \tilI_\calA\|_1 \\ 
        &\le \|T(\rho^{\otimes q}) - T(\eta_\calA)\|_1 + \|T(\eta_\calA) - \tilI_\calA\|_1 \\
        &\le \|\rho^{\otimes q} - \eta_\calA\|_1 + \|T(\eta_\calA) - \tilI_\calA\|_1 \\
        &\le 2\epsilon + \epsilon.
    \end{align}
    
    Consider a purification $\ket{\chi}_{\calA\calB E P}$ of $\chi_{\calA \calB E}$ (and hence of $\chi_\calA)$, where $P$ is some purifying register. As $\|\chi_\calA - \tilI_\calA\|_1 \le 3\epsilon$, \cref{Lem:Purificationsrelatedbyunitary} guarantees the existence of a state $\ket{\tilde{\chi}}_{\calA \calB E P}$ with $\tilde{\chi}_{\calA} = \tilI_\calA$ and 
    \begin{align}
        F(\chi_{\calA\calB E}, \tilde{\chi}_{\calA\calB E}) \ge F(\ketbrab{\chi}_{\calA\calB E P}, \ketbrab{\tilde{\chi}}_{\calA\calB E P}) \ge (1 - 3\epsilon/2)^2 \ge 1 - 3\epsilon.
    \end{align}

    Using that fidelity is invariant under unitaries and the Fuchs-van de Graaf Inequality, we see that (dropping the identity applied to the $\calB$ register for legibility)
    \begin{align}
        \begin{split}
            \| U_T^\dagger \tilde{\chi}_{\calA \calB E} U_T - \rho_{AB}^{\otimes q} \otimes \ketbrab{0^{2qn_A}}_E \|_1 &\le \| U_T^\dagger \tilde{\chi} U_T - U_T^\dagger \chi U_T\|_1 \\ &\qquad+ \|U_T^\dagger \chi U_T - \rho^{\otimes q} \otimes \ketbrab{0^{2qn_A}} \|_1 
        \end{split} \\
        &= \|\tilde{\chi} - \chi\|_1 + 0 \\
        &\le 2\sqrt{3\epsilon} = \sqrt{12\epsilon}.
    \end{align}
    It follows that the measurement in step 2) of the protocol accepts with probability at least $\ccomp \coloneq 1 - 2\sqrt{3 \epsilon}$. By the Gentle Measurement Lemma, the post-measurement state 
    \begin{align}
        \sigma_{\calA \calB E} = \frac{\Pi U_T^\dagger \tilde{\chi}_{\calA \calB E} U_T \Pi}{\Tr\left(\Pi U_T^\dagger \tilde{\chi}_{\calA \calB E} U_T\right)}
    \end{align}
    satisfies
    \begin{align}
        \|\sigma_{\calA \calB} - \rho_{AB}^{\otimes q}\|_1 \le 2 \sqrt[4]{12 \epsilon}, 
    \end{align}
    which immediately implies $\|\tilde{\sigma}_{AB} - \rho_{AB}\|_1 \le 2\sqrt[4]{12 \epsilon}$.
    Choosing $\epsilon$ such that $2 \sqrt[4]{12 \epsilon} \le \delta$ we see that $\tilde{\chi}$ is as claimed.

    We now turn to the second property. Let $\chi_{\calA \calB E}$ be a state with $\|\chi_\calA - \tilI_\calA\|_1 \le \der$ and assume that 
    \begin{align}
        \Tr\left( \Pi \left(U_T^\dagger \otimes I_\calB\right) \chi_{\calA \calB E} \left(U_T \otimes I_\calB \right)\right) \ge \csound
    \end{align}
    By the Gentle Measurement Lemma, $U_T^\dagger \chi U_T $ is close to the post-measurement state $\sigma_{\calA \calB}$:
    \begin{align}
        \left\|(U_T^\dagger \otimes I_\calB) \chi_{\calA \calB E} (U_T\otimes I_B) - \sigma_{\calA \calB} \otimes \ketbrab{0^{2qn_A}}_E\right\|_1 \le 2\sqrt{1 - \csound}.
    \end{align}

    Now, consider $T(\sigma) = \Tr_{\calB E}(U_T (\sigma \otimes \ketbrab{0^{2qn_A}})U_T^\dagger)$ (omitting the $I_\calB$). Using that trace distance is non-increasing under partial trace and invariant under unitaries, we have:
    \begin{align}
        \|T(\sigma) - \tilI_\calA\|_1 &\le \| T(\sigma) - \chi_\calA\|_1 + \|\chi_\calA - \tilI_\calA\|_1 \\
        &\le \| U_T (\sigma_\calA \otimes \ketbrab{0^{2qn_A}}_E) U_T^\dagger - \chi_{\calA E}\|_1 + \der \\
        &\le \| \sigma_\calA \otimes \ketbrab{0^{2qn_A}}_E - U_T^\dagger \chi_{\calA E} U_T \|_1 + \der \\
        &\le 2 \sqrt{1 - \csound} + \der.
    \end{align}
    By the Fannes Inequality:
    \begin{align}
        S(T(\sigma_\calA)) \ge qn_A - \left(2\sqrt{1 - \csound} + \der\right)\left( qn_A - \log\left(2 \sqrt{1 - \csound} + \der\right)\right).
    \end{align}
    Because $T$ is $2^d$-regular, it can increase the entropy by at most $d$. It follows that:
    \begin{align}
        S(\sigma_\calA) \ge qn_A - \left(2\sqrt{1 - \csound} + \der\right)\left( qn_A - \log\left(2 \sqrt{1 - \csound} + \der\right)\right) - d.
    \end{align}
    Recall that $k = q\tau - O\left( n_A + \log \frac{q}{\epsilon} \right) \sqrt{q \log \frac{1}{\epsilon}}$ and that $d = qn_A - k + 2 \log \frac{qn_A}{\epsilon} + O(1)$. Therefore $d = q(n_A - \tau) + O\left( n_A + \log \frac{q}{\epsilon} \right) \sqrt{q \log \frac{1}{\epsilon}}$ and we have:
    \begin{align}
        S(\sigma) \ge q\left( \tau - \left(2\sqrt{1 - \csound} + \der\right)\left(n_A - \frac{1}{q}\log\left(2 \sqrt{1 - \csound} + \der\right)\right) - O(n_A + \log \frac{q}{\epsilon})\sqrt{\frac{1}{q}\log \frac{1}{\epsilon}} \right).
    \end{align}
    We would like $S(\sigma) \ge q(\tau - \delta')$, which requires
    \begin{equation}
        \label{Eq:DeltaRequirement}
        \left(2\sqrt{1 - \csound} + \der\right)\left(n_A - \frac{1}{q}\log\left(2 \sqrt{1 - \csound} + \der\right)\right) + O(n_A + \log \frac{q}{\epsilon})\sqrt{\frac{1}{q}\log \frac{1}{\epsilon}} \le \delta'.
    \end{equation}
    We now need to set $\epsilon, q$ and $\csound$ such that $\ccomp = 1 - 2^{-\Theta(n)}$, $\ccomp - \csound \ge \frac{1}{\poly(n)}$, $ \sqrt[4]{12 \epsilon} \le \delta$ and \cref{Eq:DeltaRequirement} are satisfied (recall that we previously set $\ccomp = 1 - 2\sqrt{3\epsilon})$.

    This can indeed be achieved. By assumption on $\der$, $n_A \der \le \frac{\delta'}{4}$. Setting $\epsilon = 2^{-\Theta(n)}$, makes $\ccomp = 1 - 2^{-\Theta(n)}$ and $\sqrt[4]{12\epsilon} \le \delta$. Next, we set $\csound$ such that $2 \sqrt{1 - \csound} n_A \le \frac{\delta'}{4}$. Note that an $\csound$ with $1 - \csound \ge \frac{1}{\poly(n)}$ suffices, which immediately yields $\ccomp - \csound  \ge \frac{1}{\poly(n)}$. Finally, we set $q$ to a large enough polynomial in $n$ such that \cref{Eq:DeltaRequirement} holds. Note that even though $\epsilon$ is exponentially small, a polynomial $q$ suffices as $\epsilon$ only appears under a logarithm.
    
    Now, consider the average output state $\tilde{\sigma}_{AB} = \frac{1}{q}\sum_{i = 1}^q \sigma_{A_i B_i}$. By concavity and subadditivity of von Neumann entropy we have 
    \begin{align}
        S(\tilde{\sigma}_A) &= S\left(\frac{1}{q} \sum_{i = 1}^q \sigma_{A_i} \right) \\
        &\ge \frac{1}{q} \sum_{i = 1}^q S(\sigma_{A_i}) \\
        &\ge \frac{1}{q} S(\sigma_{\calA}) \\
        &\ge \tau - \delta',
    \end{align}
    completing the proof. 
\end{proof}


\subsection{Channel-to-Hamiltonian construction}
The main tool for our hardness proofs is a construction that allows embedding the range of a quantum channel in the ground space of local Hamiltonian.
\begin{lemma}[Channel-to-Hamiltonian construction]
    \label{Lem:ChanneltoHamiltonian}
    Let $\Phi\colon L(\calH_\Rin) \to L(\calH_\Rout)$ be a quantum channel and let $\nin, \nout$ be the number of qubits of $\calH_\Rin$ and $\calH_\Rout$ respectively. Let $E$ be a register of $\nin + \nout$ qubits and consider a unitary $U_\Phi = V_T \dots V_1$ acting on $\Rin\Rout E$ implementing the Stinespring representation of $\Phi$:
    \begin{align}
        \Phi(\rho) = \Tr_{AE}\left( U_\Phi (\rho_A \otimes \ketbrab{0^{n_B}}_B \otimes \ketbrab{0^{n_A + n_B}}_E) U_\Phi^\dagger \right).
    \end{align}
    Then, there exists some $5$-local Hamiltonian $H_\Phi$, depending on a parameter $L$ and acting on the registers $\Rin,\Rout,E$ and a $T + L$-qubit clock register $C$ such that
    \begin{enumerate}
        \item For any $\sigma_A$, there exists some $\rho_{\Rin\Rout CE}$ (pure if $\sigma$ is pure) such that $\Tr(H \rho) = 0$ and $\|\rho_\Rout - \Phi(\sigma)\|_1 \le \frac{2T}{T + L + 1}$.
        \item If $\rho_{\Rin \Rout CE}$ is such that $\Tr(H \rho) \le b$, then there exists some state $\sigma_\Rin$ (pure if $\rho$ is pure) such that $\|\rho_\Rout - \Phi(\sigma)\|_1 \le 2 \sqrt{\frac{b}{\Delta}} + \frac{2T}{T + L + 1}$. Here $\Delta$ is the spectral gap of $H_\Phi$, which satisfies $\Delta = \Omega\left( (T + L + 1)^{-3}\right)$.
    \end{enumerate}
\end{lemma}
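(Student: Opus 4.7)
The plan is to adapt Kitaev's circuit-to-Hamiltonian construction to the Stinespring unitary $U_\Phi = V_T \cdots V_1$ with the two modifications highlighted in the techniques section: omitting any output penalty, and appending $L$ identity gates $V_{T+1} = \cdots = V_{T+L} = I$ (post-idling). Concretely, I would set $H_\Phi = H_{\mathrm{in}} + H_{\mathrm{prop}}$ on $A,B,E,C$, where $C$ carries $T+L+1$ clock states, $H_{\mathrm{in}}$ penalizes $B$ and $E$ being nonzero at clock time $0$ (but imposes nothing on $A$, so every input $|\psi\rangle_A$ is allowed), and $H_{\mathrm{prop}}$ is the standard propagation Hamiltonian with one term per time step. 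With a unary clock encoding, each propagation term involves a $2$-qubit gate and a pair of adjacent clock qubits, yielding $5$-locality overall. A standard Kitaev-style analysis then shows that the zero-energy eigenspace is spanned by the history states
\[
  |\psi_{\mathrm{hist}}(\psi)\rangle = \frac{1}{\sqrt{T+L+1}} \sum_{t=0}^{T+L} V_t \cdots V_1 \, |\psi\rangle_A |0^{n_B}\rangle_B |0^{n_A+n_B}\rangle_E |t\rangle_C,
\]
one per input $|\psi\rangle_A$, and that $H_\Phi \succeq \Delta (I - \Pi_{\mathrm{hist}})$ with $\Delta = \Omega((T+L+1)^{-3})$, where $\Pi_{\mathrm{hist}}$ is the projector onto the null space --- the gap follows from Kitaev's standard $\Omega(1/N^3)$ bound with $N = T+L+1$, since the appended idle gates only extend the walk length.

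For the first property, given $\sigma_A$ I would take the (convex combination of) history state(s) corresponding to $\sigma$ and trace out $A, C, E$. Since clock states at distinct times are orthogonal, the contribution from each $t \ge T$ is exactly $\tfrac{1}{T+L+1} \Phi(\sigma)$, while the remaining $T$ time steps contribute a positive operator on $B$ of trace $\tfrac{T}{T+L+1}$. A triangle inequality then yields
\[
  \left\|\rho_B - \Phi(\sigma)\right\|_1 \le \left\| \tfrac{L+1}{T+L+1} \Phi(\sigma) - \Phi(\sigma) \right\|_1 + \tfrac{T}{T+L+1} = \tfrac{2T}{T+L+1}.
\]
Purity is preserved when $\sigma$ is pure because the history state of a pure input is itself pure.

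For the second property, given $\rho$ with $\Tr(H_\Phi \rho) \le \beta$, the gap estimate yields $\Tr(\Pi_{\mathrm{hist}} \rho) \ge 1 - \beta/\Delta$, so the Gentle Measurement Lemma (\cref{Lem:GentleMeasurement}) produces a history-subspace state $\rho_{\mathrm{hist}} = \Pi_{\mathrm{hist}} \rho \Pi_{\mathrm{hist}} / \Tr(\Pi_{\mathrm{hist}} \rho)$ with $\|\rho - \rho_{\mathrm{hist}}\|_1 \le 2\sqrt{\beta/\Delta}$. The map $|\psi\rangle_A \mapsto |\psi_{\mathrm{hist}}(\psi)\rangle$ is a linear isometry onto the null space, so $\rho_{\mathrm{hist}}$ pulls back to a state $\sigma_A$ (pure when $\rho$ is pure); applying the first property to $\sigma_A$ and combining with monotonicity of trace distance under partial trace gives $\|\rho_B - \Phi(\sigma)\|_1 \le 2\sqrt{\beta/\Delta} + \tfrac{2T}{T+L+1}$. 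The subtle points I expect to take most care are: rigorously verifying that, without an output penalty, no extra vectors sneak into the null space (one must check that $H_{\mathrm{in}}$ combined with the $H_{\mathrm{prop}}$ terms still constrains $B$ and $E$ to $|0\rangle$ at $t=0$ and thereby propagates correctly at all later times, including the idle portion); and confirming that Kitaev's geometric-lemma argument for the $\Omega(1/N^3)$ gap transfers cleanly when the unitary being ``walked'' contains a long idle tail. Both should go through by standard arguments, but they are where a careful reader will want the details spelled out.
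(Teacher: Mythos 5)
Your proposal is correct and follows essentially the same route as the paper: Kitaev's construction on $U_\Phi$ with no output penalty and $L$ idle steps, history states spanning the null space, the $\Omega((T+L+1)^{-3})$ gap (the paper cites \cite{GK12} for this), the triangle-inequality bound for property 1, and Gentle Measurement plus pull-back through the history-state isometry for property 2. The only slip is that your explicit Hamiltonian $H_{\mathrm{in}}+H_{\mathrm{prop}}$ omits the clock-consistency term $H_{\mathrm{clock}}$ penalizing illegal unary clock configurations, which is needed for the claimed null-space characterization; the paper includes it.
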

\begin{proof}
    We will use Kitaev's circuit-to-Hamiltonian construction applied to the unitary $U_\Phi$ with two modifications. We use no output penalty and post idle for $L$ time steps. The Hamiltonian $H_\Phi$ will be given by
    \begin{align}
        H_\Phi &= H_{\textup{in}} + H_{\textup{prop}} + H_{\textup{clock}} \\
        H_{\textup{in}} &= \left(\sum_{i = 1}^{\nout} \ketbrab{1}_{\Rout_{,i}}  + \sum_{j = 1}^{\nin + \nout} \ketbrab{1}_{E_j}\right) \otimes \ketbrab{0}_C \\ 
        H_{\textup{prop}} &= \sum_{t = 0}^{T + L -1} -V_{t+1} \otimes \ketbra{t+1}{t}_C - V_{t+1}^{\dagger} \otimes \ketbra{t}{t+1}_C + I \otimes \ketbrab{t}_C + I \otimes \ketbrab{t+1}_C \\
        H_{\textup{clock}} &= I_{\Rin\Rout E} \otimes \sum_{i = 1}^{T + L - 1} \ketbrab{01}_{C_iC_{i+1}},
    \end{align}
    where $V_t = I$ for $t > T$.

    The ground state of $H_\Phi$ will now consist of \emph{history states} of the form:
    \begin{align}
        \ket{\psi_{\textup{hist}}} &= \frac{1}{\sqrt{T + L + 1}}\sum_{t = 0}^{T+L} V_t \dots V_1 \ket{\psi}_\Rin \ket{0^{\nout}}_\Rout\ket{0^{\nin + \nout}}_E \ket{t}_C\\
        \label{Eq:HistoryState}
        \begin{split}
            &= \frac{1}{\sqrt{T + L +1}}\sum_{t = 0}^{T-1} V_t \dots V_1 \ket{\psi}_\Rin \ket{0^\nout}_\Rout\ket{0^{\nin + \nout}}_E \ket{t}_C\\ &\qquad+ U_\Phi\ket{\psi}_\Rin\ket{0^\nout}_\Rout\ket{0^{\nin + \nout}}_E \left(\frac{1}{\sqrt{T + L + 1}} \sum_{t = T}^{T+L}\ket{t}_C\right),
        \end{split}
    \end{align}
    where $\ket{t}_C = \ket{1^t0^{T + L - t}}$ encodes the time step $t$ in unary.
    As we use no output penalty, any history state will have zero energy and all zero-energy states will be history states. It is shown in \cite[Lemma~23]{GK12} that Kitaev Hamiltonians without output penalty, and in particular $H_\Phi$, have spectral gap $\Delta \ge \Omega((T + L + 1)^{-3})$.\footnote{Note that our use of $\Delta$ differs from that of \cite{GK12}. We use it for the spectral gap of $H_\Phi$ whereas \cite{GK12} use it for a scaling constant.}

    To prove property 1, we start by considering pure states. Let $\ket{\psi}_\Rin$ be an arbitrary pure state. The reduced state on the $\Rout$ register of $\ket{\psi_{hist}}$ the history state on input $\ket{\psi}_\Rin$ is given by\footnote{Note that there are no cross-terms as all terms in the sum are orthogonal in the $C$ register.}
    \begin{align}
        \eta_\psi \coloneq \Tr_{\Rin CE}\left( \ketbrab{\psi_{hist}}_{\Rin \Rout CE}\right) = \frac{L + 1}{T + L + 1} \Phi(\psi) + \frac{1}{T + L + 1} \sum_{t = 0}^{T - 1} \rho'_{t, \psi},
    \end{align}
    where $\rho'_{t, \psi} = \Tr_{\Rin CE} \left(V_t \dots V_1 \ket{\psi}_A\ket{0^{\nout}}_\Rout \ket{0^{\nin + \nout}}_E \ket{t}_C \right)$ are error terms.

    Now consider an arbitrary mixed state $\rho_\Rin = \sum_x a_x \ketbrab{\psi_x}_\Rin$. Write $\ket{\psi_{hist,x}}$ for the history state on input $\ket{\psi_x}$. By linearity of $\Phi$ and the partial trace we have:
    \begin{align}
        \eta_\rho &\coloneq \Tr_{\Rin CE}\left( \sum_x a_x \ketbrab{\psi_{hist,x}}\right) \\
        &= \frac{L + 1}{T + L + 1} \sum_x a_x \Phi(\psi_x) + \frac{1}{T + L + 1} \sum_x a_x \sum_{t = 0}^{T - 1} \rho'_{t, x} \\
        &= \frac{L + 1}{T + L + 1} \Phi(\rho) + \frac{1}{T + L + 1} \sum_{t = 0}^{T - 1} \rho''_{t}. \label{Eq:TraceMixtureHistoryState}
    \end{align}
    Here the $\rho''_{t} = \sum_x a_x \rho'_{t, x}$ are again error terms. The trace distance between $\eta_\rho$ and $\Phi(\rho)$ can be upper bounded as
    \begin{align}
        \|\Phi(\rho) - \eta_\rho\|_1 &= \left\| \frac{T}{T + L + 1} \Phi(\rho) - \frac{1}{T + L + 1} \sum_{t = 0}^{T - 1} \rho''_{t} \right\|_1        \\
        &= \frac{T}{T + L + 1} \left\| \Phi(\rho) - \frac{1}{T}\sum_{t = 0}^{T - 1} \rho''_t \right\|_1 \\
        &\le \frac{2T}{T + L + 1}.
    \end{align}

    To prove the second property, let $\rho_{\Rin \Rout CE}$ be some state with $\Tr(H \rho) \le b$ and let $\Pi$ be the projector on the span of the history states, or equivalently, on the zero-energy space of $H_\Phi$. Note that $b \ge \Tr(H \rho) \ge \Delta\Tr((I - \Pi)\rho) = \Delta\left(1 - \Tr(\Pi \rho)\right)$. Hence $\Tr(\Pi \rho) \ge 1 - \frac{b}{\Delta}$. By the Gentle Measurement Lemma, the state $\eta_{\Rin \Rout CE} = \frac{\Pi \rho \Pi}{\Tr(\Pi \rho)}$ now satisfies $\|\rho - \eta\|_1 \le 2 \sqrt{\frac{b}{\Delta}}$. Furthermore, $\eta$ is a mixture of history states. That is, there are states $\ket{\psi_x}_\Rin$ such that $\eta = \sum_x a_x \ketbrab{\psi_{hist, x}}$ where $\ket{\psi_{hist, x}}$ is the history state on input $\ket{\psi_x}$. Note that if $\rho$ is pure, then $\eta$ is pure and a sum over a single item suffices. As in \cref{Eq:TraceMixtureHistoryState} we have 
    \begin{align}
        \eta_{\Rout} = \frac{L + 1}{T + L + 1}\Phi(\sigma) + \frac{1}{T + L + 1}\sum_{t = 0}^{T - 1} \sigma'_t,
    \end{align}
    where $\sigma = \sum_x a_x \ketbrab{\psi_x}$ is in particular pure if $\rho$ is pure, and the $\sigma'_t$ are error terms. The distance between $\rho_\Rout$ and $\Phi(\sigma)$ is given by:
    \begin{align}
        \|\rho_\Rout - \Phi(\sigma)\|_1 &\le \|\rho_\Rout - \eta_\Rout\|_1 + \|\eta_\Rout - \Phi(\sigma)\|_1 \\
        &\le 2 \sqrt{\frac{b}{\Delta}} + \frac{2T}{T + L + 1},
    \end{align}
    completing the proof.
\end{proof}


\section{The High Entropy Ground State problem}\label{scn:heles}
Recall the definition of $\HELES$. 
\defHELES*
We now show that $\HELES$ is complete for $\qqQAM$ as claimed in \cref{thm:heles}
Treatment of physical Hamiltonians is postponed to \cref{sec:physicalHamiltonians}. We begin by showing containment.


\begin{lemma}\label{Lem:HELEScontainment}
    For all constant $k$, $\HELES \in \qqQAM$. 
\end{lemma}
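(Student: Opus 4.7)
The plan is to build a $\qqQAM$ verifier for $\HELES$ by layering a direct energy test on top of the entropy verification subroutine of \Cref{Lem:EntropyVerification}. Adopting the view of $\qqQAM$ as $\QMA$ with a proof $\chi$ whose reduced state on the verifier's EPR halves satisfies $\chi_\calA = \tilI_\calA$, the verifier will demand a proof $\chi_{\calA\calB E}$ on registers $\calA = A_1\cdots A_q$ and $\calB = B_1\cdots B_q$ (together with an extractor ancilla $E$), where $q = \poly(n)$ copies of the Hamiltonian's $A$ and $B$ registers are used. The protocol is: (i) invoke \Cref{Lem:EntropyVerification} with entropy threshold $\tau := (s+t)/2$ and small enough slacks $\delta, \delta'$, obtaining on success an output state $\sigma_{\calA\calB}$; (ii) pick $i \in [q]$ uniformly at random and measure the Hamiltonian $H$ on the pair $A_iB_i$, accepting iff the outcome is at most $(\alpha+\beta)/2$. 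The parameters are chosen so that the combined completeness--soundness gap is $\ge 1/\poly(n)$, which is then boosted to $(2/3, 1/3)$ by parallel repetition as in \cite{KGN19}.

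For completeness, given a pure state $\ket{\psi}_{AB}$ certifying a YES instance, property~1 of \Cref{Lem:EntropyVerification} applied to $\rho = \ketbrab{\psi}$ supplies a proof $\chi$ making step~(i) accept with probability at least $c$ and leaving $\sigma_{\calA\calB}$ within trace distance $\delta$ of $\psi_{AB}^{\otimes q}$. Choosing $\delta \ll (\beta-\alpha)/\|H\|_\infty$, the expected energy observed in step~(ii) lies within $o(1)$ of $\langle\psi|H|\psi\rangle \le \alpha$, well below $(\alpha+\beta)/2$, so the verifier accepts with overwhelming probability.

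The main obstacle is soundness. If the verifier accepts with probability at least $2/3$, then step~(i) accepts with probability at least some $s^* < c$, and property~2 of \Cref{Lem:EntropyVerification} gives $S(\tilde\sigma_A) \ge \tau - \delta' > t$ for the averaged reduced state $\tilde\sigma_{AB} := (1/q)\sum_i \sigma_{A_iB_i}$, while the expected measured energy forces $\Tr(H\tilde\sigma) \le (\alpha+\beta)/2 + o(1) < \beta$. The hard step is then to extract, from this mixed-state witness, a \emph{pure} state $\psi_{AB}$ contradicting the NO hypothesis. Here I would lean on the one-directional subadditivity emphasized in the paper's overview: the bound $S(\sigma_\calA) \ge q(\tau - \delta')$ nearly saturates $S(\sigma_\calA) \le \sum_i S(\sigma_{A_i})$ across the $q$ copies, and so forces many single-copy reduced states $\sigma_{A_i}$ to carry near-maximal entropy while also having low energy (by averaging). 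Combining this with the constraint $\chi_\calA = \tilI_\calA$, which tightly restricts the allowed purifications of $\chi$, a Fannes-style continuity argument should then deliver a pure state with $\langle H\rangle \le \alpha$ and $S(\psi_A) \ge s$. Parameters $q, \tau, \delta, \delta'$ will need to be tuned so the accumulated slacks stay well below both $s-t$ and $\beta - \alpha$.
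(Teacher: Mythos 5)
Your protocol and completeness argument match the paper's: run the entropy-verification subroutine of \Cref{Lem:EntropyVerification} on a $q$-copy proof with $\chi_\calA=\tilI_\calA$, then estimate the energy of the (averaged) output state and accept iff it is below $\tfrac{\alpha+\beta}{2}$. The divergence, and the genuine gap, is in your final soundness step. You correctly arrive at a \emph{mixed} state $\tilde\sigma_{AB}$ with $S(\tilde\sigma_A)>t$ and $\Tr(H\tilde\sigma)<\beta$, and then propose to ``extract a pure state contradicting the NO hypothesis'' via subadditivity plus a Fannes-style continuity argument. This extraction cannot work in general: a mixed state with high reduced entropy and low energy need not be close to, nor decomposable into, any pure state with high \emph{entanglement} entropy and low energy. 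For instance, $\tfrac12(\ketbrab{00}+\ketbrab{11})$ has maximal reduced entropy on $A$ yet is an exact mixture of product states, each with zero entanglement entropy; Fannes continuity does not bridge this because the pure states in the decomposition are far from $\tilde\sigma_{AB}$ in trace distance, and the constraint $\chi_\calA=\tilI_\calA$ constrains purifications on $\calA\calB E\calP$, not on a single $AB$ pair. (A secondary slip: to contradict the NO case you need a state with energy $<\beta$ and entropy $>t$, not the YES thresholds $\alpha$ and $s$ you aim for, which the accumulated slacks would not permit anyway.)

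The paper closes this hole not by extracting a pure witness but by proving the (formally stronger) \emph{mixed-state} variant of $\HELES$: soundness is argued under the assumption that every \emph{mixed} $\rho_{AB}$ with $S(\rho_A)>t$ has $\Tr(H\rho)\ge\beta$, so the state $\tilde\sigma_{AB}$ you already obtained directly forces the energy test to reject with high probability, and no pure-state extraction is needed (see the footnote in the paper's proof of \Cref{Lem:HELEScontainment} relating the two variants). If you want to salvage your write-up, replace the extraction step by this reformulation of the NO promise; the rest of your argument (choice of $\tau$ between $t$ and $s$, Hoeffding/averaging for the energy estimate, and amplification of the $1/\poly(n)$ gap by parallel repetition) then goes through as in the paper.
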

\begin{proof}
    We will show the slightly stronger statement that $\HELES \in \qqQAM$, even if we replace the pure state in the YES case of the definition of $\HELES$ with a mixed state.\footnote{Note that the YES case of the pure state variant is a subset of the YES case of the mixed state variant. Hence the mixed state variant is at least as hard as the pure state variant. In fact, our results show that they are equally hard.}

    Let $(H_{AB}, \El, b, \fraks, \frakt)$ be a $\HELES$ instance. We use the entropy verification protocol from \cref{Lem:EntropyVerification} to ensure that the prover sends a state with high entropy on the $A$ register. Let $q, \calA, \calB$ and $E$ be as in \cref{Lem:EntropyVerification}. We split the $E = E_LE_R$ register in 2 halves of $qn_A$ qubits each.
    The $\calA$ register will contain the verifier's halves of the EPR states, and the prover's halves will be in the $E_L$ register. The prover will act on the $\calB$ and $E$ registers and send them to the verifier who then has access to the entire proof state $\chi_{\calA \calB E}$. 

    To measure the energy, we will need to use parallel repetition. We thus consider $N$ copies $\calA_1, \dots, \calA_N, \calB_1, \dots, \calB_N, E_1, \dots, E_N$ of all the registers. We will write the total proof state as $\chi^{(1\dots N)}_{\calA_1, \calB_1, E_1, \dots, \calA_N, \calB_N, E_N}$ (sometimes omitting the registers for brevity), where $\chi^{(i)}_{\calA_i,\calB_i,E_i}$ denotes the reduced state on the $i$-th copies of the registers, which is the state we will use in the $i$-th repetition. We would like $\chi^{(1\dots N)}$ to be neatly in product, i.e. $\chi^{(1\dots N)} = \chi^{(1)} \otimes \dots \otimes \chi^{(N)}$, but in the NO case, there can be entanglement between the states used in the different repetitions. 

    The protocol for verifying $\HELES$ in $\qqQAM$ looks like this:
    \begin{enumerate}
        \item For all $i$, the verifier prepares EPR pairs across the $\calA_i$ and $E_{i,L}$ (the $i$-th copy of the left half of the $E$ register) registers. They then send all $E_{i,L}$ registers to the prover.
        \item The prover now acts on all $\calB_i$ and $E_i$ registers, and potentially their own private register $\calP$. They send the $\calB_i$ and $E_i$ registers to the verifier, who then has access to the entire proof state $\chi^{(1\dots N)}_{\calA_1, \calB_1, E_1, \dots, \calA_N, \calB_N, E_N}$, which is potentially entangled with the $\calP$ register. Note that the prover could not act on the $\calA_i$ registers, so the verifier is sure that $\chi_{\calA_1,\dots,\calA_N} = \tilI_{\calA_1, \dots \calA_N} = \tilI_{\calA_N} \otimes \dots \otimes \tilI_{\calA_N}$.
        \item For every set of registers $\calA_k\calB_k E_k$, the verifier does the following. First they run the entropy verification test with $\tau = \fraks, \delta = \frac{b - \El}{4m}$ and $\delta' < \fraks - \frakt$. If the test passes, they pick a uniformly random $A_i B_i$ from within these registers and a uniformly random local term $H_j$. They then measure the energy of the $\sigma^{(k)}_{A_iB_i}$ with respect to the Hamiltonian $H_j$. Let the random variable $\calE_k$ denote the outcome of this measurement. In the case that the entropy verification fails, the verifier wants to reject immediately. We simulate this by setting $\calE_k = \frac{b}{m(1-\csound)}$ and will see later that this indeed results in rejection.
        \item The verifier accepts if $\frac{1}{N}\sum_{k = 1}^{N} \calE_k < \frac{\El + b}{2m}$ and rejects otherwise.
    \end{enumerate} 
    
    We will now prove that in the YES case this protocol accepts with probability $1 - 2^{-\Omega(n)}$, whereas in the NO case it accepts with probability at most $1 - \frac{1}{\poly}$.

    \emph{Completeness.} Let $\rho_{AB}$ be such that $\Tr(H\rho) \le \El$ and $S(\rho_A) \ge \fraks$ and let $\hat{\chi}_{\calA \calB E}$ be such that the entropy verification protocol accepts with probability at least $\ccomp$, in which case the output state $\sigma_{\calA \calB}$ satisfies $\|\sigma_{\calA \calB} - \rho_{AB}^{\otimes q}\|_1 \le \delta$ and $\tilde{\sigma}_{AB} - \rho\|_1 \le \delta$. This is possible by property 1 of \cref{Lem:EntropyVerification}. The prover sends $N$ copies of the state $\hat{\chi}_{\calA \calB E}$, that is, the total proof state $\chi^{(1\dots N)}_{\calA_1, \calB_1, E_1, \dots, \calA_N, \calB_N, E_N} = \hat{\chi}_{\calA_1\calB_1E_1} \otimes \dots \otimes \hat{\chi}_{\calA_N\calB_N E_N}$.

    By a union bound, the probability that one of the runs of the entropy verification protocol fails is inverse exponential at worst (recall that $\ccomp = 1 - 2^{-\Theta(n)}$). Assuming this does not happen we have
    \begin{align}
        \EE(\calE_k) &= \frac{1}{mq}\sum_{i = 1}^{q} \sum_{j = 1}^{m} \Tr(H_j \sigma^{(k)}_{A_iB_i}) \\
        &= \frac{1}{m} \Tr\left(\sum_{j = 1}^m H_j \frac{1}{q}\sum_{i = 1}^{q} \sigma^{(k)}_{A_iB_i}\right) \\
        &= \frac{1}{m} \Tr(H \tilde{\sigma}^{(k)}_{AB}) \\
        &= \frac{1}{m}\Tr(H \rho) + \frac{1}{m}\Tr(H (\tilde{\sigma} - \rho)) \\
        &\le \frac{1}{m}\El + \frac{1}{m}\|H\|_\infty\|\tilde{\sigma} - \rho\|_1 \\
        &\le \frac{1}{m}\left(\El + \frac{b - \El}{4}\right). \label{eq:HolderApplication}
    \end{align} 
    Here we used the H\"older inequality and the fact that $m \ge \|H\|_\infty$. For completeness, we can assume that the prover cooperates and hence that the random variables $\calE_k$ are independent. Using a Hoeffding bound, we now get
    \begin{align}
        \Pr\left[ \frac{1}{N}\sum_{k = 1}^N \calE_k \ge \frac{\El + b}{2m}\right] & \le \exp\left(-\left(\frac{b - \El}{4m}\right)^2 N\right).
    \end{align}
    Setting $N$ to be a sufficiently large polynomial in $n$, the acceptance probability in the YES case thus becomes exponentially close to $1$.

    \emph{Soundness.} Let $\Ent_k$ denote the event that $S(\tilde{\sigma}^{(k)}_{A}) \ge \frakt$, and let $\acc_k$ and $\rej_k$ be the events where the entropy verification on input $\chi^{(k)}_{\calA_k\calB_kE_k}$ accepts and rejects, respectively. Consider now the expected value of $\calE_k$:
    \begin{align}
        \EE(\calE_k) &= \Pr(\Ent_k)\EE(\calE_k|\Ent_k) + \Pr(\neg \Ent_k)\EE(\calE_k|\neg\Ent_k)\\
        \EE(\calE_k|\neg\Ent_k) &= \Pr(\acc_k|\neg\Ent_k)\EE(\calE_k|\neg\Ent_k \wedge \acc_k) + \Pr(\rej_k|\neg\Ent_k)\EE(\calE_k|\neg\Ent_k \wedge \rej_k).
    \end{align}
    Note that by \cref{Lem:EntropyVerification} Property 2, $\Pr(\rej_k|\neg \Ent_k) \ge 1-\csound$. Furthermore, as $H_j \succeq 0$ for all $j$, $\EE(\calE_k|\neg\Ent_k \wedge \acc_k) \ge 0$ and $\EE(\calE_k|\neg\Ent_k \wedge \rej_k) = \frac{b}{m(1 - \csound)}$. Together this means that $\EE(\calE_k|\neg\Ent_k) \ge \frac{b}{m}$.

    If $S(\tilde{\sigma}^{(k)}_{A}) \ge \frakt$, then by the conditions of the NO case, $\Tr(H\tilde{\sigma}^{(k)}_{AB}) \ge b$. This means that:
    \begin{align}
        \EE(\calE_k|\Ent_k) &= \frac{1}{mq}\sum_{i = 1}^{q} \sum_{j = 1}^{m} \Tr(H_j \sigma^{(k)}_{A_iB_i}) \\
        &= \frac{1}{m} \Tr\left(\sum_{j = 1}^m H_j \frac{1}{q}\sum_{i = 1}^{q} \sigma^{(k)}_{A_iB_i}\right) \\
        &= \frac{1}{m} \Tr(H \tilde{\sigma}^{(k)}_{AB}) \\
        &\ge \frac{b}{m}.
    \end{align}
    It follows that $\EE(\calE_k) \ge \frac{b}{m}$ when we are in the NO case. Let $M = \max\left\{1, \frac{b}{m(1 - \csound)}\right\}$ and consider now the random variable $Y = M - \frac{1}{N}\sum_{k = 1}^{N} \calE_k$. By construction, $Y \ge 0$. Furthermore, $\EE(Y) \le M - \frac{b}{m}$. Markov's inequality now yields
    \begin{align}
        \Pr\left(\frac{1}{N}\sum_{k = 1}^N \calE_k \le \frac{\El + b}{2m}\right) &= \Pr\left(Y \ge M - \frac{\El + b}{2m}\right) \\
        &\le \frac{M - \frac{b}{m}}{M - \frac{\El + b}{2m}} \\
        &\le 1 - \frac{b - \El}{2mM - b + (b - \El)} \\
        &\le 1 - \frac{1}{\poly}, \label{eq:markov}
    \end{align}
    where in the last step we used that $1 \le M \le \poly(n)$, $b - \El \ge \frac{1}{\poly}$ and that $b \le m$. 
    
    The acceptance probabilities in the YES and NO cases are thus inverse polynomially separated. (Exponentially close to 1 in the YES case and polynomially bounded away from 1 in the NO case.) This completes the proof.

\end{proof}


\begin{lemma}
    \label{Lem:HELEShard}
    For $k \ge 5$ and $n_B = \poly(n_A)$, $\HELES$ is $\qqQAM$-hard.
\end{lemma}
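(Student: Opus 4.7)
The plan is to reduce from $\MaxOutQEA$ (which is $\qqQAM$-complete) to $\HELES$ via the channel-to-Hamiltonian construction (\Cref{Lem:ChanneltoHamiltonian}). A preliminary observation is that $\MaxOutQEA$ remains $\qqQAM$-hard when restricted to pure inputs: purifying any mixed-input optimum for $\Phi$ into an auxiliary register and then lifting $\Phi$ to the channel that ignores the auxiliary register preserves both YES/NO thresholds $t\pm 1$. I will henceforth assume the $\MaxOutQEA$ input channel $\Phi: L(\calH_\calA)\to L(\calH_\calB)$ admits a pure optimum. I then apply \Cref{Lem:ChanneltoHamiltonian} to $\Phi$ with a polynomial post-idling parameter $L$ to produce a $5$-local Hamiltonian $H_\Phi$ on registers $\calA\calB EC$ whose spectral gap is $\Delta=\Omega((T+L+1)^{-3})=1/\poly(n)$, where $T$ is the Stinespring circuit length of $\Phi$. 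The $\HELES$ bipartition is taken to be $A := \calB$ (the channel output) against $B := \calA EC$, and I set $\alpha := 0$.

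For completeness, a YES witness $\ket{\psi}_\calA$ with $S(\Phi(\psi))\ge t+1$ yields, via Property 1 of \Cref{Lem:ChanneltoHamiltonian}, a pure zero-energy history state whose $\calB$-marginal is within trace distance $\tfrac{2T}{T+L+1}$ of $\Phi(\psi)$. Picking $L$ a sufficiently large polynomial in $T$ and applying the Fannes inequality (\Cref{Lem:Fannes}) with $\log d = n_\calB = \poly(n)$, the entropy on $A$ exceeds $t+1-\delta_F$ for some $\delta_F = 1/\poly(n)$, so I set $\fraks := t+1-\delta_F$.

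For soundness I argue contrapositively: suppose a pure $\ket{\phi}$ has $\Tr(H_\Phi\phi)\le \beta$ and $S(\phi_\calB)\ge \frakt$. Property 2 of \Cref{Lem:ChanneltoHamiltonian} supplies a pure $\ket{\sigma}_\calA$ with $\|\phi_\calB - \Phi(\sigma)\|_1 \le 2\sqrt{\beta/\Delta}+\tfrac{2T}{T+L+1}$. Choosing $\beta := \epsilon^2\Delta/4$ for small $\epsilon = 1/\poly(n)$ makes this distance $1/\poly(n)$ while preserving $\beta-\alpha = 1/\poly(n)$. A second Fannes bound then yields $S(\Phi(\sigma))\ge \frakt-\delta_F'$, and taking $\frakt := t-1+\delta_F'$ forces $S(\Phi(\sigma))>t-1$, contradicting the $\MaxOutQEA$ NO promise.

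Both promise gaps are satisfied: $\fraks-\frakt = 2-\delta_F-\delta_F' = \Omega(1)$ and $\beta-\alpha = 1/\poly(n)$. The one point requiring care will be coordinating the polynomials $L$, $\epsilon$, $\delta_F$, $\delta_F'$ so that each inverse-polynomial correction sits well inside the constant-$2$ slack coming from $\MaxOutQEA$ and inside $\beta-\alpha$; this is routine parameter bookkeeping with no fundamental barrier. Since $H_\Phi$ is $5$-local, the reduction establishes $\qqQAM$-hardness of $\HELES$ for every $k\ge 5$.
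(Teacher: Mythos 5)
Your proposal is essentially the paper's own proof: the same reduction from $\MaxOutQEA$, the same WLOG purification of the channel input, the same use of \Cref{Lem:ChanneltoHamiltonian} with $\alpha=0$ and $\beta$ an inverse-polynomial fraction of $\Delta$, and the same Fannes-inequality transfer of entropy between the history state's output marginal and $\Phi$'s output. The only (trivially fixable) wrinkle is that setting $\frakt = t-1+\delta_F'$ yields only $S(\Phi(\sigma))\ge t-1$, which does not strictly contradict the NO promise $S(\Phi(\rho))\le t-1$; leave a little more slack (the paper uses $\tau+3/4$ and $\tau+1/4$) and the argument closes.
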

\begin{proof}
    We show a poly-time many-one reduction from $\MaxOutQEA$. Let $(\Phi,\tau)$ be a $\MaxOutQEA$ instance. Without loss of generality we can assume that for all $\sigma$, there is a pure state $\ket{\psi}$ with $\Phi(\sigma) = \Phi(\psi)$.\footnote{To see this, modify $\Phi\colon L(\calH_\Rin) \to L(\calH_\Rout)$ by adding another copy of the input register that is immediately traced out. Taking $\ket{\psi}$ to be some purification of $\sigma$ then suffices.} Let $H_\Phi$ be the Hamiltonian acting on $\Rin, \Rout, C, E$ resulting from \cref{Lem:ChanneltoHamiltonian}, with $L$ to be determined later. The output of the reduction will be $(H_\Phi, \El, b, s, t)$, where the cut is between the $\Rout \eqcolon A$ and $\Rin CE \eqcolon B$ registers. The parameters $\El, b, s$ and $t$ will be determined later.

    For completeness, suppose $(\Phi, \tau)$ is a YES instance. That is, suppose there is some $\sigma_\Rin$ with $S\left(\Phi(\sigma)\right) \ge \tau + 1$. Let $\ket{\psi}_\Rin$ be a pure state such that $\Phi(\psi) = \Phi(\sigma)$. By property 1 of \cref{Lem:ChanneltoHamiltonian} there exists a zero-energy pure state $\ket{\phi}_{\Rin \Rout CD} = \ket{\phi}_{AB}$ such that $\|\phi_B - \Phi(\psi)\| \le \frac{2T}{T + L + 1}$. The Fannes Inequality now gives
    \begin{align}
        S(\phi_B) &\ge S\left(\Phi(\psi)\right) - \frac{2T}{T + L + 1}\left( n_B - \log \frac{2T}{T + L + 1}\right)
    \end{align}
    Setting $\El = 0$, $s = \tau + \frac{3}{4}$ and choosing $L$ such that
    \begin{equation}
        \label{Eq:Lreq1}
        \frac{2T}{T + L + 1}\left( n_B - \log \frac{2T}{T + L + 1}\right) \le \frac{1}{4},
    \end{equation} 
    we see that $\ket{\phi}_{AB}$ satisfies $S(\phi_B) \ge s$ and $\braketb{\phi}{H} \le \El$ as desired.

    For soundness, suppose that for all $\sigma_A$, $S\left(\Phi(\sigma)\right) \le \tau$ and let $\rho_{\Rin \Rout CE} = \rho_{AB}$ be such that $\Tr(H \rho_{AB}) \le b$ ($b$ will be determined later). By property 2 of \cref{Lem:ChanneltoHamiltonian}, there is a state $\sigma_\Rin$ such that 
    \begin{align}
        D \coloneq \|\rho_B - \Phi(\sigma_\Rin)\|_1 \le 2 \sqrt{\frac{b}{\Delta}} + \frac{2T}{T + L + 1}.
    \end{align}
    From the Fannes Inequality it follows that 
    \begin{align}
        S(\rho_B) &\le S(\Phi(\sigma)) + D(n_B - \log D)\\
        &\le \tau + \left( \sqrt{\frac{b}{\Delta}} + \frac{2T}{T + L + 1}\right) \left(n_B - \log\left( \sqrt{\frac{b}{\Delta}} + \frac{2T}{T + L + 1}\right)\right).
    \end{align}
    Choosing $t = \tau + \frac{1}{4}$, $b = \frac{\Delta}{L} = \Omega(L^{-4})$ and then setting $L$ large enough so that 
    \begin{align}
        \label{Eq:Lreq2}
        \left( \sqrt{\frac{b}{\Delta}} + \frac{2T}{T + L + 1}\right) \left(n_B - \log\left( \sqrt{\frac{b}{\Delta}} + \frac{2T}{T + L + 1}\right)\right) \le \frac{1}{4}
    \end{align}
    we see that $S(\rho_B) \le t$ as desired. Note that \cref{Eq:Lreq1,Eq:Lreq2} can be satisfied by an $L$ that is polynomially large in $n = n_A + n_B$.

    Close inspection of \cite{KGN19} shows that $\qqQAM$-hard $\MaxOutQEA$ instances can be assumed to map $n$ qubits to $n$ qubits. It follows that the size of the $A$ register is polynomially larger than that of the $B$ register.
\end{proof}


\section{Free energy and Gibbs states}\label{scn:FEA}
\begin{definition}[Free energy]
    The free energy of a Hamiltonian at inverse temperature $\bt$ is defined as:
    \begin{equation}
        \label{eq:FEvariational}
        \calF(H) = \min_{\rho} f(\rho) \quad \text{where} \quad f(\rho) = \Tr(H \rho) - \frac{S(\rho)}{\bt}. 
    \end{equation}
    Equivalently, the free energy is also equal to
    \begin{equation}
        \calF(H) = -\frac{\log \calZ}{\bt}.
    \end{equation} 
    Here $\calZ = \Tr(e^{-\bt H})$ is also known as the \emph{partition function}.
\end{definition}

The state that minimizes the free energy is known as the \emph{Gibbs state} and is given by $\rho_\bt = \frac{1}{\calZ} e^{-\bt H}$.

\defFEA*

We now formally prove the $\qqQAM$ containment of $\FEA$.
\thmFEA*
\begin{proof}
    Let $(H, \bt, a, b)$ be an $\FEA$ instance, where $H = \sum_{j = 1}^m H_j$ is a $k$-local Hamiltonian on an $n$-qubit register $A$. 
    

    Let $\calA = A_1 \dots A_q$ and $E$ be as in \cref{Lem:EntropyVerification}. We do not use the $\calB$ register and split $E = E_LE_R$ into two registers of $qn$ qubits each.

    The verifier's actions will be similar as that in the proof of $\HELES \in \qqQAM$ (\Cref{Lem:HELEScontainment}). We again use parallel repetition for an energy measurement, so we consider $N$ copies $\calA_1, \dots \calA_N$ and $E_1 \dots E_N$ of the $\calA$ and $E$ registers. The protocol for verifying $\FEA$ consists of the following steps. 
    \begin{enumerate}
        \item Verifier prepares EPR pairs across all copies of the $\calA$ and $E_L$ registers. They then send the $E_L$ registers to the prover.
        \item The prover now acts on the $E$ registers, and potentially their own private register $\calP$. They send the $E$ registers to the verifier, who then has access to the entire proof state $\chi^{(1,\dots, N)}_{\calA_1 E_1 \dots \calA_N E_N}$. Furthermore, the prover sends a classical value $\tilde{S}$ to the verifier. 
        \item If $\tilde{S} \not\in [0,n]$ the verifier rejects. Otherwise, they do the following for every pair $\calA_k, E_k$. First, they run the entropy verification protocol with $\delta \le \frac{b - a}{4m}, \delta' < \bt \frac{b - a}{4m}$ but $\delta,\delta'\ge \frac{1}{\poly}$ and $\tau = \tilde{S}$. If the protocol passed, they pick a uniformly random local term $H_j$ and a subregister $A_i$ of $\calA_k$ and measure $\Tr(H_j \sigma^{(k)}_{A_i})$. Let the random variable $\calE_k$ denote the outcome of this measurement and let $F_k = \calE_k - \frac{\tilde{S}}{m\bt}$. If the entropy verification protocol fails, we set $F_k = \frac{1}{1-\csound}\left(\frac{|b|}{m} + \frac{\csound \tilde{S}}{m \bt}\right)$.
        \item The verifier now accepts if $\frac{1}{N}\sum_{k = 1}^{N} F_k < \frac{a + b}{2m}$ and rejects if it is bigger or equal to this value.
    \end{enumerate}
    
    For completeness, let $\rho_\bt = \frac{1}{\calZ}e^{-\bt H}$ be the Gibbs state at inverse temperature $\bt$. In the YES case this satisfies $f(\rho_\bt)\le a$. The prover sends the classical value $\tilde{S} = S(\rho_\bt)$ and the $E$ registers such that the proof states $\chi^{(k)}_{\calA_k E_k}$ are in product and that the entropy verification protocol on the $\chi^{(k)}$ succeed with probability at least $\ccomp$, in which case the average output state $\tilde{\sigma}^{(k)}_{\calA_k}$ satisfies $\|\rho_\bt - \tilde{\sigma}^{(k)}\|_1 \le \delta$. This is possible by property 1 of \cref{Lem:EntropyVerification}.

    Note that $\ccomp$ is exponentially close to 1, so by a union bound, the probability that one of the runs of the entropy verification protocol fails is exponentially small. Assuming this does not happen we can use H\"older's inequality in a similar way to \cref{eq:HolderApplication} to get 
    \begin{align}
        \EE(F_k) &\le \frac{1}{m}\left(\Tr(H\rho_\bt) + \frac{b-a}{4} - \frac{\tilde{S}}{\bt} \right) \\
        &\le \frac{1}{m}\calF(H) + \frac{b - a}{4m}.
    \end{align}
    As we are in the YES case, $\calF(H) \le a$. Using a Hoeffding bound we see that the probability of rejection in the YES case is exponentially small, for sufficiently large (but still polynomial) $N$.

    For soundness, assume that $\calF(H) \ge b$. Let $\chi^{(k)}_{\calA E}$ be the (reduced) proof state used in the $k$-th repetition of the protocol and $\tilde{S}$ be the prover's entropy claim. (Note that we cannot, and will not, assume that the $\chi^{(k)}_{\calA E}$ are uncorrelated with each other.) Let $\Ent_k$ denote the event where the output of the entropy verification (in case of acceptance) on input $\chi^{(k)}_{\calA E}$ has high entropy: $S(\tilde{\sigma}^{(k)}_A) \ge \tilde{S} - \delta'$. Furthermore, let $\acc_k$ and $\rej_k$ be the events where the entropy verification protocol on input $\chi^{(k)}_{\calA E}$ accepts and rejects, respectively. We have
    \begin{align}
        \EE(F_k|\neg\Ent_k) &= \Pr(\acc_k|\neg\Ent)\EE(F_k|\neg\Ent_k \wedge \acc_k) + \Pr(\rej|\neg\Ent_k)\EE(F_k|\neg\Ent_k \wedge \rej_k).
    \end{align}
    Note that $\Pr(\rej_k|\neg\Ent_k) \ge 1 - \csound$ and $\Pr(\acc_k|\neg\Ent) \le \csound$ by \cref{Lem:EntropyVerification} Property 2. Furthermore, 
    \begin{align}
        \EE(F_k|\neg\Ent_k \wedge \rej_k) = \frac{1}{1-\csound}\left(\frac{|b|}{m} + \frac{\csound \tilde{S}}{m \bt}\right)
    \end{align}
    by definition, and 
    \begin{align}
        \EE(F_k|\neg\Ent \wedge \acc_k) \ge - \frac{\tilde{S}}{m\bt}
    \end{align} since $0 \preceq H_j$. It follows that $\EE(F_k|\neg\Ent_k) \ge \frac{|b|}{m} \ge \frac{b}{m}$.

    If the event $\Ent$ occurs, i.e., if $S(\tilde{\sigma}^{(k)}_\calA) \ge \tilde{S} - \delta'$, then $b \le \calF(H) \le \Tr(H\tilde{\sigma}^{(k)}_A) - \frac{\tilde{S} - \delta'}{\bt}$, so $\Tr(H\tilde{\sigma}^{(k)}_A) \ge b + \frac{\tilde{S} - \delta'}{\bt} > b + \frac{\tilde{S}}{\bt} - \frac{b - a}{4}$. It follows that $\EE(F_k | \Ent) \ge \frac{b}{m} - \frac{b - a}{4m}$ and that $\EE(F_k) \ge \frac{b}{m} - \frac{b - a}{4m}$.

    Let $M = \max\left\{1, \frac{1}{1-\csound}\left(\frac{|b|}{m} + \frac{\csound \tilde{S}}{m \bt}\right)\right\}$. We have $-\frac{n}{m\bt} \le \frac{\tilde{S}}{m\bt} \le F_k \le M$ for all $k$. It follows that $\frac{1}{N} \sum_{k = 1}^{N} F_k$ is a non-negative random variable with expectation
    \begin{align}
        \EE\left[M - \frac{1}{N} \sum_{k = 1}^{N} F_k\right] \le M - \frac{b}{m} + \frac{b - a}{4m}.
    \end{align}
    Using Markov's Inequality we now get
    \begin{align}
        \Pr\left[ \frac{1}{N} \sum_{k = 1}^{N} F_k \le \frac{a + b}{2m} \right] &= \Pr\left[M - \frac{1}{N} \sum_{k = 1}^{N} F_k \ge M - \frac{a + b}{2m} \right] \\
        &\le \frac{M - \frac{b}{m} + \frac{b - a}{4m}}{M - \frac{a + b}{2m}} \\
        &\le 1 - \frac{b - a}{4mM - 2a - 2b} \\
        &\le 1 - \frac{1}{\poly(n)}.
    \end{align}
    Here we used that $1 \le M \le \poly(n)$ and the fact that we can WLOG assume $-\poly(n) \le -\frac{n}{\bt} \le a \le b - \frac{1}{\poly(n)} \le b \le m$.

\end{proof}


\section{Low Entropy Ground State Problem}\label{scn:leles}
Recall the definition of $\LELES$.
\defleles*
We now show that $\LELES$ is $\QMAt$-hard, as claimed in \cref{thm:leles}.
\begin{theorem}
    For $k\ge 5$, $\LELES$ is $\QMAt$-hard, even when $s - t = \Omega(n)$.
\end{theorem}
\begin{proof}
    We now prove the statement for general Hamiltonians. Treatment of physical Hamiltonians is postponed to \cref{sec:physicalHamiltonians}.

    We will reduce from the $\QMA(2)$-complete problem $\PPIO$ (\Cref{def:PPIO}). Let $(U, \dy, \dn)$ be a $\PPIO$ instance and assume that $\dy = 2^{-\Omega(n)}$ and $\dn = 2 - 2^{-\Omega(n)}$. We map this to a $\LELES$ instance $(H_U, \El, \Eh, s, t)$ where $H_U$ is the Hamiltonian acting on registers $L,R,C$ resulting from \cref{Lem:ChanneltoHamiltonian} applied to the channel $\Phi(\rho) \coloneq U(\rho_L \otimes \ketbrab{0^{n_R}}_R) U^\dagger$. The cut will be between the $L \eqcolon A$ and $R C \eqcolon B$ registers. Note that $H_\Phi$ is indeed $5$-local. The parameters $s$ and $t$ will be $\Omega(n)$ and $O(1)$ respectively and $\El, \Eh$ will be fixed at a later point in the proof.

    For completeness, let $\ket{\psi}_L$ be such that $\Phi(\psi)$ is $\dy$-close to $\ket{\phi_L}\otimes\ket{\phi_R}$. By property 1 of \cref{Lem:ChanneltoHamiltonian} there now exists a zero-energy pure state $\ket{\chi}_{L R C} = \ket{\chi}_{AB}$ with 
    \begin{align}
        \|\chi_A - \ketbrab{\phi_L} \|_1 &\le \|\chi_B - \Phi(\psi)_L \|_1 + \|\Phi(\psi)_L - \ketbrab{\phi_L} \|_1 \\
        &\le \|\chi_{LR} - \Phi(\psi)_{LR}\|_1 + \|\Phi(\psi)_{LR} - \ketbrab{\phi_L} \otimes \ketbrab{\phi_R} \|_1 \\
        &\le \frac{2T}{T + L + 1} + \dy \\
        &\le O(L^{-1}).
    \end{align}
    Here we used the fact that trace distance is non-increasing under the partial trace. We could absorb $\dy$ into the $O(L^{-1})$ as $\dy$ is exponentially small and the parameter $L$ will be polynomially large.

    Note that $\ket{\phi_L}$ is pure and hence has zero entropy. Using the Fannes Inequality we thus have:
    \begin{align}
        S(\chi_L) \le O\left( \frac{1}{L}(n + \log L)\right).
    \end{align} 
    We now set $\El = 0$ and $t = O\left( \frac{1}{L}(n_R + \log L)\right)$ so that $\ket{\chi}_{ABC}$ satisfies $\braketb{\chi}{H_U} \le \El$ and $S(\chi_B) \le t$ as desired. Note that we will later set $L$ sufficiently large such that $t = O(1)$.

    For soundness, we first show that any pure state $\ket{\psi}_{LR}$ that is $\dn$-far from product has high entanglement entropy. For this, consider the Schmidt decomposition $\ket{\psi}_{LR} = \sum_i \lambda_i \ket{u_i}_L\ket{v_i}_R$. As $\psiab$ is $\dn$-far from product, it is in particular $\dn$-far from $\ket{u_i}\ket{v_i}$ for all $i$ hence we have: 
    \begin{align}
        \frac{1}{2} \|\psi - u_i \otimes v_i\|_1 &= \sqrt{1 - |\bra{u_i}\braket{v_i}{\psi}|^2 } \ge \frac{\dn}{2}\\
        |\lambda_i|^2 &= |\bra{u_i}\braket{v_i}{\psi}|^2 \le 1 - \frac{\dn^2}{4}
    \end{align}
    Recall the von Neumann entropy is lower bounded by the min-entropy. Hence we have:
    \begin{equation}
        \label{eq:entropylowerboundfromdistance}
        S(\psi_L) \ge H_\infty(\psi_L) = - \log \|\psi_L \|_\infty \ge - \log \sqrt{1 - \frac{\dn^2}{4}}.
    \end{equation}
    Setting $\dn = 2 - x$ yields $S(\psi_L) \ge - \log \sqrt{x - x^2/4} \ge - \frac{1}{2} \log x$. Recall now that $\PPIO$ is $\QMAt$-hard even when $\dn = 2 - 2^{-\Omega(n)}$. We conclude that in the NO case, all possible isometry outputs must have entropy at least $\Omega(n)$.

    Suppose now that for all $\ket{\psi}$, $\Phi(\psi)$ is $\dn$ far from any product state. We will first consider pure states: let $\ket{\psi}_{LRC}$ be such that $\braketb{\psi}{H} \le b'$. By the second property in \cref{Lem:ChanneltoHamiltonian} there exists a pure state $\ket{\chi}_L$ such that 
    \begin{align}
        D \coloneq \| \psi_{LR} - \Phi(\chi)\|_1 \le 2 \sqrt{\frac{b'}{\Delta}} + \frac{2T}{T + L + 1}.
    \end{align}
    As $\Phi(\chi)$ is far from a product state, we have $S(\Phi(\chi)_L) \ge \Omega(n)$ by the preceding discussion. By the Fannes Inequality we thus have 
    \begin{align}
        S(\psi_L) &\ge S(\Phi(\chi)_L) - D(n - \log D) \\
        &\ge \Omega(n) - \left(2 \sqrt{\frac{b'}{\Delta}} + \frac{2T}{T + L + 1}\right)\left(n - \log\left(2 \sqrt{\frac{b'}{\Delta}} + \frac{2T}{T + L + 1}\right) \right).
    \end{align}
    Setting $b' = \Delta/L \ge \Omega(L^{-4})$, and choosing the parameter $L$ to be a sufficiently large polynomial in $n$ we get $S(\psi_L) \ge \Omega(n)$ as desired.
    
    We now turn to mixed states. We have already shown that all low energy pure states have high entropy. Let $\rho_{LRC}$ be such that $\Tr(H\rho_{LRC}) \le b = \frac{b'}{n}$. We can write $\rho_{LRC}$ as a convex combination of pure states. All pure states with energy $\le b'$ will have entropy $\Omega(n)$, and the weight on pure states with energy $> b'$ cannot be more than $\frac{1}{n}$ without breaking the energy constraint. By concavity of the entropy, we then have $S(\rho_L) \ge \frac{n-1}{n} \Omega(n) = \Omega(n)$, completing the proof.    

    Close inspection of \cite{GHMW13} shows that a $\QMAt$-hard $\PPIO$ instance can be assumed to map $n$ qubits to $\Theta(n)$ qubits. It follows that the size of $B$ is polynomial in the size of $A$.
    
\end{proof}


\section{A $\QMAt$-complete \emph{local} Hamiltonian problem}\label{scn:leaps}
Recall the definition of $\LEAPS$.
\defleaps*

We will show that depending on the choice of parameters $\El, b, \dn, \dy$, $\LEAPS$ ``leaps'' from being $\QMA$-complete to being $\QMAt$-complete as claimed in \cref{thm:leapsQMA,thm:leapsQMA2}. This section will deal with general Hamiltonians. The treatment of physically motivated Hamiltonians is deferred to \cref{cor:physicalHamiltonians}

We begin by showing the $\QMA$ containment of $\LEAPS$ when $\dy$ is sufficiently smaller than the gap between $\El$ and $b$.

\begin{lemma}
    \label{lem:LEAPSQMAcontained}
    Let $D$ be an efficiently computable upper bound on $\|H\|_\infty$, such as the number of local terms $m$. Then $\LEAPS$ is contained in $\QMA$ when $b - \El - \dy D \ge \frac{1}{\poly(n)}$ and $\dn>\dy$. Note that no gap between $\dy$ and $\dn$ is required.
\end{lemma}
\begin{proof}
    We begin by noting that if $\dy = 0$, then $\LEAPS$ is just a special case of the Separable Local Hamiltonian Problem from \cite{CS12}, which is proven to be $\QMA$-complete in the same work. If $\dy > 0$, the H\"older Inequality yields
    \begin{equation}
        |\Tr(H\rho) - \Tr(H\sigma)| \le \|H(\rho - \sigma)\|_1 \le \|H\|_\infty \|\rho - \sigma\|_1 \le D \|\rho - \sigma\|_1.
    \end{equation} 
    Hence, if $\braketb{\psi}{H} \le \El$ and $\|\psi - \phi_A\otimes \phi_A\|_1 \le \dy$, then $\bra{\phi_B}\braketb{\phi_A}{H}\ket{\phi_B} \le \El + \dy D$. To solve the $\LEAPS$ instance $(H,\El, b, \dy, \dn)$ we can now solve the $\cfont{Separable Local Hamiltonian}$ instance $(H, \El', b)$ where $\El' = \El + \dy D \ge \El + \dy \|H\|_\infty$. Note that $b - \El' \ge \frac{1}{\poly(n)}$ by assumption. Furthermore, note that if $(H,\El, b, \dy , \dn)$ is a YES instance of $\LEAPS$, then $(H,\El',b)$ is a YES instance of $\cfont{Separable Local Hamiltonian}$. If $(H,\El, b, \dy , \dn)$ is a NO instance, then certainly all product states have energy $\ge b$, hence $(H, \El', b)$ is a NO instance as well. Note that we do not need a gap between $\dy$ and $\dn$ as this is subsumed by the energy gap.
\end{proof}

We now show the $\QMA$-hardness of $\LEAPS$, which follows almost immediately from \cite{CS12}.
\begin{lemma}
    \label{lem:LEAPSQMAhard}
    $\LEAPS$ is $\QMA$-hard, even when $\dy=0$.
\end{lemma}
\begin{proof}
    We reduce from the Separable Local Hamiltonian problem. Map an instance $(H, \El, b)$ to the $\LEAPS$ instance $(H, \El, b', 0, \dn)$, where $\dn = \frac{b - \El}{2D} \le \frac{b - \El}{2 \|H\|_\infty}$ and $b' = b - \dn D \le b - \|H\|_\infty \dn$. Note that $b' - \El \ge \frac{b - \El}{2} \ge \frac{1}{\poly(n)}$. Completeness follows trivially. For soundness, let $\ket{\psi}$ be a state that is $\dn$ close to a product state $\ket{\phi_L}\ket{\phi_R}$. By the H\"older inequality, $\braketb{\psi}{H} \ge \bra{\phi_B}\braketb{\phi_A}{H}\ket{\phi_B} - \|H\|_\infty \dn \ge b - \dn D = b'$, completing the proof.
\end{proof}

We now turn to showing $\QMAt$-completeness for different parameter regimes. We begin by showing containment in $\QMAt$, which holds as long as $b- \El$ and $\dn - \dy$ are at least inverse polynomial.

\begin{lemma}
    \label{lem:LEAPSQMA2contained}
    If $b - \El \ge \frac{1}{\poly(n)}$ and $\dn - \dy \ge \frac{1}{\poly(n)}$, then $\LEAPS$ is contained in $\QMAt$.
\end{lemma}
\begin{proof}
    We begin with the characterization $\QMA(k) = \QMAt$ from \cite{HM13}. The proofs will be $\ket{\psi_{AB}}, \ket{\phi_A}$ and $\ket{\phi_B}$. With probability $1/2$ the verifier performs the \emph{energy check} defined below. Otherwise they perform the \emph{distance check}. They accept iff the performed check accepts.
    \begin{enumerate}
        \item \emph{Energy check}: the verifier measures the energy\footnote{Technically, the energy of $H$ cannot be measured in one go. Here we implicitly use parallel repetition in the standard way.} $\braketb{\psi_{AB}}{H}$ and accepts if this is $< \frac{\El + b}{2}$.
        \item \emph{Distance check}: the verifier uses the SWAP test between $\ket{\psi_{AB}}$ and $\ket{\phi_A}\ket{\phi_B}$ and accepts if the SWAP test accepts.
    \end{enumerate}
    For completeness, the provers send $\ket{\psi_{AB}}, \ket{\phi_A}, \ket{\phi_B}$ such that $\braketb{\psi_{AB}}{H} \le \El$ and $\| \psi_{AB} - \phi_A \otimes \phi_B\|_1 \le \dy$. The energy check can be made to succeed with probability $1 - 2^{-\Omega(n)}$ using parallel repetition. The distance check will accept with probability
    \begin{align}
        p_{acc} = \frac{1}{2} + \frac{1}{2}\left| \braket{\psi_{AB}}{\phi_A}\ket{\phi_B} \right|^2
        = 1 - \frac{1}{8} \| \psi_{AB} - \phi_A \otimes \phi_B\|_1^2 \ge 1 - \frac{\dy^2}{8}.
    \end{align}
    The total acceptance probability is therefore at least $1 - 2^{-\Omega(n)} - \frac{\dy^2}{16} \ge \frac{3}{4} + \frac{1}{\poly(n)}$, where we use the fact that $\dy \le \dn - \frac{1}{\poly(n)} \le 2 - \frac{1}{\poly(n)}$.

    For soundness we can assume the given proofs to be pure by a standard convexity argument. Note that if $\braketb{\psi_{AB}}{H} \ge b$, then the energy check will fail with probability inverse exponentially close to 1. Hence, if the provers send such $\ket{\psi_{AB}}$, the verifier accepts with probability at most $\frac{1}{2} + 2^{-\Omega(n)}$, which is sufficiently far from the acceptance probability in the YES case to distinguish both cases. We can thus assume $\braketb{\psi_{AB}}{H} < b$. As we are in the NO case, this implies $\| \psi_{AB} - \phi_A\otimes\phi_B\|_1 > \dn$. The acceptance probability of the distance check will now be at most $1 - \frac{\dn^2}{8}$, which means the total acceptance probability is $\le 1 - \frac{\dn^2}{16}$, which is inverse polynomially bounded away from $1 - 2^{-\Omega(n)} - \frac{\dy^2}{16}$, the acceptance probability in the YES case. This completes the soundness proof.
\end{proof}

Finally, we show that if the distance thresholds $\dy,\dn$ are sufficiently larger than the energy thresholds $\El, b$, then $\LEAPS$ is $\QMAt$-hard.
\begin{lemma}
    \label{lem:LEAPSQMA2hard}
    If $\dy = \frac{1}{\poly(n)}$ and $b = O(\dy^6)$ (so in particular, $b < \dy$), then $\LEAPS$ is $\QMAt$-hard.
\end{lemma}

Note that the regime where we show $\QMAt$-hardness has $b < \dy$ but our $\QMA$-containment result requires $\dy < b$ (as $\|H\|_\infty \ge 1$). There is hence no overlap between the parameter regimes for which we show $\QMA$ and $\QMAt$ completeness. It is in theory possible, although perhaps unlikely, that the $b$ and $\dy$ parameters could be tweaked such that there is overlap between the $\QMAt$-hard and contained in $\QMA$ regime. This would prove $\QMA = \QMAt$.

\begin{proof}
    We will reduce from the $\QMA(2)$-complete problem $\PPIO$ (\Cref{def:PPIO}). Let $(U, \dy', \dn')$ be a $\PPIO$ instance where $U$ is a unitary acting on registers $L, R$ of $n_L + n_R = n$ qubits. Assume that $\dy' = 2^{-\Omega(n)}$ and $\dn' = 2 - 2^{-\Omega(n)}$. We again use the channel-to-Hamiltonian construction from \cref{Lem:ChanneltoHamiltonian}, this time applied to the channel $\Phi(\rho) \coloneq \Tr_L\left( U(\rho_L \otimes \ketbrab{0^{n_R}}) U^\dagger \right)$. Let $H_U$ be the resulting Hamiltonian acting on registers $L,R,C$. We set $A \coloneq LC$ and $B \coloneq R$. The output of the reduction will be $(H_U, \El, b, \dy, \dn)$ with $\El = 0, b = \Omega(L^{-3}), \dy = O(L^{-1/2})$ and $\dn = O(1)$. Note that $H_U$ is indeed $5$-local.

    For completeness, let $\ket{\psi}, \ket{\phi_L}, \ket{\phi_R}$ be such that $\|U(\psi \otimes \ketbrab{0^{n_B}})U^\dagger - \phi_L \otimes \phi_R\| \le \dy'$. Trace distance is non-increasing under partial trace, so in particular $\|\Phi(\psi) - \phi_R\|_1 \le \dy'$. By \cref{Lem:ChanneltoHamiltonian} property 1 there now exists a zero-energy pure state $\ket{\chi}$ such that $\|\chi_B - \phi_R\|_1 \le \frac{2T}{T + L + 1} + \dy'$. 
    
    We now prove that if the reduced state is close to pure, the global state is close to product:
    \begin{lemma}
        \label{lem:7.5}
        Let $\ket{\psi}_{XY}$ be a pure state such that $\|\psi_Y - \ketbrab{\phi}_Y\|_1 \le \delta$ for some pure state $\ket{\phi}_Y$. Then there exists a state $|\tilde{\phi}\rangle_X$ with
        \begin{align}
            \|\ketbrab{\psi}_{XY} - \ketbrab{\tilde{\phi}}_X \otimes \ketbrab{\phi}_Y\|_1 \le \sqrt{2\delta}.
        \end{align}
    \end{lemma}
    \begin{subproof}[Proof of \cref{lem:7.5}]
        By the Fuchs-van de Graaf inequality (using that $\ket{\phi}$ is pure): $F(\psi_Y, \ketbrab{\phi}_Y) \ge 1 - \frac{\delta}{2}$. By Uhlmann's theorem, this fidelity is equal to the largest overlap of $\ket{\psi}_{XY}$ with a purification of $\ket{\phi}_Y$. As $\ket{\phi}_Y$ is already pure, we thus have: 
        \begin{align}
            F(\psi_Y, \ketbrab{\phi}_Y) = \max_{\ket{\chi}_X}|\braket{\psi}{\chi}\ket{\phi}|^2.
        \end{align}
        Letting $|\tilde{\phi}\rangle$ be the state maximizing this overlap, and using the formula for trace distance between pure states, we get:
        \begin{align}
            \|\ketbrab{\psi}_{XY} - \ketbrab{\tilde{\phi}}_X \otimes \ketbrab{\phi}_Y\|_1 &= 2\sqrt{1 - |\langle\psi|\tilde{\phi}\rangle\ket{\phi}|^2} \\
            &\le \sqrt{2\delta}
        \end{align}
    \end{subproof}    
    
    Applying this lemma to $\chi_{AB}$ now immediately yields the existence of a state $|\tilde{\phi}\rangle_{A}$ such that
    \begin{align}
        \|\chi_{AB} - \ketbrab{\tilde{\phi}}_{A} \otimes \ketbrab{\phi_{R}}_{B} \| &\le \sqrt{2}\sqrt{\frac{2T}{T + L + 1} + \dy'} \\ 
        &\le O\left(\frac{1}{\sqrt{L}}\right),
    \end{align}
    where in the last step we use that $L$ is polynomial in $n$ to absorb the $\dy' = 2^{-\Omega(n)}$. Setting $\dy = \Omega\left(\frac{1}{\sqrt{L}}\right)$ we see that $\ket{\chi}$ has zero energy and is $\dy$-close to a product state as desired.

    For soundness, assume $(U, \dy', \dn')$ is a NO instance and let $\ket{\psi}_{AB}$ be an arbitrary state with $\braketb{\psi}{H} \le b$. By \cref{Lem:ChanneltoHamiltonian} property 2, there exists a pure state $\ket{\chi}_L$ such that $\|\psi_B - \Phi(\chi)\|_1 \le \sqrt{b / \Delta} + \frac{2T}{T + L + 1}$. Suppose now, towards a contradiction, that $\ket{\psi}$ is close to product: suppose there are $\ket{\phi_{A}}, \ket{\phi_B}$ such that $\|\psi_{AB} - \phi_{A}\otimes \phi_{B}\|_1 \le \dn$. By monotonicity under partial trace of trace distance we have:
    \begin{align}
        \|\Phi(\chi) - \phi_B\|_1 &\le \|\Phi(\chi) - \psi_B\|_1 + \|\psi_B - \phi_B\|_1 \\
        &\le \|\Phi(\chi) - \psi_B\|_1 + \|\psi_{AB} - \phi_{A}\otimes \phi_B\|_1 \\
        &\le \sqrt{b / \Delta} + \frac{2T}{T + L + 1} + \dn.
    \end{align}

    Note that $U\ket{\chi} \otimes \ket{0^{n_R}}$ is a pure state and that $\Phi(\chi)$ is its reduced state on the $R$ register. \cref{lem:7.5} thus yields the existence of a product state $|\tilde{\phi}_{L}\rangle \otimes \ket{\phi}_R$ with 
    \begin{align}
        \| U\ketbrab{\chi}_L \otimes \ketbrab{0^{n_R}}_R U^\dagger - |\tilde{\phi}\rangle\langle\tilde{\phi}|_{L}\otimes \ketbrab{\phi}_R\| \le \sqrt{2}\sqrt{\sqrt{b / \Delta} + \frac{2T}{T + L + 1} + \dn}.
    \end{align}
    But as we are in the NO case, $U\ket{\chi} \otimes \ket{0^{n_R}}$ is $\dn'$ far from any pure product state. 
    We have thus achieved a contradiction if 
    \begin{align}
        \sqrt{2}\sqrt{\sqrt{b / \Delta} + \frac{2T}{T + L + 1} + \dn} < \dn'.
    \end{align}
    In particular, setting $b \le \Delta$, $\dn = O(1)$ and $L = n^{\Omega(1)}$ achieves $\sqrt{b / \Delta} + \frac{2T}{T + L + 1} + \dn \le O(1)$ and hence suffices. Note that $\Delta = \Omega(L^{-3})$, hence setting $b = O(L^{-3})$ suffices. We now have $\dy = \Omega\left(\frac{1}{\sqrt{L}}\right)$ and $b = O\left(\frac{1}{L^3}\right)$, hence $b = O(\dy^6)$.
\end{proof}


\section{Physical Hamiltonians}
\label{sec:physicalHamiltonians}
In this section we will use the universal Hamiltonian simulator framework developed in \cite{CMP18,ZA21} to extend our results to physically motivated Hamiltonians. The main idea of this framework is to encode any Hamiltonian $H$ into the low energy space of a bigger Hamiltonian $H'$, which belongs to some set of physically relevant Hamiltonians of choice, such as Heisenberg Hamiltonians or $XY$-interaction Hamiltonians.
We use a simplified notion of Hamiltonian simulation than that originally introduced in \cite{CMP18}. 

\begin{definition}[Hamiltonian simulation (simplified version of {\cite[Definition~23]{CMP18}})]
    \label{def:HamiltonianSimulation}
    An $n$-qudit Hamiltonian $H$ is $(\Delta, \eta, \epsilon)$-simulated by $H'$ if for some local isometry $V = \bigotimes_i V_i$, where each $V_i$ maps a single qubit to $O(1)$ qubits, there is a (not necessarily local) isometry $\tilV$ such that
    \begin{enumerate}
        \item $V$ is close to $\tilV$: $\|\tilV - V\|_{\infty} \le \eta$.
        \item $\tilV$ maps the identity to the projector on the subspace of states with energy $\le \Delta$ according to $H'$. That is, $\tilV I \tilV^\dagger = P_{\le \Delta(H')}$ 
        \item The low energy part of the spectrum of $H'$ is close to the spectrum of $H$. That is, $\|H'_{\le \Delta} - \tilV H\tilV^\dagger\|_{\infty} \le \epsilon$, where $H'_{\le \Delta} = H'P_{\le \Delta(H')}$.
    \end{enumerate}
    We say the simulation is \emph{efficient} if the description of $H'$ can be efficiently computed and both number of particles $H'$ acts on, and its maximum energy $\|H'\|_{\infty}$ are polynomial in $n, \eta^{-1}, \epsilon^{-1}, \Delta$.
\end{definition}

\begin{definition}[Universal family of Hamiltonians (\cite{CMP18, ZA21})]
    A family $\calF = \{H_i\}$ is \emph{weakly universal} if for any $\Delta, \eta, \epsilon > 0$, any $O(1)$-local Hamiltonian $H$ can be $(\Delta, \eta, \epsilon)$-simulated by some $H_i \in \calF$. We say the family is \emph{strongly universal} if this simulation is always efficient.
\end{definition}

The main result of \cite{ZA21} says that most families of Hamiltonians on the 2D square lattice are strongly universal. The exception are families that are 2SLD, meaning that their 2-local interactions can be simultaneously and locally diagonalized. More precisely:
\begin{definition}[2SLD (\cite{CMP18})]
    Let $\calS$ be a set of 2-local interaction terms on qubits. $\calS$ is \emph{2SLD} if there exists $U \in \SUt$ such that $U^{\otimes 2} H_i (U^\dagger)^{\otimes 2} = \El_i Z \otimes Z + A_i \otimes I + I \otimes B_i$ for all $H_i \in \calS$. Here $\El_i \in \RR$ and $A_i, B_i$ are arbitrary 1-local operators.
\end{definition}

Originally, \cite{ZA21} uses the fully general definition of Hamiltonian simulation. However, closer inspection of their results reveals that their simulations have the form of \cref{def:HamiltonianSimulation}.

\begin{theorem}[Main theorem of \cite{ZA21}]
    \label{Thm:StronglyUniversal}
    Let $\calS$ be a set of 2-qubit interactions and $\calF(\calS)$ be the family of Hamiltonians on a 2D square lattice where all 2-qubit interactions are from $\calS$, with arbitrary interaction strengths. If $\calS$ is \emph{non-}2SLD, then $\calF(\calS)$ is strongly universal.
\end{theorem}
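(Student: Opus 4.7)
The plan is to reduce to a known universal family and use perturbation-theoretic gadgets to implement the required interactions from the primitive set $\calS$ on the 2D lattice. Since the theorem is attributed to \cite{ZA21}, one would not re-derive it in full here, but the strategy I would follow if building it from scratch is as follows.

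First, I would invoke the classification of \cite{CMP18}: any 2-qubit interaction set $\calS$ that is non-2SLD can, up to a global single-qubit change of basis, be used to generate either a Heisenberg-type or an $XY$-type interaction. The algebraic content is that the 2SLD condition is exactly the condition under which $\calS$ lies in an abelian algebra (up to local rotations) and so its commutators collapse; once $\calS$ is non-2SLD, nested commutators and products of its elements span enough of $\Herm(\CC^2 \otimes \CC^2)$ to realize a universal 2-local interaction via perturbation theory. This part of the argument reduces the 2D universality question to: \emph{given a universal 2-local interaction available between arbitrary pairs of qubits, can one efficiently simulate it on the 2D square lattice?}

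Second, I would construct subdivision and fork (mediator) gadgets, analyzed by second-order Schrieffer--Wolff / perturbation theory, that route long-range interactions through paths of ancilla qubits placed on the lattice. Each gadget encodes a logical qubit in the low-energy subspace of an ancilla cluster, so the resulting isometry is automatically of the local product form $V = \bigotimes_i V_i$ required by \Cref{def:HamiltonianSimulation}. The closeness of $\tilV$ to $V$ and the spectral error $\epsilon$ are controlled by choosing the perturbation parameter small, at the cost of larger $\norm{H'}_\infty$. To obtain \emph{strong} (not merely weak) universality one needs to verify that the ancilla count, interaction strengths, and routing overhead all remain polynomial in $n, \Delta, \eta^{-1}, \epsilon^{-1}$; this is essentially a bookkeeping exercise once the gadgets are set up.

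The main obstacle I expect is the perturbative gadget construction in the ``barely non-2SLD'' regime, where $\calS$ contains only one direction of non-commutativity beyond an Ising-like subset. In that case the natural first-order gadgets vanish by symmetry, and one has to push to higher-order perturbative expansions or design more subtle ancilla encodings to generate the missing universal operator. Showing that such gadgets exist \emph{and} fit within the 2D lattice geometry without destroying the polynomial scaling is the technical heart of \cite{ZA21}, and would be the step I would be most careful about in a first draft.
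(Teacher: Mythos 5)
This theorem is imported from \cite{ZA21} and the paper gives no proof of it---the only in-paper remark is that the simulations of \cite{ZA21} can be checked to conform to the simplified notion of simulation in \cref{def:HamiltonianSimulation} (in particular that the encoding isometry has the local product form $V=\bigotimes_i V_i$). Your sketch correctly reflects the actual strategy of \cite{CMP18,ZA21}: reduce non-2SLD sets to a universal interaction, implement it on the 2D lattice via perturbative mediator gadgets, and track polynomial overhead for strong universality; you also correctly note that the gadget encodings yield the required local isometry. So your outline is consistent with the cited proof, and no further comparison is possible since the paper does not reprove the result.
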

\begin{corollary}
    The families of Heisenberg Hamiltonians ($\calS = \{X \otimes X + Y \otimes Y + Z \otimes Z\}$) and $XY$-interaction Hamiltonians ($\calS = \{X \otimes X + Y \otimes Y\}$) are strongly universal.
\end{corollary}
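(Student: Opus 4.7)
The plan is to invoke Theorem~\ref{Thm:StronglyUniversal}: both $\calS_{\mathrm{Heis}} = \{XX + YY + ZZ\}$ and $\calS_{XY} = \{XX + YY\}$ are singletons, so strong universality of the corresponding 2D lattice families follows once I verify that neither interaction is 2SLD. Concretely, I must show that no $U \in \SUt$ conjugates the single term into the form $\alpha\, Z\otimes Z + A\otimes I + I\otimes B$.

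My first step is to reduce the 2SLD condition to a statement about the purely two-body part of the interaction. For a two-qubit Hermitian $H$, define $H_{(2)} := H - \tfrac{1}{2}(\Tr_2 H)\otimes I - \tfrac{1}{2}I \otimes (\Tr_1 H) + \tfrac{1}{4}\Tr(H)\, I\otimes I$, the Hilbert--Schmidt projection onto operators that are traceless on each single qubit. Since partial traces are equivariant under local unitaries, this projection commutes with $U^{\otimes 2}$-conjugation. Applied to the 2SLD ansatz it annihilates the $A\otimes I$ and $I\otimes B$ terms (after harmlessly absorbing identity components into $A$ and $B$) and returns exactly $\alpha\, Z\otimes Z$. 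Hence 2SLD is equivalent to the existence of $U \in \SUt$ with $U^{\otimes 2} H_{(2)} (U^\dagger)^{\otimes 2} = \alpha\, Z\otimes Z$. Since unitary conjugation preserves spectrum, $H_{(2)}$ must have the same spectrum as $\alpha\, Z\otimes Z$, namely two eigenvalues $\pm\alpha$ each of multiplicity two.

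The two cases then fall out by inspection. For the Heisenberg term, $H_{(2)} = XX + YY + ZZ = 2\,\mathrm{SWAP} - I$, which has spectrum $\{-3, 1, 1, 1\}$ from the singlet/triplet decomposition. For the XY term, $H_{(2)} = XX + YY = 2(\ketbra{01}{10} + \ketbra{10}{01})$, with spectrum $\{-2, 0, 0, 2\}$. Neither exhibits the $(2,2)$-degenerate signed pair demanded of $\alpha\, ZZ$, so both $\calS$ are non-2SLD. Applying Theorem~\ref{Thm:StronglyUniversal} then yields strong universality for both families.

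The only subtlety, and where I would be most careful, is the reduction step. I need to verify that the projection $H \mapsto H_{(2)}$ is genuinely $\SUt$-equivariant and that it kills exactly the 1-local-plus-identity piece of the 2SLD form, so that no freedom in the choice of $A$ and $B$ is lost along the way. Once the reduction is properly set up, the spectral incompatibility in the two concrete cases is immediate and the proof is essentially a one-line calculation of eigenvalues. (Alternatively, for the Heisenberg case one could skip the spectral step altogether and use $\SUt$-invariance directly: $U^{\otimes 2}(XX+YY+ZZ)(U^\dagger)^{\otimes 2} = XX+YY+ZZ$ for every $U \in \SUt$, which plainly does not equal $\alpha\, ZZ + A\otimes I + I\otimes B$ since the off-diagonal terms $XX+YY$ cannot be absorbed into any 1-local correction.)
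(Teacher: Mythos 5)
Your proof is correct and takes the same route the paper intends: the paper states this corollary without proof, treating it as an immediate consequence of \cref{Thm:StronglyUniversal} together with the (known, from \cite{CMP18}) fact that the Heisenberg and $XY$ interactions are non-2SLD. Your explicit verification of the non-2SLD condition — projecting onto the part traceless on each tensor factor, which is $U^{\otimes 2}$-equivariant and kills the $A\otimes I + I\otimes B$ freedom, and then comparing the spectra $\{-3,1,1,1\}$ and $\{-2,2,0,0\}$ against the $(2,2)$-degenerate spectrum of $\alpha\, Z\otimes Z$ — is sound and correctly fills in the detail the paper leaves implicit.
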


Using this framework we easily get that all hardness results in this work hold even when restricted to any non-2SLD family of Hamiltonians.

\begin{corollary}
    \label{cor:physicalHamiltonians}
    Let $\calS$ be a non-2SLD set of $2$-qubit interactions and let $\calF(\calS)$ be as in \cref{Thm:StronglyUniversal}. Then
    \begin{enumerate}
        \item $\HELES$ is $\qqQAM$-hard, even when restricted to $\calF(\calS)$.
        \item $\LELES$ is $\QMAt$-hard, even when restricted to $\calF(\calS)$
        \item $\LEAPS$ is $\QMAt$-hard, even when restricted to $\calF(\calS)$
    \end{enumerate}
\end{corollary}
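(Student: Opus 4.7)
The plan is to apply the strong universality of $\calF(\calS)$ (\Cref{Thm:StronglyUniversal}) to the $5$-local hardness instances produced in the proofs of \Cref{thm:heles,thm:leles,thm:leapsQMA2}. Given an original instance with Hamiltonian $H$ on registers $AB$ and the relevant thresholds, I would efficiently compute some $H' \in \calF(\calS)$ that $(\Delta, \eta, \epsilon)$-simulates $H$ via a local isometry $V = \bigotimes_i V_i$, with $\Delta$, $\eta^{-1}$, $\epsilon^{-1}$ polynomial in $n$ to be fixed at the end. The reduction outputs an instance of the same problem on $H'$, with thresholds shifted by small additive corrections that absorb the simulation errors.

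Because $V$ is a tensor product over the original qubits, grouping its single-qubit encoders according to which side of the cut they come from induces a physical bipartition $A'|B'$ such that $V = V_A \otimes V_B$. Local isometries preserve von Neumann entropy on each side, so $S((V\psi V^\dagger)_{A'}) = S(\psi_A)$ for every pure $\psi_{AB}$; they also preserve trace distance and send products across $A|B$ to products across $A'|B'$. This is what lets the entanglement-type conditions in $\HELES$, $\LELES$, and $\LEAPS$ be transported along the encoding.

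For the YES direction, a witness $\psi_{AB}$ of the original problem is mapped to the physical witness $\tilV\psi\tilV^\dagger$, whose $H'$-energy lies within $\epsilon$ of $\braketb{\psi}{H}$ by property~3 of \Cref{def:HamiltonianSimulation}, and which by $\|\tilV-V\|_\infty\le\eta$ is $O(\eta)$-close in trace distance to $V\psi V^\dagger$. Fannes' inequality then bounds the entropy shift (for $\HELES$/$\LELES$), and the triangle inequality applied to the product state $V(\phi_L\otimes\phi_R)V^\dagger = (V_A\phi_L V_A^\dagger)\otimes(V_B\phi_R V_B^\dagger)$ bounds the distance-to-product shift (for $\LEAPS$). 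For the NO direction, any physical state $\rho'$ with $H'$-energy below a suitable threshold $\beta''$ must, by property~2 of the simulation and a Gentle Measurement argument, be $O(\sqrt{\beta''/\Delta}+\eta)$-close to $\tilV\rho\tilV^\dagger$ for some $\rho$ with $\Tr(H\rho)\le\beta+O(\epsilon)$; the NO-case promise of the original instance then rules out $\rho_A$ having low entropy (or being close to product), and the same reasoning used in the YES direction transports this conclusion to $\rho'_{A'}$.

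The main obstacle is the bookkeeping: we must pick $\Delta$ polynomially large and $\eta,\epsilon$ polynomially small so that all accumulated error terms (Fannes perturbations, gentle-measurement slack, the $\epsilon$ energy offset, and the $O(\sqrt{\beta''/\Delta})$ term) are smaller than, say, a quarter of the original $1/\poly(n)$ promise gap, while keeping $H'$ of polynomial size. Strong universality guarantees this is achievable efficiently, so the shifted thresholds remain $1/\poly(n)$-separated, yielding the three claimed $\calF(\calS)$-restricted hardness results simultaneously; the Heisenberg and $XY$ corollaries then follow since those families are non-$2$SLD.
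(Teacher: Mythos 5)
Your overall strategy is the same as the paper's: simulate the $5$-local hardness instances by $H'\in\calF(\calS)$, use the tensor-product structure of $V=\bigotimes_i V_i$ to respect the cut and preserve entropy, handle the YES case by pushing witnesses forward through $V$, and handle the NO case via Gentle Measurement onto the low-energy space plus Fannes. For $\HELES$ and $\LELES$ this matches the paper's argument and is fine.

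There is, however, a genuine gap in your treatment of $\LEAPS$, at exactly the point the paper flags as subtle. In the NO direction you must show that every low-energy \emph{physical} state $\psi'$ is far from every \emph{physical} product state $\theta_{A'}\otimes\theta_{B'}$. Such a $\theta_{A'}\otimes\theta_{B'}$ need not lie in the image of $\tilV$ (it can be product only by virtue of a small component in the high-energy space of $H'$, or more basically it need not have the form $V(\phi_L\otimes\phi_R)V^\dagger$), so you cannot pull it back through the encoding and invoke the original NO promise ``by the same reasoning used in the YES direction'' --- the YES direction only pushes product states forward, which is the easy direction. The paper circumvents this by not reasoning about product states at the physical level at all: it converts the original NO promise ``$b$-far from product'' into the entropy lower bound $S(\phi_A)\ge -\log\sqrt{1-b^2/4}=\Omega(1)$ via the Schmidt-decomposition/min-entropy argument (\cref{eq:entropylowerboundfromdistance}), and then argues that if $\psi'$ were $b'$-close to \emph{any} physical product state, Fannes would force $S(\phi'_{A'})=S(\phi_A)\le nd-d\log d$ with $d=b'+2\sqrt{\beta/\Delta}$, a contradiction once $d=o(1/n)$. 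This is why the paper takes $b=\Omega(1)$ but only $b'=\Theta(n^{-2})$: the loss in the distance parameter when passing to the physical Hamiltonian is substantial, not an $O(\eta+\epsilon)$ additive correction as your bookkeeping paragraph suggests. You would need to either adopt this entropy detour or supply a separate argument (e.g.\ projecting the physical product state onto $\mathrm{Im}(V_A)\otimes\mathrm{Im}(V_B)$ and controlling the resulting $O(\sqrt{b'})$ loss) to close the NO case for $\LEAPS$.
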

\begin{proof}
    We will first consider $\HELES$. The proof for $\LELES$ will be completely analogous.

    Let $(H_{AB}, \El, b, s ,t)$ be a $\HELES$ instance and let $H'_{A'B'}$ $(\Delta, \eta, \epsilon)$-simulate $H$ via the local isometry $V$. We map our instance to $(H'_{A'B'}, \El', b', s', t')$ where $\El' = \El + \frac{b - \El}{4}, b' = b - \frac{b - \El}{4}, s' = s$ and $t' = t + \frac{s - t}{4}$. We now show that YES and NO instances are preserved.
    
    For completeness, let $\ket{\psi}_{AB}$ be a state with $\braketb{\psi}{H} \le \El$. Consider the state $\ket{\psi'} = V\ket{\psi}$. 
    We have:
    \begin{align}
        \braketb{\psi'}{H'} &= \braketb{\psi}{V^\dagger H'V}\\
        &\le \braketb{\psi}{\tilV^\dagger H' \tilV} + \|H'\|_\infty \|\tilV - V\|_\infty \\
        &\le \braketb{\psi}{H} + \epsilon + \eta\|H'\|_\infty \\
        &\le \El + \epsilon + \eta\|H'\|_\infty.
    \end{align}
    Choosing $\eta \le \frac{b - \El}{8 \|H'\|_\infty}$ and $\epsilon \le \frac{b - \El}{8}$ we get that $\braketb{\psi'}{H'} \le \El'$ is low energy. Furthermore, as $V$ is a local isometry it preserves entropy. In particular, $S(\psi_A) = S(\psi'_{A'})$.

    For soundness suppose $\rho'$ satisfies $\Tr(H'\rho) \le b'$. Let $\sigma' = \frac{\Pi_{\le \Delta} \rho' \Pi_{\le \Delta}}{\Tr(\Pi_{\le \Delta} \rho')}$ be the projection of $\rho'$ on the low energy space of $H'$. Note that 
    \begin{align}
        b' &\ge \Tr(H'\rho') \\
        &\ge \Delta\Tr\left((I - \Pi_{\le \Delta}) \rho' \right) \\
        &\ge \Delta\left(1 - \Tr\left(\Pi_{\le \Delta} \rho'\right)\right) \\
        \Tr(\Pi_{\le \Delta} \rho') & \ge 1 - \frac{b'}{\Delta},
    \end{align}
    hence by the Gentle Measurement Lemma, $\|\rho' - \sigma'\|_1 \le 2 \sqrt{\frac{b'}{\Delta}}$. 

    As $\sigma' \in \text{Im}(\tilV)$, there is some $\sigma$ such that $\sigma' = \tilV\sigma \tilV^\dagger$. We now have
    \begin{align}
        \Tr(H \sigma) &\le \Tr(\sigma'\tilV H \tilV^\dagger) \\
        &\le \Tr(H'\sigma') + \epsilon \\
        &\le \Tr(H'\rho') + 2\|H'\|_\infty \sqrt{\frac{b'}{\Delta}} + \epsilon.
    \end{align}
    Hence, for appropriate choices of $\Delta, \epsilon$, $\Tr(H\sigma) \le b$. As we are in the NO case, this means that $S(\sigma_A) \le t$ and as $V$ is a local isometry, $S(\chi_A) \le t$ for $\chi = V\sigma V^\dagger$. Note that $\|\chi - \sigma'\|_1 \le \|V - \tilde{V}\|_\infty = \eta$ and that $\|\sigma' - \rho'\|_1 \le 2\sqrt{\frac{b'}{\Delta}}$, hence $\|\chi - \rho'\|_1 \le \eta + 2 \sqrt{\frac{b}{\Delta}}$. By the Fannes Inequality, the entropy of $\rho'_{A'}$ is now close to that of $\chi_A$. In particular, for appropriate choices of $\eta, \Delta$, we get $S(\rho'_{A'}) \le t'$.

    The proof for $\LELES$ follows by swapping the roles of $t$ and $s$ and the corresponding inequalities. 

    For $\LEAPS$, the situation is a bit more subtle as, at least in principle, there could be product states that are only product because of a very small contribution of the high energy space of $H'$. Nevertheless, the proof still proceeds largely along the same lines as $\HELES$. We map our $\LEAPS$ instance $(H, \El, b, \dy, \dn)$ to $(H', \El', b', \dy', \dn')$ where $H', \El', b'$ are as above. For the $\dy, \dn$ parameters we will set $\dy' = \dy = O(n^{-3})$, $\dn = \Omega(1)$ and $\dn' = \Theta(n^{-2})$. Note that the $\QMAt$-hardness proof of $\LEAPS$ indeed allows these choices of $\dy,\dn$.

    For completeness it suffices to note that $V\ket{\phi}_{AB}$ is at most as from product as $\ket{\phi}_{AB}$ since $V$ is a local isometry.

    For soundness, we need to show that if $\ket{\phi}_{AB}$ is far from product, then so is $\ket{\psi'}_{A'B'}$. Towards a contradiction, assume there exist $\ket{\theta}_{A'}, \ket{\theta}_{B'}$ such that $\|\psi'_{A'B'} - \theta_{A'} \otimes \theta_{B'}\|_1 \le \dn'$. It follows that $\|\phi'_{A'B'} - \theta_{A'} \otimes \theta_{B'}\|_1 \le \dn' + 2 \sqrt{\frac{b}{\Delta}} \eqcolon D$. By the Fannes Inequality, we now have $S(\phi'_{A'}) \le nD - D\log D$. As $V$ is a local isometry it preserves entropy, hence $S(\phi_A) \le nD - D\log D$ as well. 

    Just like before we have $\braketb{\phi}{H} \le b$ by our choice of parameters. As we are in the NO case, this implies that $\ket{\phi}_{AB}$ is $\dn$-far from any product state. In particular, $\ket{\phi}_{AB}$ is $\dn$-far from the product states in its own Schmidt decomposition, and $S(\phi_A) \ge - \log \sqrt{1 - \frac{\dn^2}{4}}$ (similarly to \cref{eq:entropylowerboundfromdistance}). As we have $\dn = \Omega(1)$, we get our desired contradiction if $nD - D \log D = o(1)$. The proof now follows by setting $\Delta = n^4b$ so that $D = O(n^{-2})$.
    
\end{proof}

\section*{Acknowledgements.} The authors would like to thank Dorian Rudolph for bringing the class $\qqQAM$ to our attention.

\printbibliography 

\end{document}